  \providecommand\BibTeX{{%
    \normalfont B\kern-0.5em{\scshape i\kern-0.25em b}\kern-0.8em\TeX}}}
\renewcommand\footnotetextcopyrightpermission[1]{}
\begin{document}
 \fancyhead{}
\title{Estimating g-Leakage via Machine Learning} 

\author{Marco Romanelli}
\email{marco.romanelli@inria.fr}
\affiliation{%
  \institution{Inria, \'Ecole Polytechnique, IPP, Universit\`a di Siena}
  \streetaddress{1 Rue Honor\'e d'Estienne d'Orves}
  \city{Palaiseau}
  \state{France}
  \postcode{91120}
}
\author{Konstantinos Chatzikokolakis}
\email{kostas@chatzi.org}
\affiliation{%
  \institution{University of Athens}
 \city{Athens}
  \state{Greece}
  \postcode{10679}
}
\author{Catuscia Palamidessi}
\email{catuscia@lix.polytechnique.fr}
\affiliation{%
  \institution{Inria, \'Ecole Polytechnique, IPP}
   \streetaddress{1 Rue Honor\'e d'Estienne d'Orves}
  \city{Palaiseau}
  \state{France}
  \postcode{91120}
}
\author{Pablo Piantanida}
\email{pablo.piantanida@centralesupelec.fr }
\affiliation{%
  \institution{CentraleSupelec, CNRS, Universit\'e Paris Saclay}
  \streetaddress{3 Rue Joliot Curie}
  \city{Gif-sur-Yvette}
  \state{France}
 \postcode{91190}
}

\renewcommand{\shortauthors}{M. Romanelli, et al.}


\begin{abstract}
\rev{This paper considers the problem of estimating the information leakage of a system in the black-box scenario, i.e. when the system's internals are unknown to the learner, or too complicated to analyze, and the only available information are pairs of input-output data samples, obtained by submitting queries to the system or provided by a third party. 
The \emph{frequentist} approach relies on counting 
the frequencies to estimate the input-output conditional probabilities, 
however this method is not accurate when the domain of possible outputs is large. To overcome this difficulty, 
the estimation of the Bayes error of the ideal classifier was recently investigated using Machine Learning (ML) models, and it has been shown to be more accurate thanks to the ability of those models to  learn the input-output correspondence. However, the Bayes vulnerability is only suitable to describe \emph{one-try} attacks.  A more general and flexible measure of leakage is  the \emph{\gv}, which encompasses several 
different types of adversaries, with different goals and capabilities. 
We propose a novel approach to perform black-box  estimation of the \gv\ using ML which does not require to estimate the conditional probabilities and is suitable for a large class of  ML algorithms. 
First, we formally show the learnability for all data distributions. Then, we evaluate the performance via various experiments using k-Nearest Neighbors and Neural Networks. Our approach outperform the frequentist one when the observables domain is large.}
\end{abstract}
%

\begin{CCSXML}
<ccs2012>
   <concept>
       <concept_id>10002978.10002986.10002989</concept_id>
       <concept_desc>Security and privacy~Formal security models</concept_desc>
       <concept_significance>500</concept_significance>
       </concept>
   <concept>
       <concept_id>10002978.10002986</concept_id>
       <concept_desc>Security and privacy~Formal methods and theory of security</concept_desc>
       <concept_significance>500</concept_significance>
       </concept>
   <concept>
       <concept_id>10002978.10003006.10011608</concept_id>
       <concept_desc>Security and privacy~Information flow control</concept_desc>
       <concept_significance>500</concept_significance>
       </concept>
   <concept>
       <concept_id>10010147.10010257.10010293.10010294</concept_id>
       <concept_desc>Computing methodologies~Neural networks</concept_desc>
       <concept_significance>500</concept_significance>
       </concept>
   <concept>
       <concept_id>10010147.10010257</concept_id>
       <concept_desc>Computing methodologies~Machine learning</concept_desc>
       <concept_significance>500</concept_significance>
       </concept>
 </ccs2012>
\end{CCSXML}

\ccsdesc[500]{Security and privacy~Formal security models}
\ccsdesc[500]{Security and privacy~Formal methods and theory of security}
\ccsdesc[500]{Security and privacy~Information flow control}
\ccsdesc[500]{Computing methodologies~Neural networks}
\ccsdesc[500]{Computing methodologies~Machine learning}
\keywords{ \gv\ estimation; machine learning; neural networks} 

\maketitle


\section{Introduction}
The information leakage of a system is a fundamental concern of computer security, and  measuring the amount of  sensitive information that an adversary can obtain by   observing the outputs of a given system 
is of the utmost importance to understand whether such leakage can be tolerated or must be considered a major security flaw. 
Much research effort has been dedicated to studying and proposing solutions to this problem, see for instance  \cite{Clark:01:ENTCS,Kopf:07:CCS,Chatzikokolakis:08:IC,Boreale:09:IC,Smith:09:FOSSACS,Alvim:12:CSF,Alvim:14:CSF,Chatzikokolakis:19:CSF}. So far, this area of research, known as quantitative information flow (QIF)\FIX{,}
has mainly focused on the so-called white-box scenario\rev{, i.e. assuming that the channel of the system is known, or can  be computed by analyzing the system's internals. }
This channel consists of the conditional probabilities of the outputs (observables) given the inputs (secrets).  

The white-box assumption, however, is not always realistic: \FIX{sometimes} the system is unknown, or anyway it is too complex, so that an analytic computation becomes hard if not impossible to be performed. Therefore, it is important to consider also black-box approaches where we only assume the availability of a finite set of  input-output pairs generated by the system. \rev{A further specification of a black box model is the ``degree of inaccessibility'', namely whether or not we are allowed to interact with the system to obtain these  pairs, or they are just provided by a third party. Both scenarios are realistic, and in our paper we consider the two cases.}

The estimation of the internal probabilities of a system's channel have been investigated in~\cite{Chothia:13:CSF} and~\cite{Chothia:14:ESORICS} via a frequentist paradigm, i.e., relying on the computation of the frequencies of the outputs  given some inputs. However, this  approach does not scale to applications for which the output space is very large since a prohibitively large number of samples would be necessary to achieve good results and fails on continuous alphabets unless some strong assumption on the distributions are made. In order to overcome this limitation, the authors \FIX{of}~\cite{Cherubin:19:SP} exploited the fact that Machine Learning (ML) \FixCat{algorithms}  provide a  better scalability to black-box measurements. \FixCat{Intuitively, the advantage of the ML approach over the frequentist one is its generalization power: while 
the frequentist method can only draw conclusions based on counts on the available samples,   ML  is able to extrapolate from the samples and provide better prediction (generalization) for the rest of the universe.}

In particular, ~\cite{Cherubin:19:SP} proposed to use k-Nearest Neighbors (k-NN) \FixCat{to measure  the basic QIF metrics, the Bayes vulnerability~\cite{Smith:09:FOSSACS}. 
This is the expected probability of success of an adversary that has exactly one attempt at his disposal (one-try), and tries to maximize the chance of guessing the right  secret.} 
The  Bayes vulnerability corresponds to the converse of the error of the ideal Bayes classifier that, given any \FixCat{sample (observable)}, tries to predict its corresponding \FixCat{class (secret)}. Hence the idea is  to build a model that approximates such classifier, and estimate its expected error.  
The \FixCat{main} takeaway is that any ML rule which is \emph{universally consistent} (i.e., approximates the ideal Bayes classifier) has a guarantee on the accuracy of the estimation, i.e., the  error in the estimation of the leakage tends to vanish as the number of training samples grows large.

The method of~\cite{Cherubin:19:SP}, however, is limited to the Bayes vulnerability, which models only one particular kind of adversary. 
As argued in~\cite{Alvim:12:CSF}\FIX{,} there are many other realistic ones. 
For instance, adversaries  whose aim is to guess only a part of the secret, or a property of the secret,
or    that have multiple tries at their disposal. 
To represent a larger class of attacks, ~\cite{Alvim:12:CSF} introduced the so-called  $g$-\emph{vulnerability}.
This metric is very general, and in particular it encompasses the Bayes vulnerability.

\begin{table}[htb!]
\centering
	\begin{tabular}{|l|l|}
		\hline
		\multicolumn{1}{| c |}{\textbf{Symbol}} & \multicolumn{1}{ c |}{\textbf{Description}}  \\ 
		\hline
		    \multicolumn{1}{| c |}{$x \in \Xcal$} & \multicolumn{1}{ l |}{a secret}  \\ \multicolumn{1}{| c |}{$w \in \Wcal$} & \multicolumn{1}{ l |}{a guess}  \\
		    \multicolumn{1}{| c |}{$y \in \Ycal$} & \multicolumn{1}{ l |}{an observable output by the system}  \\ 
		    \multicolumn{1}{| c |}{$X$} & \multicolumn{1}{ l |}{\rev{random var. for secrets taking values $\mathit{x \in \Xcal}$}}  \\ 
			\multicolumn{1}{| c |}{$W$} & \multicolumn{1}{ l |}{\rev{random var. for guesses taking values $\mathit{w \in \Wcal}$}}  \\ 
			\multicolumn{1}{| c |}{$Y$} & \multicolumn{1}{ l |}{\rev{random var. for observables taking values $\mathit{y \in \Ycal}$}}\\
			\multicolumn{1}{| c |}{$|\Scal|$} & \multicolumn{1}{ l |}{size of a set $\Scal$ }  \\ 
			\multicolumn{1}{| c |}{\FIX{$\mathcal{P}(\mathcal{S})$}} & \multicolumn{1}{ l |}{\FIX{Distribution over a set of symbols $\mathcal{S}$}}  \\
			\multicolumn{1}{| c |}{$\Hcal$} & \multicolumn{1}{ l |}{class of learning functions $f$}  \\ 
			\multicolumn{1}{| c |}{$\pi$, $P_X$} & \multicolumn{1}{ l |}{prior distribution over the secret space}  \\ 
			\multicolumn{1}{| c |}{$\hat\pi$, $\widehat{P}_X$} & \multicolumn{1}{ l |}{empirical prior distribution over the secret space}  \\ 
			\multicolumn{1}{| c |}{$C$} & \multicolumn{1}{  l|}{Channel matrix}  \\
			\multicolumn{1}{| c |}{$\Joint{\pi}{C}$} & \multicolumn{1}{ l |}{joint distribution from prior $\pi$ and channel $C$}\\
			\multicolumn{1}{| c |}{\FIX{${P_{XY}}$}} & \multicolumn{1}{ l |}{joint probability distribution}\\
			\multicolumn{1}{| c |}{${\widehat{P}_{XY}}$} & \multicolumn{1}{ l |}{empirical joint probability distribution}\\
			\multicolumn{1}{| c |}{$P_{Y\mid X}$} & \multicolumn{1}{  l|}{conditional probability of $Y$ given $X$}  \\
			\multicolumn{1}{| c |}{$\widehat{P}_{Y\mid X}$} & \multicolumn{1}{  l|}{empirical conditional probabilities }  \\
			\multicolumn{1}{| c |}{$\mathbb{P}$} & \multicolumn{1}{ l |}{probability measure}\\	
			\multicolumn{1}{| c |}{$\FIX{\mathbb{\E[\cdot]}}$} & \multicolumn{1}{ l |}{\FIX{expected value}}\\										
			\multicolumn{1}{| c |}{$\FIX{g(w,x)}$} & \multicolumn{1}{ l |}{\FIX{\rev{gain function: guess $w$ and secret $x$ as inputs}}}\\
			\multicolumn{1}{| c |}{$G$} & \multicolumn{1}{ l |}{\rev{gain matrix of size $|\Wcal|\times|\Xcal|$ for a specific $g$}}\\
			\multicolumn{1}{| c |}{$V_g$} & \multicolumn{1}{ l |}{\gv}\\
			\multicolumn{1}{| c |}{$V(f)$} & \multicolumn{1}{ l |}{\gv\ functional}\\
			\multicolumn{1}{| c |}{$\widehat{V}_n(f)$} & \multicolumn{1}{ l |}{\rev{empirical $g$-vuln. functional evaluated on $n$ samples}}\\
		\hline
	\end{tabular}
	\vspace{0.2cm}
	\caption{Table of symbols.}
	\label{symbol_table}
\end{table}

In this paper, we propose an approach to the black-box estimation of  $g$-vulnerability via ML. 
The idea is to reduce the problem to that of approximating the Bayes classifier, so that \emph{any universally consistent ML algorithm can be used for the purpose}. 
This reduction essentially takes into account the impact of the gain function in the generation of the training data, and we propose two methods to obtain this effect, 
which we call  \emph{channel pre-processing} and  \emph{data pre-processing}, respectively. 
We evaluate our approach via  experiments on various channels and  gain functions. \FIX{In order to show the generality of our approach, 
we use two different ML algorithms, namely k-NN and Artificial Neural Networks (ANN), and we compare their performances.} The experimental results show that our approach provides accurate  estimations, and that \FIX{it} outperforms by far the frequentist approach when the observables domain is large. 
\subsection{Our contribution}
\begin{itemize}
    \item We propose a novel approach to the black-box estimation of  $g$-vulnerability based on ML. To the best of our knowledge, this is the first time that  a method to estimate $g$-vulnerability in a black-box fashion is introduced.
    \item We provide statistical guarantees showing the learnability of the \gv\ for all distributions and 
we derive  distribution-free bounds on the accuracy of its estimation.
    \item We validate the performance of our method via several experiments using k-NN and ANN models. 
    The code  is available at the URL \url{https://github.com/LEAVESrepo/leaves}.
\end{itemize}
\subsection{Related work}
One important aspect to keep in mind when measuring leakage is the kind of attack that we want to model. In~\cite{Kopf:07:CCS}, K\"opf and Basin identified various kinds of adversaries and showed that they can be captured by known entropy measures. In particular it focussed on the adversaries corresponding to Shannon and Guessing entropy. 
In~\cite{Smith:09:FOSSACS} Smith proposed another notion, the R\'enyi min-entropy,  to measure the system's leakage when the attacker has only one try at its disposal and attempts to make its  best guess. The R\'enyi min-entropy is the logarithm of the Bayes vulnerability, which is the  expected probability of the adversary to guess the secret correctly. 
The Bayes vulnerability is the converse of the Bayes error, which was already proposed as a measure of leakage in \cite{Chatzikokolakis:08:JCS}. 

Alvim et al.~\cite{Alvim:12:CSF} generalized the notion of Bayes vulnerability to that  of \gv, by introducing a parameter (the gain function $g$) that describes the adversary's payoff. 
The \gv\  is  the expected gain of the adversary in a one-try attack. 
  
The idea of estimating   the \gv\ in the  using ML  techniques  is inspired by the seminal work~\cite{Cherubin:19:SP}, which used k-NN algorithms to estimate the Bayes vulnerability,  a paradigm shift with respect to the previous  frequentist approaches~\cite{Chatzikokolakis:10:TACAS, Chothia:13:CSF, Chothia:14:ESORICS}. Notably,~\cite{Cherubin:19:SP} showed that  universally consistent learning rules allow  to achieve much more precise estimations than the frequentist approach when considering large or even continuous output domains which would be intractable otherwise. 
However,  \cite{Cherubin:19:SP}  was limited to the Bayes 
adversary. In contrast, our approach handles any adversary that can be modeled by a
gain function $g$. Another novelty w.r.t. \cite{Cherubin:19:SP} is that we consider also  ANN algorithms, which in various experiments appear to perform better than the k-NN ones. 
 
Bordenabe and Smith~\cite{Bordenabe:16:CSF} investigated the indirect leakage induced by a channel (i.e., leakage on sensitive information not in the domain of the channel),  and proved a fundamental equivalence between Dalenius min-entropy leakage under arbitrary correlations and g-leakage under arbitrary gain functions. This result is similar to our \autoref{thm:channel-pre-processing}, and  it opens the way to the possible extension of our approach to this more general leakage scenario.


\section{Preliminaries}\label{sec:preliminaries}
In this section, we \FIX{recall} some useful notions from QIF and ML.

\subsection{\FIX{Quantitative information flow}}

\label{qif_notions}
Let $\Xcal$ be a set of secrets and \FIX{$\Ycal$ a set of observations}.
The adversary's initial knowledge about the secret\FIX{s} is modeled by a prior
distribution 
$\mathcal{P}(\mathcal{X})$
\FIX{(namely $P_X$)}.
A system is modeled as a probabilistic \emph{channel} from $\Xcal$ to $\Ycal$,
described by a stochastic matrix $C$, whose elements $C_{xy}$ give the
probability to observe $y\in\Ycal$ when the input is $x\in\Xcal$ (namely $P_{Y|X}$).
Running $C$ with input $\pi$ induces a joint distribution on $\Xcal\times\Ycal$ denoted
by $\Joint{\pi}{C}$.

In the $g$-leakage framework \cite{Alvim:12:CSF}
an adversary is described by a set $\Wcal$ of
\emph{guesses} (or \emph{actions}) that \FIX{it} can make about the secret,
and by a \emph{gain function} $g(w,x)$ expressing the gain of selecting the guess
$w$ when the real secret is $x$.
The prior \gv{} is  the \emph{expected gain} of an optimal guess,
given a prior distribution on secrets:
\begin{equation}
	\label{prior_g_v}
	\mathit{V_g (\pi) \defsym \max_{w \in \mathcal{W}} \sum_{x\in \mathcal{X}}\pi_x \cdot g(w,x)}\, .
\end{equation}
In the posterior case, the adversary observes the output of the system which allows
to improve \FIX{its} guess. \FIX{Its} expected gain is given by
the posterior \gv, according to
\begin{equation}
	\label{posterior_g_v}
	\mathit{V_g (\pi, C) \defsym \sum_{y\in \Ycal}\max_{w \in \Wcal} \sum_{x\in \Xcal} \pi_x \cdot C_{xy} \cdot g(w,x)\, .}
\end{equation}
Finally, the multiplicative\footnote{Originally  the multiplicative version of $g$-leakage  was defined as the $\log$ of the definition given here. In recent literature  the $\log$ is not used anymore. Anyway, the two definitions are equivalent for system  comparison, since  $\log$ is  monotonic.} and additive $g$-leakage quantify how much \FIX{a specific channel $C$} \emph{increases} the
vulnerability of the system:
\begin{equation}
	\label{g_leak}
	\mathit{\mathcal{L}^{\rm M}_g (\pi, C) \defsym \frac{V_g (\pi, C)}{V_g (\pi)}~, \quad  \mathcal{L}^{\rm A}_g (\pi, C) \defsym  V_g (\pi, C) - V_g (\pi)}\, .
\end{equation}
%

The choice of the gain function $g$ allows to model a variety of different adversarial
scenarios. The simplest case is the \emph{identity} gain function, given by $\Wcal = \Xcal$,
$\GId(w,x)=1$ iff $x=w$ and $0$ otherwise. This gain function models an adversary
that tries to guess the secret exactly in one try; $V_{\GId}$ is the \emph{Bayes-vulnerability}, which corresponds 
to the complement of the Bayes error (cfr.~\cite{Alvim:12:CSF}).

However, the interest in $g$-vulnerability lies in the fact that many more adversarial
scenarios can be captured by a proper choice of $g$. For instance, taking
$\Wcal = \Xcal^k$ with $g(w,x)=1$ iff $x\in w$ and $0$ otherwise, models an adversary
that tries to guess the secret correctly in $k$ tries. Moreover,
guessing the secret \emph{approximately} can be easily expressed by constructing $g$
from a metric $d$ on $\Xcal$; this is a standard approach in the area of~\emph{location
privacy}~\cite{Shokri:11:SP,Shokri:14:CCS} where $g(w,x)$ is taken
to be inversely proportional to the Euclidean distance between
$w$ and $x$. Several other gain functions are discussed in
\cite{Alvim:12:CSF}, while \cite{Alvim:16:CSF} shows that \emph{any} vulnerability
function satisfying basic axioms can be expressed as $V_g$ for a properly constructed $g$.

The main focus of this paper is estimating the posterior $g$-vulnera\-bility of the system from such samples.
Note that, given $V_g(\pi,C)$, estimating $\mathit{\mathcal{L}^{\rm M}_g(\pi, C)}$ and $\mathit{\mathcal{L}^{\rm A}_g(\pi, C)}$ is straightforward,
since $V_g(\pi)$ only depends on the prior (not on the system) \FIX{and} it can be either computed
analytically or estimated from the samples.
%



\subsection{Artificial Neural Networks}
\FIX{We provide a short review of the aspects of ANN that are relevant for this work. For further details, 
we refer to} \cite{Bishop:07:ISS, Goodfellow:16:ACML, Hastie:01:SV}. Neural networks 
are usually modeled as directed graphs with weights on the connections and nodes that forward information through ``activation functions'', often introducing non-linearity (such as sigmoids or soft-max). In particular,  we consider an instance of learning known as \emph{supervised learning}, where input samples are provided to the network model together with target labels (supervision). From the provided data and by means of iterative updates of the connection weights, the network learns how the data and respective labels are distributed. The training procedure, known as back-propagation, is an optimization process aimed at minimizing a loss function that quantifies the quality of the network's prediction w.r.t. the  data. 

\FIX{Classification problems are a typical example of tasks for which supervised learning works well. Samples are provided together with target labels which represent the classes they belong to. A model can be trained using these samples and, later on, it can be used to predict the class of new samples.}



\rev{According to the No Free Lunch theorem (NFL)~\cite{Wolpert:96:NC}, in general, it cannot be said that ANN are better than other ML methods. However, it is well known that the NFL can be broken by additional information on the data or the particular problem we want to tackle, and, nowadays, for most applications and available data, especially in multidimensional domains, ANN models outperform other methods therefore representing the state-of-the-art.}
\FIX{\subsection{k-Nearest Neighbors}
The k-NN algorithm is one of the simplest algorithms used to classify a new sample given a training set of samples labelled as belonging to specific classes. This algorithm assumes that the space of the features is equipped with a notion of distance. The basic idea is the following: every time we need to classify a new sample, we find the $k$ samples whose features are closest to those of the new one (nearest neighbors). Once the $k$ nearest neighbors are selected, a majority vote over their class labels is performed to decide which class should be assigned to the new sample. For further details, as well as for an extensive analysis of the topic, we refer to the chapters about k-NN in~\cite{Shalev-Shwartz:14:UML,Hastie:01:SV}}.

\section{Learning \gv: Statistical Bounds}
\label{learning_gv}
\rev{This section introduces the mathematical problem of learning \gv, characterizing universal learnability in the present framework. We derive distribution-free bounds on the accuracy of the estimation, implying the estimator's statistical consistence. }

\subsection{Main definitions}
\FIX{We consider the   problem of estimating the \gv\ from samples via ML models, and we show that the analysis of this estimation} can be conducted in the general statistical framework of maximizing an expected functional using observed samples.  \FIX{The idea} can be described using three components:
\begin{itemize}
\item A generator of random secrets $x\in \Xcal$ with $| \Xcal|< \infty$, drawn independently from a fixed but unknown distribution  $P_X(x)$;
\item a channel that returns an observable $y\in \Ycal$  with $| \Ycal|< \infty$ for  every input \FIX{$x$}, according to a conditional distribution $P_{Y|X}(y|x)$, also fixed and unknown; 
\item a learning machine capable of implementing a set of rules $f\in \Hcal$, where $\Hcal$ denotes the class of functions ${f:\Ycal\rightarrow\Wcal}$ and $\Wcal$ is the finite set of guesses. 
\end{itemize}
Moreover let us note that
\begin{equation}
g: \mathcal{W} \times \Xcal  \rightarrow  [a,b]
\end{equation} 
where $a$ and $b$ are finite real values, and $\Xcal$ and $\Wcal$ are finite sets.
The problem of learning the \gv\ is that of choosing the function ${f:\Ycal\rightarrow\Wcal}$ which maximizes  the functional $V(f)$, representing the expected gain, defined as:
\begin{equation}
V(f) \defsym  \sum_{(x,y) \in \Xcal\times \Ycal} g\big( f(y), x \big) P_{XY} (x,y).\label{eq-g-leakege}
\end{equation}
Note that  $f(y)$ corresponds to the ``guess'' $w$, for the given $y$,  in~\eqref{posterior_g_v}. 
The maximum of $V(f)$ is the \gv, namely: 
\begin{equation}
V_g \defsym  \max\limits_{f \in\Hcal} V(f).\label{eq-max}
\end{equation}

\subsection{Principle of the empirical \gv\ maximization}\label{empirical_max}
Since we are in the black box scenario,
the joint probability distribution  $P_{XY} \equiv \Joint{\pi}{C}$ 
is unknown. We assume, however, the availability 
of $m$   independent and identically distributed (i.i.d.) \FIX{samples} drawn according to \FIX{$P_{XY}$} that can be used as a training set \trainingsetsub\   to solve the maximization of $f$ over $\mathcal{H}$ and additionally $n$ i.i.d. samples are available to be used as a validation\footnote{We call \validationsetsub\ validation set rather than test set,  since we use it to estimate the \gv\ with the learned $f^\star_m$, rather than to measure  the quality of  $f^\star_m$.} set \validationsetsub\ to estimate the average in \eqref{eq-g-leakege}.  Let us denote these sets as:
\trainingsetsub\ 
$
 \defsym \big\{ (x_1,y_1),\dots, (x_m,y_m) \big\} 
$ 
and 
\validationsetsub\ 
$
 \defsym \big\{ (x_{m+1},y_{m+1}),\dots, (x_{m+n},y_{m+n}) \big\}
$, respectively. 

In order to maximize the \gv\ functional \eqref{eq-g-leakege}  for an unknown probability measure   $ P_{XY}$,  the following principle is usually applied. 
The expected \gv\ functional  \FIX{$V(f)$}   is replaced by the \emph{empirical \gv\ functional}:
\begin{equation}
\widehat{V}_n (f) \defsym  \frac{1}{n}\sum_{(x,y) \in \mathcal{T}_n} g\big( f(y), x \big), \label{eq-empirical-g-leakege}
\end{equation}
which is evaluated on \validationsetsub\ rather than $P_{XY}$. This estimator is clearly unbiased in the sense that 
$
\E\big[ \widehat{V}_n (f)\big] = V(f). 
$


Let $f^\star_m$ denote the empirical optimal rule given by 
\FIX{
\begin{equation}\label{eq-solving-emr}
f^\star_m\defsym  \argmax\limits_{f\in\Hcal} \widehat{V}_m (f),\,\, 
\widehat{V}_m (f) \defsym \frac1m \sum_{(x,y) \in \mathcal{D}_m} g\big( f(y), x \big),
\end{equation}
which is evaluated on \trainingsetsub\ rather than $P_{XY}$.  The function $f^\star_m$ is the optimizer according to \trainingsetsub, namely the best way among the functions $f:\Ycal\rightarrow\Wcal$  to approximate $V_g$ by  maximizing  $\widehat{V}_m (f)$ over the class of functions $\mathcal{H}$. This principle is known in statistics  as the Empirical Risk Maximization (ERM).}

Intuitively, we would like $f^\star_m$  to give a good approximation of the \gv,  in the sense that its expected gain
\begin{equation}
V(f^\star_m) \; =  \sum_{(x,y) \in \Xcal\times \Ycal} g\big( f^\star_m(y), x \big) P_{XY} (x,y) 
\end{equation}
should be close to $V_g$. Note that the difference
\begin{equation}\label{eq-missing}
V_g - V(f^\star_m)=\max\limits_{f\in\Hcal} V(f)  - V(f^\star_m)   
\end{equation}
is always non negative and represents the gap by selecting a possible suboptimal function $f^\star_m$. Unfortunately, we are not able to compute $V(f^\star_m)$ either, since $P_{XY}$ is unknown. \rev{In its place, we have to use its \mrev{estimation} $\widehat{V}_n(f^\star_m)$ 
Hence, the real \mrev{estimation} error is $|V_g - \widehat{V}_n(f^\star_m)|$. Note that:
\begin{equation}\label{eq:new}
| V_g - \widehat{V}_n(f^\star_m) | \leq (V_g - V(f^\star_m) ) + | \widehat{V}_n(f^\star_m) - V(f^\star_m)|,
\end{equation}
\mrev{where the first term in the right end side represents the error induced by using the trained model $f_{m}^\star$ and the second represents the error induced by estimating the \gv\ over the $n$ samples in the validation set.}
}

 By using basics principles from statistical learning theory, we study two main questions: 
\begin{itemize}
\item When does the estimator $\widehat{V}_n (f^\star_m)$ work? What are the conditions for its statistical consistency? 
\item How well does $\widehat{V}_n (f^\star_m)$ approximate $V_g$? In other words, how fast does the sequence of largest empirical g-leakage values  converge to the largest g-leakage function? This is related to the so called rate of generalization of a learning machine that implements the ERM principle. 
\end{itemize}

\subsection{Bounds on the estimation accuracy}

\rev{In the following we  use  the  notation $\sigma^2_f = \mathit{Var}(g(f(Y),X))$, where $\mathit{Var}(Z)$ stands for the variance of the random variable $Z$. 
Namely,  $\sigma^2_f$ is   the variance of the gain for a given function $f$. 
We also use $\mathbb{P}$ to represent  the probability induced by sampling the training and validation sets over the distribution $P_{XY}$.}

\rev{Next proposition,  proved in Appendix \ref{AppC}, states that we can probabilistically delimit the two parts of the bound in~\eqref{eq:new}  in terms of   the sizes $m$ and $n$ of the training and validation sets.}

\begin{restatable}[Uniform deviations]{proposition}{mainpropositionrestatable}\label{mainproposition}
\rev{For all $\varepsilon>0$, 
\begin{align}
&\mathbb{P}\left(  \big |  \widehat{V}_n(f^\star_m) - V(f^\star_m)\big |  \geq \varepsilon  \right) \leq2 \E    \, \exp  \left(-\frac{n\,\varepsilon^2}{2\,\sigma^2_{f^\star_m}+\nicefrac{2\,\left(b-a\right)\varepsilon}{3}}\right),\label{eq-Pro1}
\end{align}
where the expectation is taken w.r.t. the random training set, and
\begin{align}
&\mathbb{P}\left(   V_g - {V}(f^\star_m)  \geq \varepsilon  \right) \leq 2\sum_{f\in\mathcal{H}}\exp\left(-\frac{m\,\varepsilon^2}{8\sigma_f^2+\nicefrac{4\left(b-a\right) \varepsilon}{3}}\right).
\label{eq-Pro2}
\end{align}
}
\end{restatable}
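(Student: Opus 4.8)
The plan is to obtain both inequalities from one tool, Bernstein's concentration inequality for sums of i.i.d.\ bounded random variables, combined in each case with a different structural observation: for~\eqref{eq-Pro1} the fact that the validation set $\mathcal{T}_n$ is drawn independently of the training set $\mathcal{D}_m$ (so $f^\star_m$ is ``frozen'' once we condition on $\mathcal{D}_m$), and for~\eqref{eq-Pro2} the defining optimality of $f^\star_m$ for $\widehat{V}_m$ together with a union bound over the class $\Hcal$. Note first that since $\Ycal$ and $\Wcal$ are finite, $|\Hcal| = |\Wcal|^{|\Ycal|} < \infty$, so the sum over $f\in\Hcal$ in~\eqref{eq-Pro2} is finite and well defined, and every $g(f(Y),X)$ is a bounded random variable taking values in $[a,b]$.

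For~\eqref{eq-Pro1} I would condition on $\mathcal{D}_m$. Once $\mathcal{D}_m$ is fixed, $f^\star_m$ is a fixed function, and $\widehat{V}_n(f^\star_m) = \frac1n\sum_{i=m+1}^{m+n} Z_i$ with $Z_i \defsym g\big(f^\star_m(Y_i),X_i\big)$; because $\mathcal{T}_n$ is independent of $\mathcal{D}_m$, the $Z_i$ are, conditionally on $\mathcal{D}_m$, i.i.d.\ with mean $V(f^\star_m)$, variance $\sigma^2_{f^\star_m}$, and values in $[a,b]$ (hence $|Z_i - \E Z_i|\le b-a$ a.s.). Bernstein's inequality then gives
\begin{equation*}
\mathbb{P}\Big( \big|\widehat{V}_n(f^\star_m) - V(f^\star_m)\big| \ge \varepsilon \;\Big|\; \mathcal{D}_m \Big) \;\le\; 2\exp\!\left(-\frac{n\,\varepsilon^2}{2\,\sigma^2_{f^\star_m} + \nicefrac{2(b-a)\varepsilon}{3}}\right),
\end{equation*}
and taking expectation over $\mathcal{D}_m$ (the right-hand side being random only through $\sigma^2_{f^\star_m}$) and using the tower property yields~\eqref{eq-Pro1}.

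For~\eqref{eq-Pro2}, write $f^\star \defsym \argmax_{f\in\Hcal} V(f)$, so $V_g = V(f^\star)$, and use the standard ERM decomposition
\begin{align*}
V_g - V(f^\star_m) ={}& \big(V(f^\star) - \widehat{V}_m(f^\star)\big) + \big(\widehat{V}_m(f^\star) - \widehat{V}_m(f^\star_m)\big) \\
&{}+ \big(\widehat{V}_m(f^\star_m) - V(f^\star_m)\big).
\end{align*}
The middle bracket is $\le 0$ since $f^\star_m$ maximizes $\widehat{V}_m(\cdot)$ over $\Hcal$, whence $V_g - V(f^\star_m) \le 2\sup_{f\in\Hcal}\big|\widehat{V}_m(f) - V(f)\big|$. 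Therefore $\{V_g - V(f^\star_m)\ge\varepsilon\}\subseteq\bigcup_{f\in\Hcal}\big\{|\widehat{V}_m(f) - V(f)|\ge\varepsilon/2\big\}$, and a union bound followed by Bernstein's inequality applied to each fixed $f$ (now $\widehat{V}_m(f)$ is an honest empirical mean of $m$ i.i.d.\ terms with variance $\sigma_f^2$ and range $[a,b]$, evaluated at threshold $\varepsilon/2$) gives
\begin{equation*}
\mathbb{P}\big( V_g - V(f^\star_m) \ge \varepsilon \big) \le \sum_{f\in\Hcal} 2\exp\!\left(-\frac{m(\varepsilon/2)^2}{2\sigma_f^2 + \nicefrac{2(b-a)(\varepsilon/2)}{3}}\right),
\end{equation*}
and clearing the factors of $4$ in numerator and denominator turns this into~\eqref{eq-Pro2}.

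The algebra in simplifying the exponents is routine; the one conceptual point to state carefully is the data-dependence of $f^\star_m$. In~\eqref{eq-Pro1} it is handled by conditioning on $\mathcal{D}_m$, which is legitimate \emph{precisely because} the validation set is an independent sample — had the same data been reused for training and evaluation, $f^\star_m$ and the evaluation points would be dependent and a uniform control over $\Hcal$ would again be needed. In~\eqref{eq-Pro2} the data-dependence is absorbed by passing to $\sup_{f\in\Hcal}$ before applying concentration, which is where finiteness of $\Hcal$ is used. A minor remark: each $\sigma_f^2$ could be replaced by the crude bound $(b-a)^2/4$ to obtain a variance-free, Hoeffding-type statement, but retaining the variances gives the sharper Bernstein form asserted in the proposition.
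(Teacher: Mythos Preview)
Your proof is correct and follows essentially the same route as the paper: for~\eqref{eq-Pro1} you condition on the training set and apply Bernstein to the (now fixed) $f^\star_m$, then take expectation, exactly as the paper does; for~\eqref{eq-Pro2} your ERM decomposition yielding $V_g - V(f^\star_m)\le 2\sup_{f\in\Hcal}|\widehat V_m(f)-V(f)|$ is precisely the content of the paper's Lemma~\ref{lemmaworstcase}, after which both arguments proceed by union bound over $\Hcal$ and Bernstein at threshold $\varepsilon/2$. The only cosmetic difference is that the paper states the $2\sup$ bound as a separate lemma with a slightly different chain of inequalities, whereas you derive it inline via the cleaner three-term telescoping sum.
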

Inequality \eqref{eq-Pro1} shows that the estimation error due to the use of a validation set  in $\widehat{V}_n(f^\star_m) $ instead of the true expected gain ${V}(f^\star_m)$  vanishes with the number of validation samples. On the other hand, expression \eqref{eq-Pro2}  implies `learnability' of an optimal $f$, i.e., the suboptimality of $f^\star_m$ learned using the training set $\widehat{V}_m(f^\star_m) $ 
vanishes with the number of training samples. 

\rev{On the other hand the bound in~\cref{eq-Pro2}  depends on the underlying distribution and the structural properties of the class $\mathcal{H}$ through the variance. However, it does not explicitly depend on the optimization algorithm (e.g., stochastic gradient descent) used to learn the function $f_m^\star$ from the training set, which just comes from assuming the worst-case upper bound over all optimization procedures. The selection of a ``good” subset of candidate functions, having a high probability of containing an almost optimal classifier, is a very active area of research in ML~\cite{Chizat:20:COLT}, and hopefully there will be some result available soon, which together with our results will provide a practical method to estimate the bound on the errors. In~\cref{mod_sel} we
discuss heuristics to choose a ``good'' model, together with a new set of experiments showing the impact of different architectures.}

\begin{restatable}[]{theorem}{thmestimateconvergence}\label{thm_estimate_convergence}
The averaged estimation error of the \gv\ can be bounded as follows:
\begin{equation}
\E \big|V_g-  \widehat{V}_n(f^\star_m)  \big| \leq     V_g-  \E \big[ {V}(f^\star_m)  \big]   + \E \big|{V}(f^\star_m)-  \widehat{V}_n(f^\star_m)  \big| , 
\end{equation}
where the expectations are understood over all possible training and validation sets drawn according to $P_{XY}$. Furthermore, \linebreak
\mrev{let $\sigma^2=max_{f\in \Hcal}\text{Var}\big(g(f(Y),X)\big)$, then
:
\begin{align}
	\E \big|{V}(f^\star_m)-  \widehat{V}_n(f^\star_m)  \big|& \leq  \frac{4\eta}{n}\exp\left(-\frac{n\sigma^2}{2\eta}\right)
\nonumber\\ 
&+\sqrt{\frac{2\sigma^2\eta\pi}{n}}\erf\left(\nicefrac{\sigma^2}{\sqrt{\displaystyle\frac{2\sigma^2\eta\pi}{n}}}\right)
,\label{frst_bound}
\end{align}
where $\eta=(1+\nicefrac{(b-a)}{3})$ for $\sigma^2\leq\varepsilon$, and, otherwise,
\begin{align}
	&V_g-  \E \big[ {V}(f^\star_m)  \big] \leq  |\mathcal{H}|\frac{8(1+\eta)}{m}\exp\left(-\frac{m\sigma^2}{4(1+\eta)}\right)
+\nonumber\\
	&|\Hcal|\sqrt\frac{4\sigma^2(1+\eta)\pi}{m}\erf\left(\nicefrac{\sigma^2}{\displaystyle\sqrt\frac{4\sigma^2(1+\eta)\pi}{m}}\right),
\label{scnd_bound}
\end{align}
with $\erf(\theta)\defsym \frac{2}{\sqrt{\pi}}\displaystyle\int_{0}^{\theta}\exp(-\mu^2)d\mu$.
}
\end{restatable}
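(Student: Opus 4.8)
The plan is to read off the first (triangle-inequality) line directly from \eqref{eq:new}, and then to turn the two tail estimates \eqref{eq-Pro1}--\eqref{eq-Pro2} into the bounds \eqref{frst_bound}--\eqref{scnd_bound} by integrating them against the layer-cake identity $\E[Z]=\int_0^\infty\mathbb{P}(Z\ge t)\,dt$, valid for every nonnegative random variable $Z$. For the first inequality, take expectations over the joint draw of the training and validation sets on both sides of \eqref{eq:new}; by linearity of expectation and the observation, already recorded in \eqref{eq-missing}, that $V_g-V(f^\star_m)\ge 0$ pointwise (so that $\E[V_g-V(f^\star_m)]=V_g-\E[V(f^\star_m)]$ is itself nonnegative), this gives exactly the stated decomposition. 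This step is immediate; the work is in bounding the two terms on the right.

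\emph{Validation term.} Apply the layer-cake identity to $Z=|V(f^\star_m)-\widehat{V}_n(f^\star_m)|\ge 0$ and insert \eqref{eq-Pro1}. Since $s\mapsto\exp\!\big(-n\varepsilon^2/(2s+\tfrac{2(b-a)\varepsilon}{3})\big)$ is nondecreasing in $s$ and $\sigma^2_{f^\star_m}\le\sigma^2=\max_{f\in\Hcal}\text{Var}\big(g(f(Y),X)\big)$ pointwise, the expectation over the training set appearing in \eqref{eq-Pro1} is dominated by replacing $\sigma^2_{f^\star_m}$ by $\sigma^2$, and by Tonelli's theorem we are left with $\int_0^\infty 2\exp\!\big(-n\varepsilon^2/(2\sigma^2+\tfrac{2(b-a)\varepsilon}{3})\big)\,d\varepsilon$. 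I would split this at $\varepsilon=\sigma^2$. On $[0,\sigma^2]$ the denominator is at most $2\sigma^2+\tfrac{2(b-a)\sigma^2}{3}=2\eta\sigma^2$ with $\eta=1+\tfrac{b-a}{3}$, so the integrand is $\le\exp(-n\varepsilon^2/(2\eta\sigma^2))$ and the substitution $\varepsilon=\sqrt{2\eta\sigma^2/n}\,u$ turns $\int_0^{\sigma^2}$ into an incomplete Gaussian integral, i.e.\ a $\sqrt{2\sigma^2\eta\pi/n}\,\erf(\cdot)$ term. On $[\sigma^2,\infty)$ the denominator is at most $2\varepsilon+\tfrac{2(b-a)\varepsilon}{3}=2\eta\varepsilon$, so the integrand is $\le\exp(-n\varepsilon/(2\eta))$, whose integral over $[\sigma^2,\infty)$ is $\tfrac{2\eta}{n}\exp(-n\sigma^2/(2\eta))$. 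Adding the two contributions (with the prefactor $2$) produces \eqref{frst_bound}; the two regimes are exactly the ``$\sigma^2\le\varepsilon$'' / ``otherwise'' split mentioned in the statement.

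\emph{Learnability term.} For $V_g-\E[V(f^\star_m)]=\E[V_g-V(f^\star_m)]$ I would repeat the recipe on \eqref{eq-Pro2}: first bound each $\sigma_f^2$ by $\sigma^2$ so that $\sum_{f\in\Hcal}$ collapses to the factor $|\Hcal|$, then apply the layer-cake identity and split $\int_0^\infty 2|\Hcal|\exp\!\big(-m\varepsilon^2/(8\sigma^2+\tfrac{4(b-a)\varepsilon}{3})\big)\,d\varepsilon$ at $\varepsilon=\sigma^2$. On $[0,\sigma^2]$ the denominator is at most $8\sigma^2+\tfrac{4(b-a)\sigma^2}{3}=4(1+\eta)\sigma^2$, which after the Gaussian substitution gives the $|\Hcal|\sqrt{4\sigma^2(1+\eta)\pi/m}\,\erf(\cdot)$ term; on $[\sigma^2,\infty)$ the denominator is at most $4(1+\eta)\varepsilon$, giving $\tfrac{4(1+\eta)}{m}\exp(-m\sigma^2/(4(1+\eta)))$. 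Multiplying by $2|\Hcal|$ yields \eqref{scnd_bound}.

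\emph{Main obstacle.} There is no conceptually new difficulty: \autoref{thm_estimate_convergence} is just the ``in expectation'' face of the uniform-deviation bounds \eqref{eq-Pro1}--\eqref{eq-Pro2}. The only points demanding care are (i) the interchange of expectation and integral (licensed by Tonelli, all integrands being nonnegative) together with the removal of the training-set expectation in \eqref{eq-Pro1} by monotonicity in the variance; and (ii) tracking constants when each Bernstein-type integrand is split, at the threshold $\varepsilon=\sigma^2$, into a sub-Gaussian part over $[0,\sigma^2]$ (yielding the $\erf$ term) and a sub-exponential part over $[\sigma^2,\infty)$ (yielding the $\tfrac1n e^{-\cdot}$, resp.\ $\tfrac1m e^{-\cdot}$, term) --- this bookkeeping is exactly what pins down the constant $\eta=1+\tfrac{b-a}{3}$.
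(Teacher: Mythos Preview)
Your proposal is correct and follows essentially the same route as the paper's proof: the triangle-inequality decomposition, the layer-cake identity, the replacement of $\sigma^2_{f^\star_m}$ (resp.\ $\sigma^2_f$) by the worst-case variance $\sigma^2$, and the split of each Bernstein integral at $\varepsilon=\sigma^2$ into a sub-exponential tail on $[\sigma^2,\infty)$ and a sub-Gaussian piece on $[0,\sigma^2]$ handled via the $\erf$ integral. If anything, your write-up is cleaner: the paper labels the two integral pieces as ``case $\sigma^2\le\varepsilon$'' and ``case $\sigma^2>\varepsilon$'' without making explicit that they are to be summed, and it does not spell out the Tonelli/monotonicity justification you give for passing the training-set expectation through \eqref{eq-Pro1}.
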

Interestingly, the term~\rev{$V_g-  \E \big[ {V}(f^\star_m)  \big]$} is the average error induced when estimating the function $f^\star_m$ using $m$ samples from the training set while~\rev{$\E \big|{V}(f^\star_m)-  \widehat{V}_n(f^\star_m)  \big| $} indicates the average error incurred when estimating the \gv\ using $n$ samples from the validation set. Clearly, in~\cref{frst_bound}, the scaling of these bounds with the number of samples are very different which can be made evident by using the order notation:
\mrev{\begin{align} 
\sup_{P_{XY}} \E \big|{V}(f^\star_m)-  \widehat{V}_n(f^\star_m)  \big|  & \in   \mathcal{O}\left(  \frac{1}{\sqrt{ n}}\right), \label{missing-2}\\
\sup_{P_{XY}}\left\{ V_g-  \E \big[ {V}(f^\star_m)  \big]\right\}    &\in   \mathcal{O}\left(   \frac{|\mathcal{H}|}{\sqrt{m}}\right). \label{missing-1}
\end{align}
These  bounds indicate that the error in \eqref{missing-2} vanishes much faster than the error in  \eqref{missing-1} and thus, the size of the training set, in general, should be kept larger than the size of the validation set, i.e.,  $n \ll m$. }


\subsection{Sample complexity}
We now study how large the validation set should be in order to get a good estimation. 
For ${\varepsilon, \delta>0}$, we define the sample complexity as the set of smallest integers $M(\varepsilon, \delta)$ and $N(\varepsilon, \delta)$ sufficient to guarantee that the gap between the true \gv\ and the estimated $ \widehat{V}_n(f^\star_m)$ is at most $\varepsilon$ with at least $1-\delta$ probability: 
\begin{definition}{}
\label{sample_complexity_definition}
For ${\varepsilon, \delta>0}$, let all pairs $\big(M(\varepsilon, \delta),N(\varepsilon, \delta)\big)$  be the set of  smallest $(m,n)$ sizes of training and validation sets such that:  
\begin{align}
\sup_{P_{XY}} \mathbb{P}\left[ | V_g -  \widehat{V}_n(f^\star_m) | > \varepsilon  \right] \leq \delta.
\end{align}
\end{definition}

Next result says that we can bound  the sample complexity in terms of  $\varepsilon, \delta, \sigma$,  and $|b-a|$ 
(see Appendix~\ref{AppD} for the proof).
\begin{restatable}[]{corollary}{trainingsamplesupperboudrestatable}\label{training_samples_upperbound}
The sample complexity  of the ERM algorithm \gv\   is bounded from above  by the set of values  satisfying: 
\rev{
\begin{align}
M(\varepsilon, \delta)&\leq \frac{8\,\sigma^2+\nicefrac{4\, \left(b-a\right)\varepsilon}{3}}{\varepsilon^2}\,\ln\left(\frac{2\, |\Hcal|}{\delta-\Delta}\right),
\label{sample_complex1}\\
N(\varepsilon, \delta)&\leq \frac{2\,\sigma^2+\nicefrac{2\,\left(b-a\right)\varepsilon}{3}}{ \varepsilon^2}\,\ln\left(\frac{2}{\Delta}\right),
\label{sample_complex2}
\end{align}
for all $\Delta$ such that $0< \Delta <\delta$.
}
\end{restatable}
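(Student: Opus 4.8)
The plan is to combine the error decomposition~\eqref{eq:new} with the two concentration inequalities of Proposition~\ref{mainproposition}. Recall that~\eqref{eq:new} bounds the estimation error by a sum of two non-negative terms, $|V_g - \widehat{V}_n(f^\star_m)| \le \big(V_g - V(f^\star_m)\big) + \big|\widehat{V}_n(f^\star_m) - V(f^\star_m)\big|$, the first depending only on the training sample (through $f^\star_m$) and the second only on the validation sample (through $\widehat{V}_n$). Splitting the target accuracy as $\varepsilon=\varepsilon_1+\varepsilon_2$, the event $\{|V_g - \widehat{V}_n(f^\star_m)| > \varepsilon\}$ is contained in the union $\{V_g - V(f^\star_m) > \varepsilon_1\}\cup\{|\widehat{V}_n(f^\star_m) - V(f^\star_m)| > \varepsilon_2\}$, so a union bound followed by $\sup_{P_{XY}}$ gives
\[
\sup_{P_{XY}}\mathbb{P}\big[|V_g - \widehat{V}_n(f^\star_m)| > \varepsilon\big] \le \sup_{P_{XY}}\mathbb{P}\big[V_g - V(f^\star_m) > \varepsilon_1\big] + \sup_{P_{XY}}\mathbb{P}\big[|\widehat{V}_n(f^\star_m) - V(f^\star_m)| > \varepsilon_2\big].
\]

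Next I would bound the two terms on the right using~\eqref{eq-Pro2} and~\eqref{eq-Pro1} respectively, in each case replacing the function- and distribution-dependent variances by the uniform constant $\sigma^2 = \max_{f\in\Hcal}\mathit{Var}(g(f(Y),X))$. This is legitimate because the Bernstein-type exponentials appearing there are increasing in the variance parameter: hence $\sum_{f\in\Hcal}\exp(-m\varepsilon_1^2/(8\sigma_f^2+\nicefrac{4(b-a)\varepsilon_1}{3})) \le |\Hcal|\exp(-m\varepsilon_1^2/(8\sigma^2+\nicefrac{4(b-a)\varepsilon_1}{3}))$, and the training-set expectation in~\eqref{eq-Pro1} of a quantity pointwise bounded above by $2\exp(-n\varepsilon_2^2/(2\sigma^2+\nicefrac{2(b-a)\varepsilon_2}{3}))$ is itself bounded by that constant. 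To force the sum of the two tails below $\delta$, I would allocate the failure budget with a free parameter $\Delta\in(0,\delta)$, imposing $2|\Hcal|\exp(-m\varepsilon_1^2/(8\sigma^2+\nicefrac{4(b-a)\varepsilon_1}{3}))\le\delta-\Delta$ and $2\exp(-n\varepsilon_2^2/(2\sigma^2+\nicefrac{2(b-a)\varepsilon_2}{3}))\le\Delta$. Taking logarithms and solving for the sample sizes yields $m \ge \frac{8\sigma^2+\nicefrac{4(b-a)\varepsilon_1}{3}}{\varepsilon_1^2}\ln\frac{2|\Hcal|}{\delta-\Delta}$ and $n \ge \frac{2\sigma^2+\nicefrac{2(b-a)\varepsilon_2}{3}}{\varepsilon_2^2}\ln\frac{2}{\Delta}$; by Definition~\ref{sample_complexity_definition} the smallest integers meeting these inequalities are upper bounds for $M(\varepsilon,\delta)$ and $N(\varepsilon,\delta)$, and evaluating at the (conservative) choice $\varepsilon_1=\varepsilon_2=\varepsilon$ recovers exactly~\eqref{sample_complex1}--\eqref{sample_complex2}, valid for every $\Delta\in(0,\delta)$.

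Once Proposition~\ref{mainproposition} is available the computation is elementary; the two points that need care are (i) the passage from the variances $\sigma_f^2,\sigma_{f^\star_m}^2$ to the single distribution-free constant $\sigma^2$, which is exactly what the $\sup_{P_{XY}}$ in Definition~\ref{sample_complexity_definition} and the monotonicity of the exponentials are for, and what makes the resulting bound distribution-free; and (ii) keeping the two sources of randomness separate, since $f^\star_m$ is measurable with respect to the training sample alone, so that after the union bound the validation tail~\eqref{eq-Pro1} applies to $f^\star_m$ as a fixed function with its own variance, which is then dominated by $\sigma^2$. I expect (i), together with pinning down the split of $(\varepsilon,\delta)$ that reproduces the stated closed form, to be the only real subtlety; everything else is bookkeeping.
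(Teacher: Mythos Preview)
Your argument follows the paper's proof essentially step for step: decompose via~\eqref{eq:new}, apply the two tails of Proposition~\ref{mainproposition} with the variances replaced by the uniform $\sigma^2$, allocate the failure probability as $(\delta-\Delta)+\Delta$, and invert the exponentials. The additional care you take in justifying the passage from $\sigma_f^2,\sigma_{f^\star_m}^2$ to $\sigma^2$ is fine and mirrors what the paper does implicitly.

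There is, however, one logical slip in your write-up. You correctly set up the inclusion $\{|V_g-\widehat V_n(f^\star_m)|>\varepsilon\}\subseteq\{V_g-V(f^\star_m)>\varepsilon_1\}\cup\{|\widehat V_n(f^\star_m)-V(f^\star_m)|>\varepsilon_2\}$ under the constraint $\varepsilon_1+\varepsilon_2=\varepsilon$, but then you ``evaluate at the (conservative) choice $\varepsilon_1=\varepsilon_2=\varepsilon$'', which violates that constraint: with $\varepsilon_1=\varepsilon_2=\varepsilon$ the union no longer covers the event, so the union bound does not apply. The paper's own proof makes exactly the same move --- it writes $\mathbb{P}(|V_g-\widehat V_n(f^\star_m)|\ge\varepsilon)\le\mathbb{P}(V_g-V(f^\star_m)\ge\varepsilon)+\mathbb{P}(|V(f^\star_m)-\widehat V_n(f^\star_m)|\ge\varepsilon)$ without any splitting --- so you do recover the paper's stated bounds, but you have made the unjustified step explicit rather than hidden. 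A clean fix (for both) is to take $\varepsilon_1=\varepsilon_2=\varepsilon/2$, at the cost of a harmless constant factor in~\eqref{sample_complex1}--\eqref{sample_complex2}.
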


 The theoretical results of this section are very general and \rev{cannot take full advantage of the specific properties  of a particular model or data distribution. }\rev{ In particular, it is important to emphasize that the upper bound in~\eqref{eq:new} is independent of the learned function $f^\star_m$ because ${|\widehat{V}_n(f^\star_m) - V(f^\star_m)|\leq\max_{f\in\mathcal{H}} |\widehat{V}_n(f) - V(f)|}$ and because of~\eqref{upperbounds1}, and thus  these bounds are independent of the specific algorithm and training sets in used to solve the optimization in \eqref{eq-solving-emr}. }
Furthermore, the  $f$  maximizing  $| {V}(f) - \widehat{V}_n(f) |$ in those in-equations  is not necessarily what the algorithm would choose. Hence the bounds given in  \autoref{thm_estimate_convergence} and \autoref{training_samples_upperbound}  in general are not tight. However, these theoretical bounds provide a worst-case measure from which learnability holds for all data sets.

In the next section, we will propose an approach for  selecting   $f^\star_m$ and   estimating   $V_g$. The experiments  in Section~\ref{experiments}  suggest that our method usually estimates $V_g$ much more accurately than what is indicated by \autoref{thm_estimate_convergence}.  


\section{From  \gv\ to Bayes vulnerability via pre-processing}
\label{pre-processing}

This is the core section of the paper, which describes our approach to select the $f^\star_m$  to estimate  $V_g$. 

In principle one could train a neural network  to learn $f^\star_m$ by using $-\widehat{V}_m(f)$ as the loss function, and 
minimizing  it over the $m$ training samples (cfr. \autoref{eq-solving-emr}).  
However, this would require  $\widehat{V}_m(f)$ to be a differentiable function of the weights of the neural network, 
so that its gradient can be computed during the back-propagation. 
Now, the problem is that the  $g$ component of $\widehat{V}_m(f)$ is essentially a non-differentiable function, so it would need to be approximated by a suitable differentiable (surrogate) function, (e.g., as it is the case of the Bayes error via the cross-entropy).
Finding an adequate  differentiable function to replace each   possible $g$   may be a challenging task in practice. If this surrogate does not preserve the original dynamic of the gradient of $g$ with respect to $f$, the learned  $f$  will be far from being optimal.


In order to circumvent this issue, we propose a different approach, which presents two main advantages:
\begin{enumerate}
\item \label{advantageOne} it reduces the problem of learning $f^\star_m$ to a standard classification problem, 
therefore it does not require a different loss function to be implemented for each adversarial scenario;
\item  it can be implemented by using \emph{any} universally consistent learning algorithm (i.e., any ML algorithm approximating the ideal Bayes classifier).
\end{enumerate}


The reduction described in the above list~(\autoref{advantageOne}) is  
based on the idea that, in the $g$-leakage framework, 
the adversary's goal is not to directly infer the actual secret 
$x$, but rather to select the optimal guess $w$ about the secret. 
As a consequence, the training of the ML classifier to produce $f^\star_m$ 
should not be done on pairs of type $(x,y)$, but rather of type $(w,y)$, 
expressing the fact that the best guess, 
in the particular run which produced $y$, is $w$. 
This shift from  $(x,y)$ to  $(w,y)$ is via a
\emph{pre-processing} and we propose two distinct and systematic ways 
to perform this transformation,  called \emph{data} 
and \emph{channel pre-processing}, respectively. 
The two methods are illustrated in the following sections. 

We remind that, according to~\autoref{learning_gv}, we restrict, wlog,  to  non-negative $g$'s. If $g$ takes negative values, then it can be shifted by adding $ - \min_{w,x} g(w,x)$, without consequences for the $g$-leakage value (cfr. \cite{Alvim:12:CSF,Alvim:14:CSF}).
Furthermore we assume that there exists at least a pair $(x, w)$ such that $\pi_x \cdot g(w,x)>0$. Otherwise $V_g$ would be $0$ and the problem of estimating it will be trivial.

\subsection{Data pre-processing}
\label{data_preproc}

The data pre-processing  technique  is completely black-box in the sense that it does not need access to the channel. We 
only assume the availability of a set of pairs of type $(x,y)$, sampled according to $\Joint{\pi}{C}$,  
the input-output distribution of the channel.  
This set could be provided by a third party, for example. 
We divide the set in \trainingsetsub\ (training) and  \validationsetsub\ (validation), containing $m$ and $n$ pairs, respectively. 

For the sake of simplicity, to describe this technique  we assume that $g$ takes only integer values, in addition to being non-negative. 
The construction for the general case is discussed in Appendix~\ref{App_data_preproc_no_real}. 
\begin{algorithm}[!htb]
\small
	 \caption{Algorithm for data pre-processing}\label{alg:dataprep}
	\emph{Input}: \trainingsetsub; \emph{Output}: \trainingsetdtpreprocsub; \\
	1. \trainingsetdtpreprocsub $\vcentcolon =\emptyset$;\\
	2. For each $x,y$,   let $u_{xy}$ be the  number of copies of $(x,y)$ in \trainingsetsub;\\
	3. For each $x,y,w$,   add $u_{xy}\cdot g(w,x)$ copies of $(w,y)$ to \trainingsetdtpreprocsub.
\end{algorithm}

The idea behind the data pre-processing technique is that the effect of the gain function can be represented in the transformed dataset 
by amplifying the impact of the guesses in proportion to their reward. For example, consider a  pair $(x,y)$    in \trainingsetsub, and assume that 
the reward for the guess $w$ is $g(w,x) = 5$, while for another guess $w'$ is $g(w',x) = 1$. Then in the transformed dataset \trainingsetdtpreprocsub\ 
this pair will contribute with $5$ copies of $(w,y)$ and only $1$ copy of $(w',y)$. The transformation is described in Algorithm~\ref{alg:dataprep}. Note that in general 
it causes an expansion of the original dataset. 

\subsubsection*{Estimation of $V_g$}  Given \trainingsetsub, we construct the set \trainingsetdtpreprocsub\ of pairs  $(w,y)$  according to  Algorithm~\ref{alg:dataprep}. 
Then, we use \trainingsetdtpreprocsub\ to train a classifier $f^\star_{m^\prime}$, using an algorithm that approximates the ideal Bayes classifier. 
As proved below, $f^\star_{m^\prime}$ gives the same mapping $\Ycal\rightarrow\Wcal$  as the optimal empirical rule $f^\star_m$ on \trainingsetsub\ (cfr. \autoref{empirical_max}).
Finally, we use $f^\star_m$ and \validationsetsub\ to  compute  the estimation of  $V_g(\pi,C)$  as in  \eqref{eq-empirical-g-leakege}, with $f$ replaced by $f^\star_m$. 
 
%
\subsubsection*{Correctness}
We first need some notation. 
For each $(w,y)$, define:
\begin{equation}
U(w,y) \; \defsym \;  {\sum_x  \pi_x \cdot C_{xy} \cdot g(w,x)}~, 
\end{equation}
which represents the ``ideal''  proportion of copies of  $(w,y)$ that \trainingsetdtpreprocsub\ should contain.
From $U(w,y)$ we can now derive the ideal joint distribution on $\Wcal\times\Ycal$ and the marginal 
on $\Wcal$: 
\begin{eqnarray} \label{alpha}
P_{WY}(w,y) \,\defsym\, \frac{U(w,y)}{\alpha},
&\quad\mbox{where}\quad&
\alpha \;\defsym\; {\sum_{y,w} U(w,y)}\;,
\end{eqnarray}
(note that $\alpha>0$ because of the assumption on $\pi$ and $g$),
\rev{\begin{eqnarray} 
\xi_w &\defsym& \sum_y P_{WY}(w,y). \label{marginal}
\end{eqnarray}}
The  channel of the  conditional probabilities of $y$ given $w$ is:
\rev{\begin{equation}\label{channelE}
E_{wy} \; \defsym \; \frac{P_{WY}(w,y)}{\xi_w}~.
\end{equation}}

Note that \rev{$P_{WY} = \Joint{\xi}{E}$}. By construction, it is clear that the  \trainingsetdtpreprocsub\ generated by  Algorithm~\ref{alg:dataprep} could have been generated, 
 with the same probability, by sampling \rev{$\Joint{\xi}{E}$}. 
The following theorem, whose proof is  in Appendix~\ref{App_data_preproc}, establishes that the \gv\  of $\Joint{\pi}{C}$ is equivalent to the Bayes vulnerability  
of \rev{ $\Joint{\xi}{E}$}, and hence it is correct  to estimate $f^\star_m$ as an empirical  Bayes classifier $f^\star_{m^\prime}$ trained on  \trainingsetdtpreprocsub.
\begin{restatable}[Correctness of data pre-processing]{theorem}{datapreprocthm}\label{data_preproc_thm}
Given  a prior $\pi$, a channel $C$, and a gain function $g$, we have
	\rev{\[
		V_g(\pi,C)
		~=
		\alpha\cdot V_{\GId}(\xi, E) 
		~,
	\]}
	where 
\rev{$\alpha, \xi$} and $E$ are those defined in   \eqref{alpha}, \eqref{marginal} and \eqref{channelE}, respectively, and $\GId$ is the identity  function (cfr. \autoref{sec:preliminaries}), i.e., the gain function corresponding to the  Bayesian adversary.  
\end{restatable}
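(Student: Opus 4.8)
The plan is to unfold both sides of the claimed identity into a sum over $\Ycal$ of a maximum over $\Wcal$, and then to observe that the two resulting expressions differ by exactly the normalization constant $\alpha$. The whole argument is a short computation; its conceptual content is that the $g$-dependent re‑weighting performed by Algorithm~\ref{alg:dataprep} amounts, up to the global factor $\alpha$, to a change of the role of ``secret'' from $x$ to $w$, which turns the posterior \gv{} into an ordinary posterior Bayes vulnerability.

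First, I would rewrite the left-hand side. By the definition of posterior \gv{} in~\eqref{posterior_g_v} together with the definition of $U(w,y)$, we have
\[
V_g(\pi,C) \;=\; \sum_{y\in\Ycal}\max_{w\in\Wcal}\sum_{x\in\Xcal}\pi_x\cdot C_{xy}\cdot g(w,x) \;=\; \sum_{y\in\Ycal}\max_{w\in\Wcal} U(w,y).
\]
Next I would rewrite the right-hand side. The quantity $V_{\GId}(\xi,E)$ is the posterior \gv{} for the identity gain function over the secret space $\Wcal$ with prior $\xi$ and channel $E$, so applying~\eqref{posterior_g_v} gives $V_{\GId}(\xi,E) = \sum_{y\in\Ycal}\max_{w\in\Wcal}\xi_w\cdot E_{wy}$. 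By~\eqref{channelE} one has $\xi_w\cdot E_{wy} = P_{WY}(w,y)$ whenever $\xi_w>0$; rows with $\xi_w=0$ contribute $0$ on both sides and never attain the maximum unless every term is $0$, which is excluded by the standing assumption that $\pi_x\cdot g(w,x)>0$ for some $(x,w)$. Hence, using~\eqref{alpha} and the fact that $\alpha>0$ (so $1/\alpha$ can be pulled out of the maximum),
\[
V_{\GId}(\xi,E) \;=\; \sum_{y\in\Ycal}\max_{w\in\Wcal} P_{WY}(w,y) \;=\; \sum_{y\in\Ycal}\max_{w\in\Wcal}\frac{U(w,y)}{\alpha} \;=\; \frac{1}{\alpha}\sum_{y\in\Ycal}\max_{w\in\Wcal} U(w,y).
\]
Combining the two displays yields $V_g(\pi,C)=\alpha\cdot V_{\GId}(\xi,E)$.

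The only points requiring care are bookkeeping ones: that $\alpha>0$ (guaranteed by the assumption on $\pi$ and $g$), that the rows of $E$ indexed by $w$ with $\xi_w=0$ are immaterial and can be fixed arbitrarily, and that $\Joint{\xi}{E}=P_{WY}$, so the expression above is genuinely the posterior Bayes vulnerability of the joint distribution from which the transformed dataset \trainingsetdtpreprocsub{} is (in distribution) sampled. I do not expect a real obstacle here; the identity is essentially a normalization argument, and the main work of the section is the earlier construction of $U$, $\alpha$, $\xi$ and $E$ rather than this proof itself.
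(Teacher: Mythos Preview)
Your proof is correct and follows essentially the same approach as the paper's own proof: both unfold $V_{\GId}(\xi,E)$ via the definition of posterior vulnerability, collapse the identity gain to $\xi_w E_{wy} = P_{WY}(w,y) = U(w,y)/\alpha$, pull out $1/\alpha$, and recognize $\sum_y \max_w U(w,y)$ as $V_g(\pi,C)$. Your version is slightly more careful about the degenerate rows with $\xi_w = 0$, but the argument is otherwise identical.
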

\rev{
\subsubsection*{Estimation error}
To reason about the error we 
need to consider  the optimal empirical classifiers. 
Assuming that we can perfectly match the 
$U(w,y)$ above with the  $u_{xy}$ of Algorithm~\ref{alg:dataprep}, we can repeat the same reasoning as above, thus obtaining $\widehat{V}_m(f)=\alpha\cdot \widehat{V}_{m^\prime}(f)$,
where  ${V}_m(f)$ is the empirical functional defined in \eqref{eq-solving-emr}, and $\widehat{V}_{m^\prime}(f)$ is the corresponding empirical functional  for $\GId$ evaluated in  $\mathcal{D}_{m^\prime}$:  
\begin{align}
\widehat{V}_{m^\prime}(f)\defsym \frac{1}{m'} \sum_{(w,y) \in \mathcal{D}_{m^\prime}} \GId \big( f(y), w \big)
\end{align}
$f^\star_{m'}$ is the maximizer of this functional, i.e. $f^\star_{m^\prime}\defsym  \argmax\limits_{f\in\Hcal} \widehat{V}_{m^\prime} (f)$.
Therefore we have:
{\small
\begin{align*}
f^\star_{m}  =   \argmax\limits_{f\in\Hcal} \widehat{V}_m (f) 
= \argmax\limits_{f\in\Hcal} (\alpha\cdot \widehat{V}_{m^\prime}(f) )
 = \argmax\limits_{f\in\Hcal} \widehat{V}_{m^\prime}(f) 
= f^\star_{m^\prime}.
\end{align*}
}
A bound on the estimation error of this method can therefore be obtained by using the theory developed in previous section, 
applied to the Bayes classification problem. 
} 
\rev{
Remember that the estimation error is $f^\star_{m}$ is  $| V_g - \widehat{V}_n(f^\star_m) |$. With respect to the estimation error of the corresponding Bayes classifier, we have a magnification of a factor $\alpha$ as shown by the following formula, where $\widehat{V}_{n^\prime}$ represents the empirical functional for the Bayes classifier: 
{\small 
\begin{align*}
| V_g - \widehat{V}_n(f^\star_m) | = | \alpha\cdot V_{\GId} - \alpha\cdot  \widehat{V}_{n^\prime}(f^\star_{m^\prime}) | = \alpha\cdot |  V_{\GId} -   \widehat{V}_{n^\prime}(f^\star_{m^\prime}) | .
\end{align*}
}However, the normalized estimation error (cfr.~\cref{experiments}) remains  the same because both numerator and denominator are magnified by a factor $\alpha$.
}

\rev{
Concerning the probability that the error is above a certain threshold $\varepsilon$, we have the same bound as those for the Bayes classifier 
 in \autoref{mainproposition} and the other results of previous section, where $\varepsilon$ is replaced by $\alpha \varepsilon$, $m, n$ by $m', n'$, $\sigma^2$ by $\alpha^2 \sigma^2$,
 and  $|b-a| = 1$ (because it's a Bayes classifier). 
 It may sounds a bit surprising that the error for the estimation of the \gv\ is not much worse than for the estimation of the Bayes error, but we recall that we are essentially solving the same problem, only every quantity is magnified by a factor $\alpha$. 
 Also, we are assuming that we can match perfectly $U_{xy}$ by $u_{xy}$. When $g$ ranges in a large domain this may not be possible, and we should rather resort to the channel pre-processing method described in the next section.
}

\subsection{Channel pre-processing}
\label{channel_preproc}

For this technique we assume  \emph{black-box access} to the system, i.e., that we can execute the system while controlling each input, and collect the corresponding output.

The core idea behind this technique is to transform the input of $C$ into entries of type $w$, and to ensure that the distribution on the $w$'s  reflects the corresponding rewards expressed by $g$. 

More formally, let us define a distribution $\tau$ on $\Wcal$ as follows: 
\begin{eqnarray} \label{eq:tau_beta}
\tau_w \,\defsym\, \frac{\sum_x \pi_x \cdot g(w,x)}{\beta}
&\;\;\mbox{where}\;\; & 
\beta \,\defsym\, \sum_{x,w} \pi_x \cdot g(w,x)\, ,
\end{eqnarray}
(note that $\beta$ is strictly positive because of the assumptions on $g$ and $\pi$), and let us define the following  matrix $R$ from $\Wcal$ to $\Xcal$:
\begin{eqnarray}\label{eq:R} 
R_{wx} \,\defsym\, \frac{1}{\beta}\cdot \frac{1}{\tau_w} \cdot  \pi_x \cdot g(w,x)\, .
\end{eqnarray}
It is easy to check that $R$ is a stochastic matrix, hence the composition $RC$ is a channel. 
It is important  to emphasize the following: \\

{\sc Remark}\;
In the above definitions, $\beta, \tau$ and $R$ depend solely on $g$ and $\pi$, and not on $C$.
\\[1ex]
The above property is crucial to our goals, because in the black-box approach we are not supposed to rely on the knowledge of $C$'s internals. 
We now illustrate how we can estimate $V_g$ using the pre-processed channel $RC$. 

\subsubsection*{Estimation of \, $V_g$}  Given $RC$ and $\tau$, we build a set \trainingsetchpreprocsub\  consisting of   pairs  of type $(w,y)$ sampled from 
$\Joint{\tau}{RC}$. We also construct a set \validationsetsub\ of   pairs  of type $(x,y)$ sampled from $\Joint{\pi}{C}$.
Then, we use \trainingsetchpreprocsub\ to train a classifier $f^\star_{m}$, using an algorithm that approximates the ideal Bayes classifier. 
Finally, we use $f^\star_m$ and \validationsetsub\ to  compute  the estimation of  $V_g(\pi,C)$  as in  \eqref{eq-empirical-g-leakege}, with $f$ replaced by $f^\star_m$. 

Alternatively, we could estimate  $V_g(\pi,C)$ by computing the empirical Bayes error of  $f^\star_m$ on a validation set \validationsetsub\ of type $(w,y)$  sampled from 
$\Joint{\tau}{RC}$, but the estimation would be less precise. Intuitively, this is because $RC$ is more ``noisy'' than $C$.  
 \subsubsection*{Correctness}
The correctness of the channel pre-processing method is given by the following theorem, which shows that we can learn $f^\star_{m}$ by 
training a Bayesian classifier on a  set sampled from $\Joint{\tau}{RC}$.

\begin{restatable}[Correctness of channel pre-processing]{theorem}{channelpreproc}\label{thm:channel-pre-processing}
	Given a prior $\pi$ and a gain function $g$, we have that, for any channel $C$:
	\[
		V_g(\pi,C)
		~=
		\beta\cdot V_{\GId}(\tau, RC) 
		\quad
		\text{for all channels }C.
	\]
	where $\beta$, $\tau$ and $R$ are those defined in \eqref{eq:tau_beta} and \eqref{eq:R}. 
\end{restatable}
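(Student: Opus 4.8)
The plan is to unfold both sides of the claimed identity using the definition of posterior $g$-vulnerability in~\eqref{posterior_g_v} and to exhibit a term-by-term cancellation between $\tau$ and $R$. First I would rewrite the right-hand side: since $\GId$ is the identity gain function, $V_{\GId}(\tau, RC) = \sum_{y\in\Ycal} \max_{w\in\Wcal} \tau_w\,(RC)_{wy}$, where $(RC)_{wy} = \sum_{x\in\Xcal} R_{wx}\,C_{xy}$ is the composed channel. The key algebraic observation is that the normalizing factor $\tau_w$ appearing in the denominator of $R_{wx}$ cancels: from~\eqref{eq:R} we get $\tau_w\,R_{wx} = \frac1\beta\,\pi_x\,g(w,x)$, which no longer depends on $\tau_w$ at all — this is essentially the content of the Remark. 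Substituting this back yields $\tau_w\,(RC)_{wy} = \frac1\beta \sum_{x} \pi_x\,C_{xy}\,g(w,x)$.

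Next I would pull the constant $\frac1\beta$ out of both the inner maximum and the outer sum over $y$, obtaining $\beta\cdot V_{\GId}(\tau, RC) = \sum_{y}\max_{w}\sum_{x}\pi_x\,C_{xy}\,g(w,x)$, which is exactly $V_g(\pi,C)$ by~\eqref{posterior_g_v}. Note that $C$ only ever enters as the generic stochastic matrix in the composition $RC$, so the equality holds uniformly for all channels $C$, as claimed. Along the way I would also record that $R$ from~\eqref{eq:R} is a genuine stochastic matrix, so that $RC$ is a legitimate channel and $V_{\GId}(\tau,RC)$ is well defined: $R_{wx}\ge 0$ because $\pi$ and $g$ are non-negative, and $\sum_x R_{wx} = \frac{1}{\beta\tau_w}\sum_x \pi_x\,g(w,x) = 1$ by the definition of $\tau_w$.

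The only point requiring a little care is the degenerate case in which $\tau_w = 0$ for some guess $w$, equivalently $\sum_x \pi_x\,g(w,x)=0$: for such a $w$ the entry $R_{wx}$ is formally undefined, but this $w$ also contributes $0$ to the inner maximum on the left-hand side, so one may simply restrict $\Wcal$ to the guesses with $\tau_w>0$ without altering either side of the identity. The standing assumption from Section~\ref{pre-processing} that some $\pi_x\,g(w,x)>0$ ensures this restricted set is non-empty and that $\beta>0$, so the normalizations in~\eqref{eq:tau_beta} and~\eqref{eq:R} are meaningful.

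I do not expect a genuine obstacle here: once the cancellation $\tau_w R_{wx} = \pi_x\,g(w,x)/\beta$ is spotted, the statement reduces to a one-line manipulation of the defining sums, with the commuting of $\max_w$ with the positive scalar $1/\beta$ and the treatment of zero-mass guesses being the only routine bookkeeping. This is in the same spirit as the reduction underlying \autoref{data_preproc_thm}, the difference being that here the transformation acts on the channel input rather than on the sampled data.
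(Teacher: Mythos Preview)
Your proposal is correct and follows essentially the same approach as the paper: both rest on the cancellation $\tau_w R_{wx} = \tfrac{1}{\beta}\pi_x\,g(w,x)$ (equivalently, $\beta\,\Diag{\tau}R = G\,\Diag{\pi}$ in the paper's matrix notation), and both verify that $R$ is stochastic and handle the degenerate $\tau_w=0$ case in the same way. The only difference is cosmetic: the paper packages the final step via the trace-based formulation $V_g(\pi,C)=\max_S\Trace{G\,\Diag{\pi}CS}$ from~\cite{Alvim:12:CSF}, whereas you unfold~\eqref{posterior_g_v} directly in element-wise form.
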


Interestingly,  a result  similar to \autoref{thm:channel-pre-processing} is also given in \cite{Bordenabe:16:CSF}, although the context is completely different from ours: 
the focus of \cite{Bordenabe:16:CSF}, indeed, is to study how the leakage of $C$ on $X$ may induce also a leakage of other sensitive information  
$Z$ that has nothing to do with $C$ (in the sense that is not information manipulated by  $C$). We intend to explore this connection in the context 
of a possible extension of our approach to this more general scenario. 
 
\rev{
\subsubsection*{Estimation error}
Concerning the estimation error, the results are essentially the same as in previous section (with $\alpha$ replaced by $\beta$).
As for the bound on the probability of error, the results are worse, because the original variance $\sigma^2$ is magnified by the channel pre-processing, 
which introduces a further factor of randomness in the sampling of training data in \ref{eq-Pro1}, which means that in practice this bound is more difficult to estimate.
}

\subsection{Pros and cons  of the two methods}\label{sec:ComparisonDataChannel}
The fundamental advantage of data pre-processing is that it allows to estimate $V_g$ from just samples of the system, without even black-box access.
In contrast to channel pre-processing, however, this method is particularly sensitive to the values of the gain function $g$.
Large gain values will increase the size of \trainingsetdtpreprocsub, with consequent increase of the computational cost for estimating the \gv.
Moreover, if $g$ takes real values then we need to apply the technique described in Appendix~\ref{App_data_preproc_no_real}, 
which can lead to a large increase of the dataset as well.
In contrast, the channel pre-processing method has the advantage of  controlling the size of the training set, but it can be applied only when it is possible to interact with the channel by providing input and collecting output. 
Finally, from the precision point of view, we expect the estimation based on  data pre-processing   to be more accurate when $g$ consists of small integers, because the channel pre-processing introduces some extra noise in the channel. 
\section{Evaluation}
\label{experiments}

In this section we evaluate our approach to the estimation of $g$-vulnerability. 
We consider four different scenarios:
\begin{enumerate}
    \item $\Xcal$ is a set of (synthetic) numeric data, the channel $C$ consists of \emph{geometric noise},  and  $g$ is the \emph{multiple guesses} gain function, 
    representing an adversary that is allowed to make several attempts to discover the secret.  
   \item $\Xcal$ is a set of locations from the Gowalla dataset~\cite{Gowalla:2011:Online},  $C$  is the optimal noise of Shokri et al.~\cite{Shokri:14:CCS}, and $g$ is one of the functions used to evaluate the privacy loss in \cite{Shokri:14:CCS}, namely a function anti-monotonic on the distance, representing the idea that the more the adversary's guess is close  to the target (i.e., the real location), the more he gains.
    \item $\Xcal$ is the Cleveland heart disease dataset~\cite{Cleveland:1989:Online},  $C$ is a \emph{differentially private} (DP) mechanism~\cite{Dwork:06:TCC,Dwork:06:ICALP}, and $g$ assigns higher values to worse heart conditions, modeling an adversary that aims at discovering whether a patient is at risk (for instance, to deny his application for health insurance).
   \item $\Xcal$ is a set of passwords of $128$ bits and  $C$ is a \emph{password checker} that leaks the time before the check fails, but mitigates the timing attacks by applying some random delay and the bucketing technique (see, for example, \cite{Koepf:09:CSF}). The function $g$ represents the part of the password under attack.
\end{enumerate}
For each scenario, we proceed in the following way:
\begin{itemize}
    \item We consider $3$ different samples sizes for the training sets that are used to train the ML models and learn the $\Ycal \rightarrow \Wcal$ remapping. This is to evaluate how the precision of the estimate depends on the amount of data available,    and on its relation with the size of  $|\Ycal|$. 
    \item In order to evaluate the variance of the precision, for each training size we create $5$ different training sets, and 
    \item for each trained model we estimate the $g$-vulnerability using $50$ different validation sets.
\end{itemize}

\subsection{Representation of the results and metrics}
We graphically represent the results of the experiment as box plots, using one box for each size. More precisely, given a specific size, let $\widehat{V}^{ij}_n$ be the \gv\ estimation on the $j$-th validation set computed with a model trained over the $i$-th training set (where $i\in\{1,\ldots, 5\}$ and $j\in\{1,\ldots, 50\}$). Let  $V_g$ be the real \gv\ of the system. We define the \emph{normalized estimation error} $\delta_{ij}$ and the  mean value $\overline{\delta}$  of the $\delta_{ij}$'s as follows:
\begin{align}
\delta_{ij}\defsym\frac{|\widehat{V}^{ij}_n-V_g|}{V_g}\, , \quad \textrm{with} \quad 
\overline{\delta} \defsym \frac{1}{250} \sum_{i=1}^{5}\sum_{j=1}^{50}\delta_{ij}\, .
\end{align}
In the graphs, the $\delta_{ij}$'s are  reported next to the box corresponding to the   size, and 
$\overline{\delta}$ is the black horizontal line inside the box. 

Thanks to the normalization the $\delta_{ij}$'s allow  to compare   results among different scenario and different levels of (real) \gv. 
Also, we argue that  the percentage of the error is more meaningful than the absolute value. 
The interested reader, however, can find in Appendix~\ref{extra_plots} also   plots  showing the estimations of  the \gv\, and their distance from the  real  value. 

We also consider the following typical measures of precision:

\begin{align}
   & \textrm{dispersion} \defsym \sqrt{\frac{1}{250}\sum_{i=1}^{5}\sum_{j=1}^{50}(\delta_{ij}-\overline{\delta})^2}\, ,\\
	&\textrm{total error} \defsym \sqrt{\frac{1}{250}\sum_{i=1}^{5}\sum_{j=1}^{50}\delta_{ij}^2}\, .
\end{align}
The dispersion is an average measure of how far the normalized estimation errors are  from their mean value when using same-size training and validation sets. 
On the other hand, the total error is an average measure of the normalized estimation error, when using same-size training and validation sets. 

In order to make a fair comparison between the two   pre-processing methods, intuitively we need to use training and validation sets of ``equivalent size''. For the validation part, since the 
``best'' $f$ function has been already found and therefore we do not need any pre-processing anymore, ``equivalent size'' just means same size. But what does it  mean, in the case of training sets built with different pre-processing methods?  
Assume that we have a set of data \trainingsetsub\ 
coming from a third party collector (recall that  $m$ represents the size of the set),
and let \trainingsetdtpreprocsub\ 
be the result of the data pre-processing on \trainingsetsub.
Now, let \trainingsetchpreprocsub\ 
be the dataset obtained drawing 
samples according to the channel pre-processing method. 
Should we impose $m^{\prime\prime}= m$ or $m^{\prime\prime}= m^{\prime}$?
We argue that the right choice is the first one, because  the amount of ``real'' data  collected is $m$. 
Indeed, \trainingsetdtpreprocsub\ is generated synthetically from \trainingsetsub\ and cannot contain more information about  $C$ than \trainingsetsub, despite its larger size.

\subsection{Learning algorithms}
We consider two ML algorithms in the experiments:   k-Nearest Neighbors (k-NN) and   Artificial Neural Networks (ANN).
We have made however a slight modification of k-NN algorithm, due to the following reason: recall that, depending on the particular gain function, the data pre-processing method might create many instances where a certain observable $y$ is repeated multiple times in
pair with different $w$'s. For the k-NN algorithm, a very common choice is to consider a number of neighbors which is equivalent to natural logarithm of the
total number of training samples. In particular, when the data pre-processing is applied, this means that $k=\log(m^{\prime})$ nearest neighbors will be considered for the classification decision. Since $\log(m^{\prime})$ grows slowly with respect to $m^{\prime}$, it might happen that k-NN fails to
find the subset of neighbors from which the best remapping can be learned. 
To amend this problem, we modify the k-NN algorithm in the following way:   instead of looking for neighbors among all the $m^{\prime}$ samples, we only consider a subset of $l \leq m^{\prime}$ samples, where each value $y$ only appears once. After the $\log(l)$ neighbors have been detected among
the $l$ samples, we select $w$ according to a majority vote over the $m^{\prime}$ tuples $(w,y)$ created through the remapping.

The distance on which the notion of neighbor is based depends on the experiments. We have considered the standard distance among numbers in the first and fourth experiments, the Euclidean distance in the second one, and the Manhattan distance  in the third one, which is a stand choice for DP. 

Concerning the ANN models, their specifics  are in Appendix~\ref{AppG}. Note that,  for the sake of fairness, we  use the same architecture for both pre-processing methods, although we adapt number of epochs and batch size to the particular dataset we are dealing with. 


\subsection{Frequentist approach}
In the experiments, we will compare our method with 
the frequentist one.  This approach has been proposed originally in \cite{Chatzikokolakis:10:TACAS} for estimating mutual information, and extended successively also to min-entropy leakage~\cite{Chothia:13:CAV}. Although not considered in the literature,  the extension to the case of \gv\  is straightforward. 
The method consists in estimating the probabilities that constitute the channel matrix $C$, and then calculating   analytically the \gv\ on $C$. 
The precise definition is in ~\autoref{freq_description}.

In \cite{Cherubin:19:SP} it was observed that  in the case of the Bayes error
the frequentist approach performs poorly when the size of the observable domain  $|\Ycal|$ is large with respect to the available data. 
We want to study whether this is the case also for \gv. 

For the experiment on the multiple guesses the comparison is illustrated in the next section. For the other experiments, because of lack of space, we have reported it in the Appendix~\ref{extra_plots}.

\subsection{Experiment $1$: multiple guesses}
\begin{figure}[!htb]
	\centering
	\makebox[\linewidth][c]{\includegraphics[scale=.25]{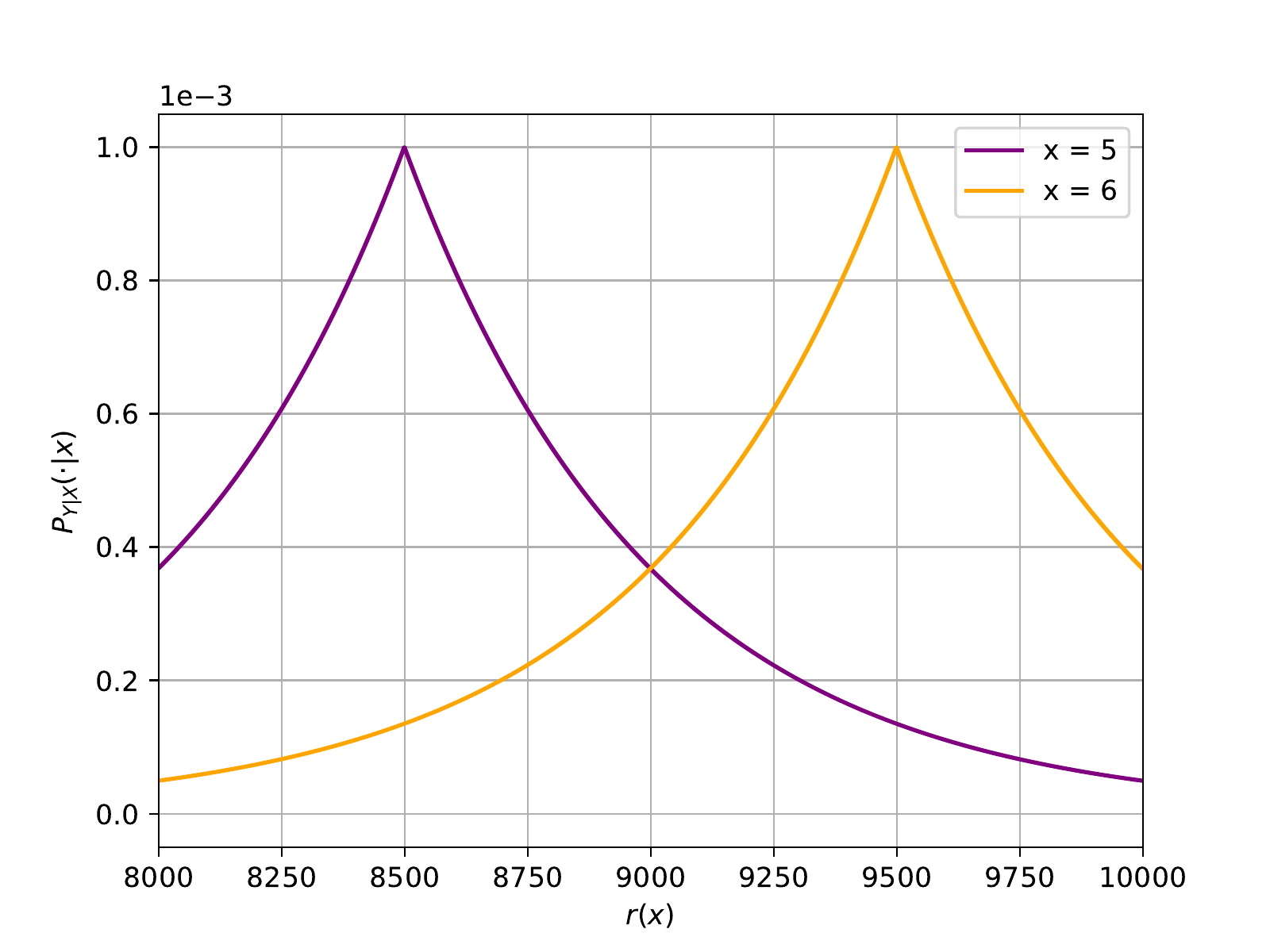}}
	\caption{The channel of  Experiment $1$. The two curves represent the distributions $P_{Y|X}(\cdot|x)$ for two adjacent secrets:  $x=5$ and $x=6$.}
	\label{fig:mult_guess_distr}
\end{figure}
\label{mult_guess_section}
We consider a system in which the secrets $\Xcal$ are the  integers between $0$ and $9$, and  the observables $\Ycal$ are the integers between $0$ and $15999$. 
Hence $|\Xcal|=10$ and  $|\Ycal|=16$K. The rows of the channel $C$ are geometric distributions centered on the corresponding secret:
\begin{equation}
    C_{xy} = P_{Y|X}(y|x)=\lambda\,\exp(-\nu|r(x)-y|)\, ,
    \label{geom_distr_multiguess}
\end{equation}
where:
\begin{itemize}
	\item $\nu$ is a parameter that determines how concentrated around $y=x$ the distribution is. In this experiment we set  $\nu=0.002$;
	\item $r$ is   an auxiliary function that reports $\Xcal$ to the same scale of $\Ycal$, and centers $\Xcal$ on 
	$\Ycal$. Here    $r(x) = 1000\, x + 3499.5$;
  	\item $\lambda=\nicefrac{e^{\nu}-1}{(e^{\nu}+1)}$ is a normalization factor
\end{itemize}
\autoref{fig:mult_guess_distr} illustrates the shape of $C_{xy}$, showing the distributions  $P_{Y|X}(\cdot|x)$ for two adjacent secrets $x=5$ and $x=6$. 
We consider an adversary that can make two attempts to discover the secret (two-tries adversary), and we define the  corresponding gain function as follows.
A guess $w\in\Wcal$ is one of all the possible combinations of $2$ different secrets from $\Xcal$, i.e., $w = \{x_{0}, x_{1}\}$ with $x_{0}, x_{1} \in \Xcal$ and $x_{0} \neq x_{1}$. 
Therefore $|\Wcal|=\binom{10}{2}=45$. The gain function $g$ is  then
\begin{equation}\label{g_multiple}
	g(w,x) = \begin{cases} 1  & \mbox{if } x\in w\\ 0  & \mbox{otherwise}\, . \end{cases}
\end{equation}
For this experiment we consider a uniform prior distribution $\pi$ on $\Xcal$. The true \gv\, for these particular $\nu$ and $\pi$, results to be $V_g = 0.892$. 
%
%
As training sets sizes we consider $10$K, $30$K and $50$K, and 50K for the validation sets.
%
%

%
\subsubsection{Data pre-processing}\label{mult_guess_data_prep} (cfr. Section~\ref{data_preproc}).
%
The plot in~\autoref{fig:mult_guess_est_knn_ann} shows the performances of the k-NN and ANN 
models  in terms of normalized estimation error, while 
\autoref{fig:mult_guess_est_freq_knn_ann} shows  the comparison with the  frequentist approach. 
As we can see,  the precision of the frequentist method is much lower, thus confirming that the trend observed in  \cite{Cherubin:19:SP} for the Bayes vulnerability holds also for \gv. 
\begin{figure}[!htb]
	\centering
	\makebox[\linewidth][c]{\includegraphics[scale=.34]{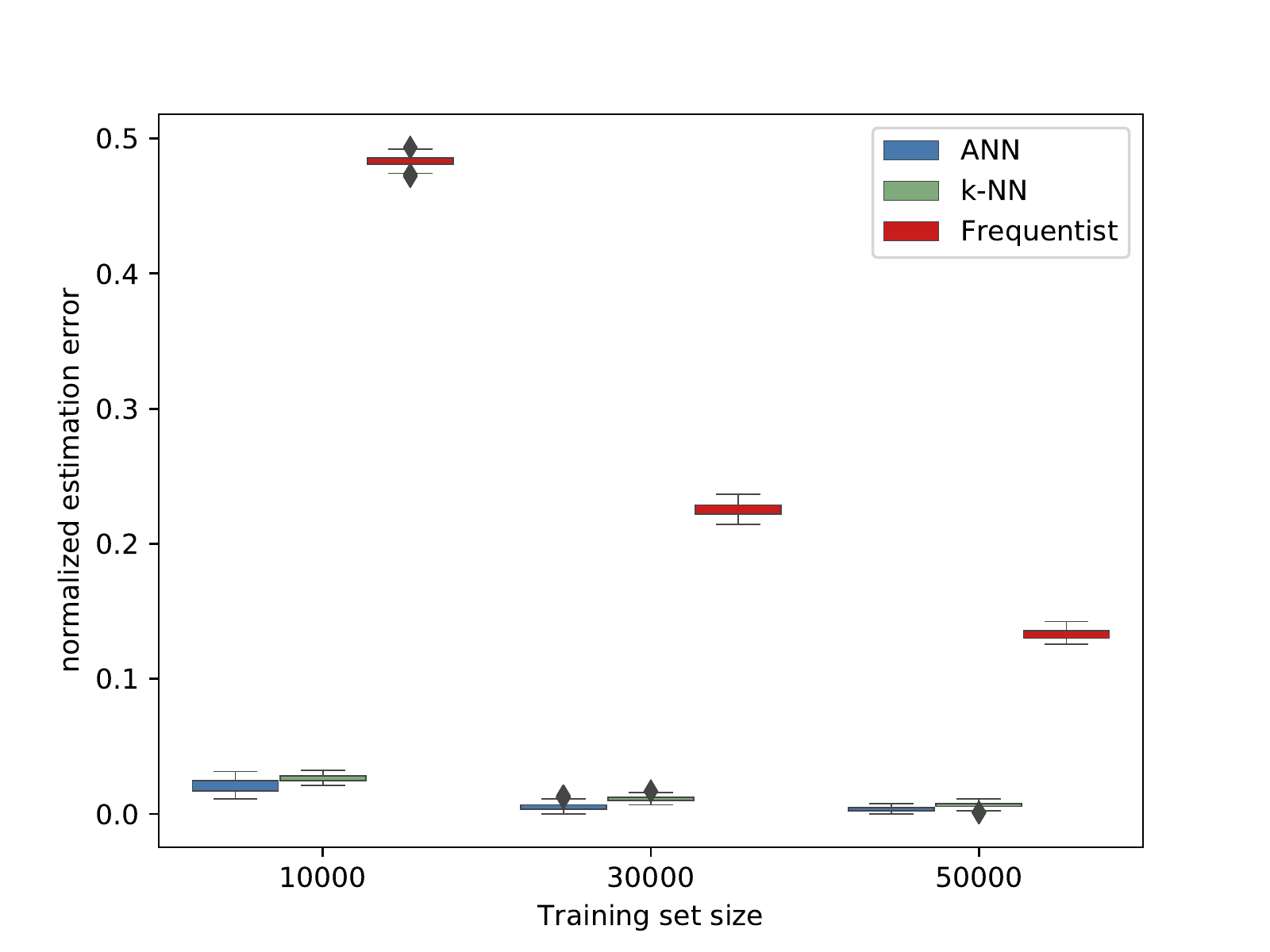}}
	\caption{Multiple guesses scenario, comparison between the frequentist   
	 and the  ML estimations with data pre-processing.}
	\label{fig:mult_guess_est_freq_knn_ann}
\end{figure}
\begin{figure}[!htb]
	\centering
	\makebox[\linewidth][c]{\includegraphics[scale=.34]{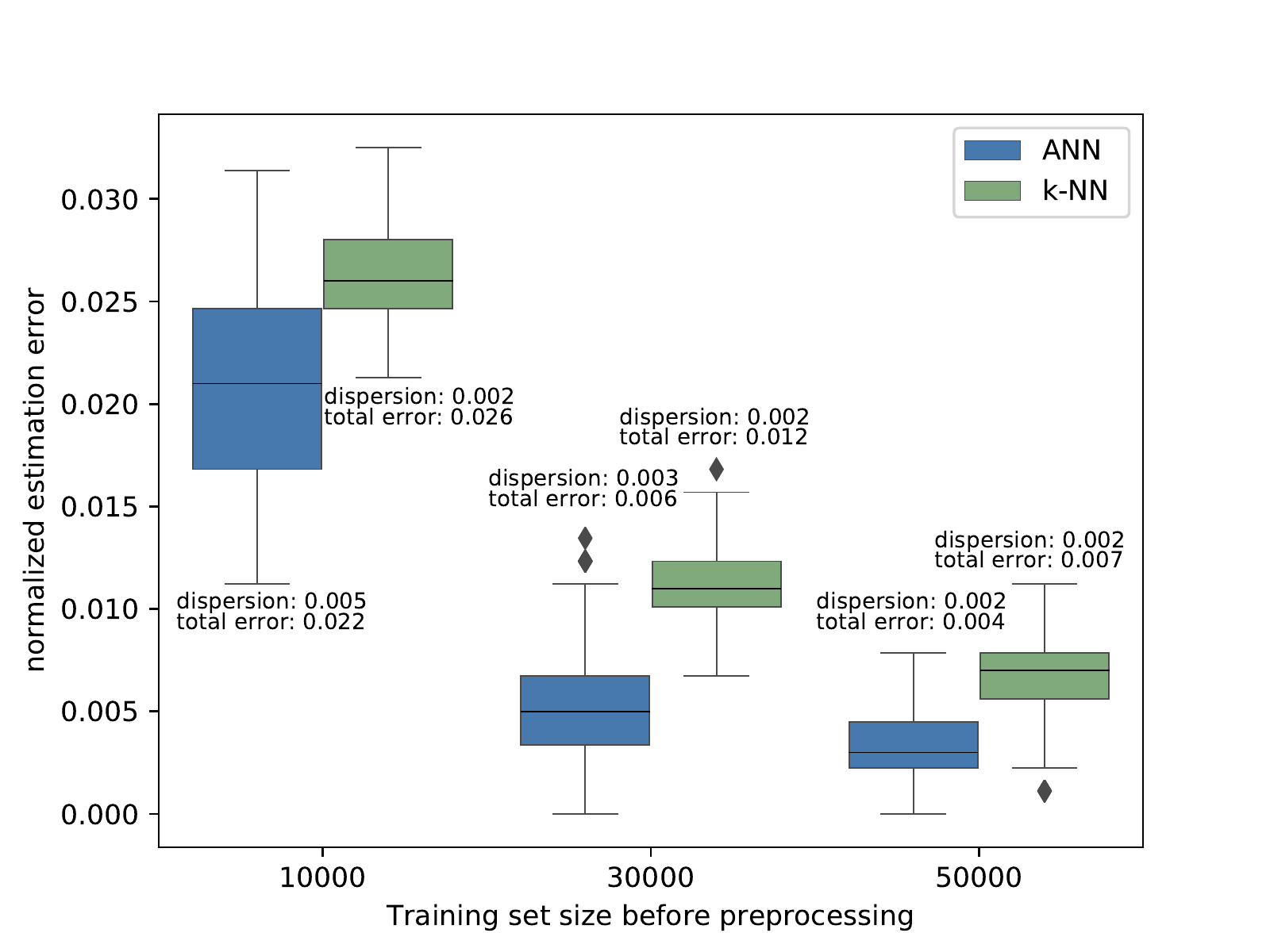}}
	\caption{Multiple guesses scenario, magnification of the part of \autoref{fig:mult_guess_est_freq_knn_ann} on the k-NN and ANN estimations.}
	\label{fig:mult_guess_est_knn_ann}
\end{figure}
It is worth noting that, in this experiment,  the pre-processing of each sample $(x,y)$ creates $9$ samples (matching $y$ with each possible $w \in \Wcal$ such that $w = \{x, x^{\prime}\}$ with $x^{\prime} \neq x$). This means that the sample size of the pre-processed sets is $9$ times the size of the original ones. 
For functions $g$ representing more than $2$ tries this pre-processing method may create training sets too large. In the next section we will see that the alternative channel pre-processing method can be a good compromise. 
%
\subsubsection{Channel pre-processing} (cfr. Section~\ref{channel_preproc}) 
The results for hannel pre-processing are reported in \autoref{fig:mult_guess_est_freq_channel_preproc}. As  we can see, the estimation is worse than with data pre-processing, especially for the  k-NN algorithm. This was to be expected, since the random sampling to match the effect of $g$ introduces a further level of confusion, as explained in Section~\ref{channel_preproc}.  
Nevertheless, these results are still  much better than the frequentist case, so it is a good alternative method to apply when the use of data pre-processing would generate validation sets that are too large, which could be the case when the matrix representing $g$ contains large numbers with a small common divider. 
\begin{figure}[!htb]
	\centering
	\makebox[\linewidth][c]{\includegraphics[scale=.34]{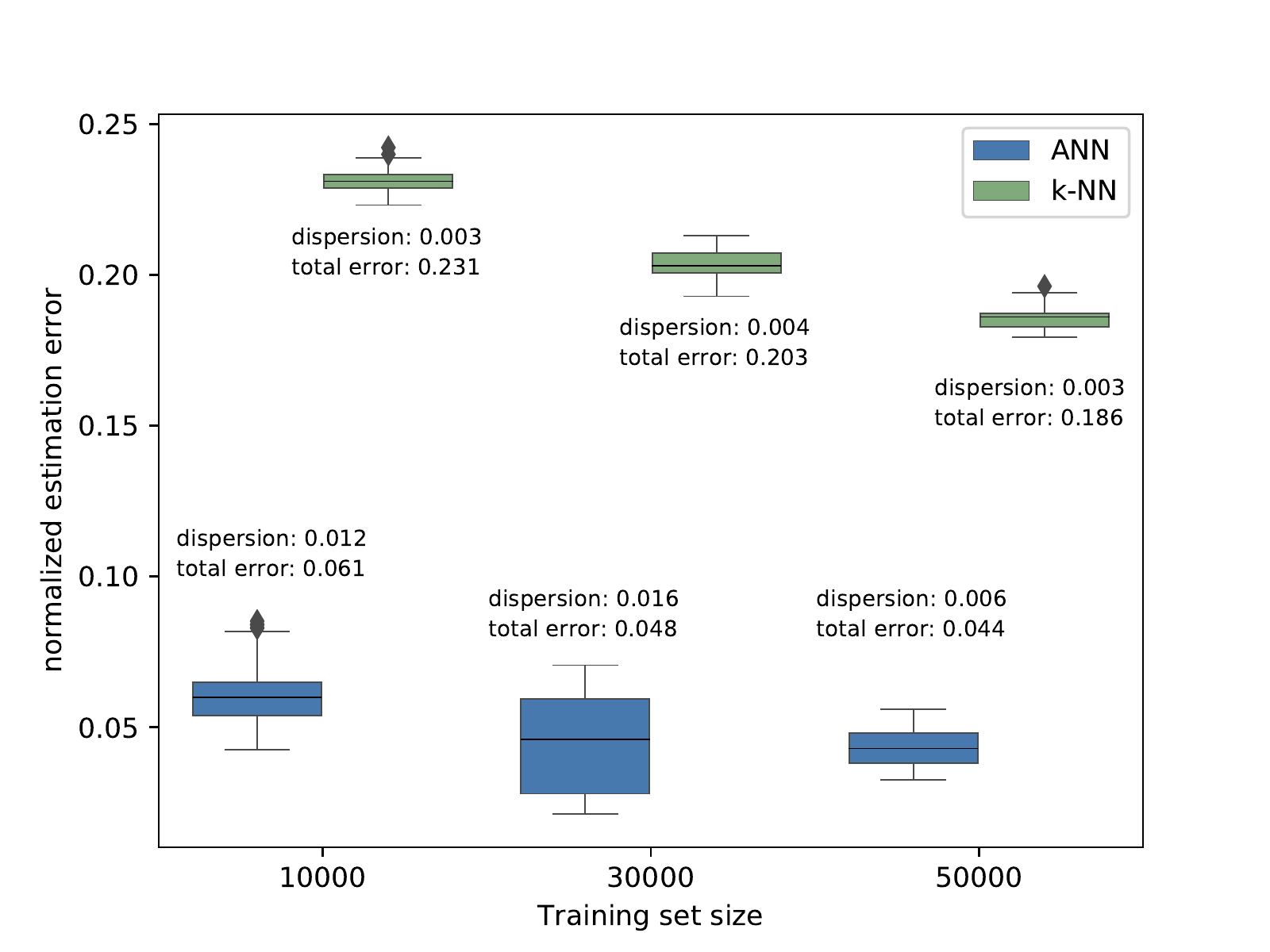}}
	\caption{Multiple guesses scenario, k-NN and ANN estimation with channel pre-processing}
	\label{fig:mult_guess_est_freq_channel_preproc}
\end{figure}
Additional plots  are provided in Appendix~\ref{extra_plots}.

\subsection{Experiment 2: location privacy }
In this section we estimate the \gv\ of a typical  system for  location privacy protection. 
We use data from the open Gowalla dataset~\cite{Gowalla:2011:Online}, which contains the coordinates of  users' check-ins. 
In particular, we consider a square region  in San Francisco, USA, centered in  (latitude,  longitude) =  ($37.755$, $-122.440$), and with $5$Km long sides. In this area Gowalla contains $35162$ check-ins. 

We discretize the region  in 400 cells of 250m long side, and we assume that the adversary's goal is to discover the cell of a check-in. The frequency of the Gowalla check-ins per cell is represented by the heat-map in \autoref{fig:htmp}. 
From these frequencies we can directly estimate the distribution representing the prior  of the secrets~\cite{ElSalamouny:20:EuroSP}. 
\begin{figure}[!htb]
	\centering
	\includegraphics[scale=.10]{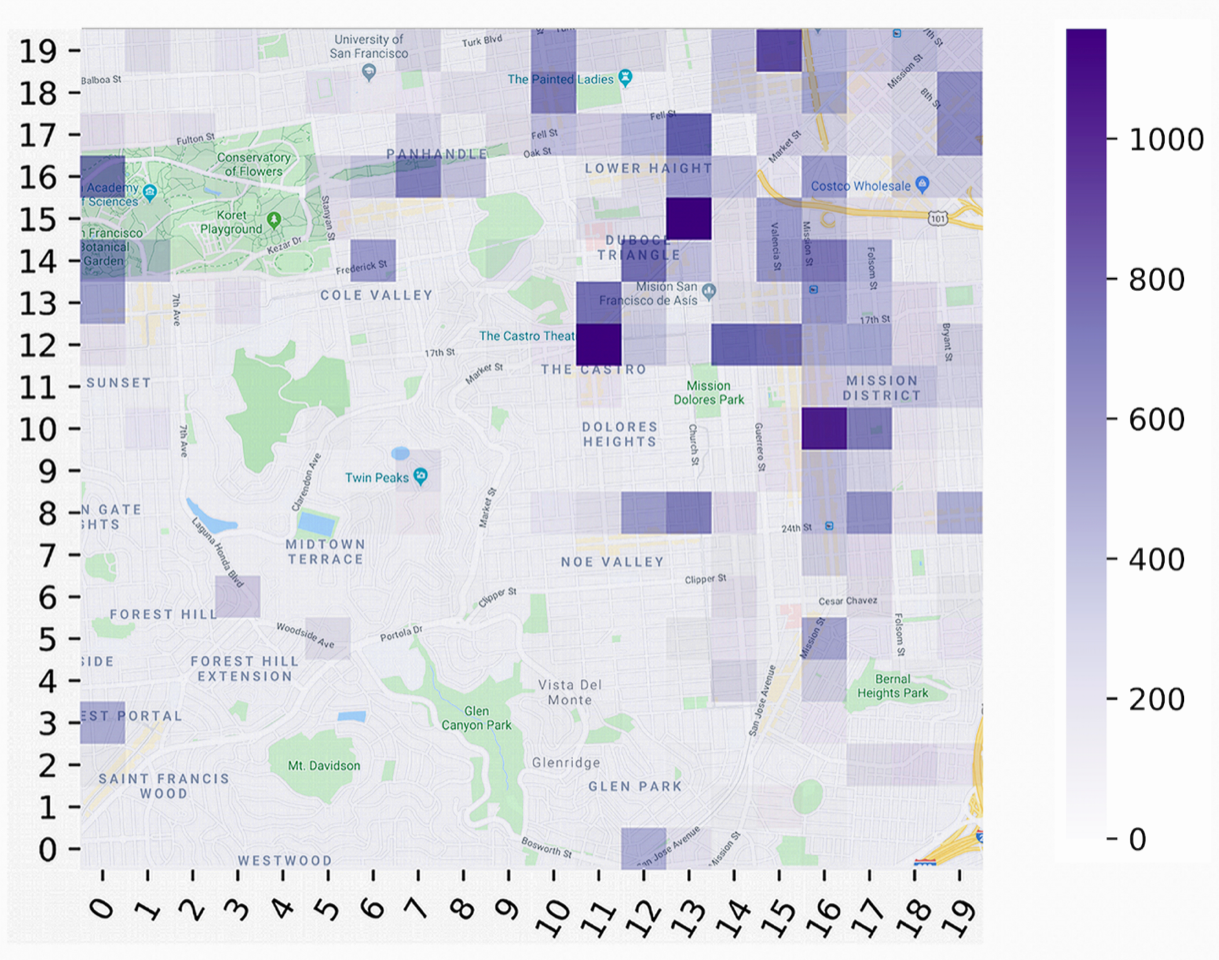}
	\caption{Heat-map representing the Gowalla check-ins distribution in the area of interest; the density of check-ins in each cell is reported in the color bar on the side}
	\label{fig:htmp}
\end{figure}

The channel $C$ that we consider here is the optimal obfuscation mechanism proposed in~\cite{Shokri:14:CCS} to protect location privacy under a utility constraint. 
We recall that the framework  of~\cite{Shokri:14:CCS}  assumes two loss functions, 
one for utility and one for privacy. The utility loss of a mechanism, for a certain prior, is defined as the expected utility loss of the noisy data generated according to the prior and the  mechanism. 
The privacy loss is defined in a similar way, except that we allow the attacker to ``remap'' the noisy data so to maximize the privacy  loss.
For our experiment, we use the Euclidean distance as loss function for the utility, and the $g$ function defined in the next paragraph as loss function for the privacy. 
For further details on the construction of the optimal mechanism we refer to~\cite{Shokri:14:CCS}. 


We define  $\Xcal, \Ycal$ and $ \Wcal$ to be the set of the cells. 
Hence $|\Xcal| = |\Ycal|=|\Wcal| = 400$. 
We consider a  $g$ function representing the precision of the guess in terms of Euclidean distance:  
the   closer   the guess is to the real location, the higher  is the attacker's gain. 
Specifically,  our $g$ is illustrated in \autoref{fig:diamond}, where the central cell represents the real location $x$. 
For a generic ``guess'' cell $w$, the number written in $w$ represent $g(w,x)$.
\footnote{Formally, $g$ is defined as 
$
g(w, x) = \lfloor(\gamma \exp(-\alpha d(w,x) / l))\rceil,
$
where $\gamma=4$ is the maximal gain, $\alpha=0.95$ is a normalization coefficient to control the skewness of the exponential, $d$ is the euclidean distance and $l=250$ is the length of the cells' side. The symbol $\lfloor\cdot\rceil$ in this context represents the rounding to the closest integer operation.}
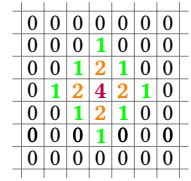
\begin{figure}[!htb]
\centering
\begin{tikzpicture}[scale=0.60]
\draw[step=0.5cm,gray,very thin] (-3.2, -3.2) grid (0.7,0.7);
\node at (-1.25,-1.25) {\color{purple}{\bf 4}};
\node at (-1.25,-0.75) {\color{orange}{\bf 2}};
\node at (-1.25,-1.75) {\color{orange}{\bf 2}};
\node at (-1.75,-1.25) {\color{orange}{\bf 2}};
\node at (-0.75,-1.25) {\color{orange}{\bf 2}};
\node at (-0.75,-0.75) {\color{green}{\bf 1}};
\node at (-0.75,-1.75) {\color{green}{\bf 1}};
\node at (-1.75,-0.75) {\color{green}{\bf 1}};
\node at (-1.75,-1.75) {\color{green}{\bf 1}};
\node at (-1.25,-0.25) {\color{green}{\bf 1}};
\node at (-1.25,-2.25) {\color{green}{\bf 1}};
\node at (-2.25,-1.25) {\color{green}{\bf 1}};
\node at (-0.25,-1.25) {\color{green}{\bf 1}};
\node at (-1.25, 0.25) {0};
\node at (-1.25,-2.75) {0};
\node at (-2.75,-1.25) {0};
\node at ( 0.25,-1.25) {0};
\node at (-0.25,-0.25) {0};
\node at (-0.25,-2.25) {0};
\node at (-2.25,-0.25) {0};
\node at (-2.25,-2.25) {0};
\node at ( 0.25, 0.25) {0};
\node at ( 0.25,-2.75) {0};
\node at (-2.75, 0.25) {0};
\node at (-2.75,-2.75) {0};
\node at (-0.75, 0.25) {0};
\node at (-0.25, 0.25) {0};
\node at (-1.75, 0.25) {0};
\node at (-2.25, 0.25) {0};
\node at ( 0.25, -0.25) {0};
\node at (-0.75, -0.25) {0};
\node at (-1.75, -0.25) {0};
\node at (-2.75, -0.25) {0};
\node at ( 0.25, -0.75) {0};
\node at (-0.25, -0.75) {0};
\node at (-2.25, -0.75) {0};
\node at (-2.75, -0.75) {0};
\node at ( 0.25, -1.75) {0};
\node at (-0.25, -1.75) {0};
\node at (-2.25, -1.75) {0};
\node at (-2.75, -1.75) {0};
\node at ( 0.25, -2.25) {0};
\node at (-0.75, -2.25) {0};
\node at (-1.75, -2.25) {0};
\node at (-2.75, -2.25) {0};
\node at ( 0.25, -2.25) {0};
\node at (-0.75, -2.25) {0};
\node at (-1.75, -2.25) {0};
\node at (-2.75, -2.25) {0};
\node at (-0.25, -2.75) {0};
\node at (-0.75, -2.75) {0};
\node at (-1.75, -2.75) {0};
\node at (-2.25, -2.75) {0};
\end{tikzpicture}
\caption{The ``diamond'' shape created by the gain function around the real secret; the values represent the gains assigned to each guessed cell $w$ when $x$ is the central cell.}
\label{fig:diamond}
\end{figure}

In this experiment we consider  training set sizes of  $100$, $1$k and $10$K samples respectively. 
After applying the data pre-processing transformation,    the  size of the resulting datasets is approximately 
$18$ times that of the original one. This was to be expected, since  the sum of the values of $g$ in  
\autoref{fig:diamond} is $20$. Note that this sum and the increase factor in the dataset do not necessarily coincide, because the latter is  also influenced by the prior and by the mechanism. 

\autoref{fig:geo_loc_knn_ann} and \autoref{fig:geo_loc_knn_ann_bis} show the performance of k-NN  and   ANN
for  both data  and   channel pre-processing. As expected, the  data pre-processing method is more precise than the channel pre-processing one, although only slightly. 
The ANN model is also slightly better than the k-NN in most of the cases. 

%
\begin{figure}[!htb]
	\centering
	\makebox[\linewidth][c]{\includegraphics[scale=.34]{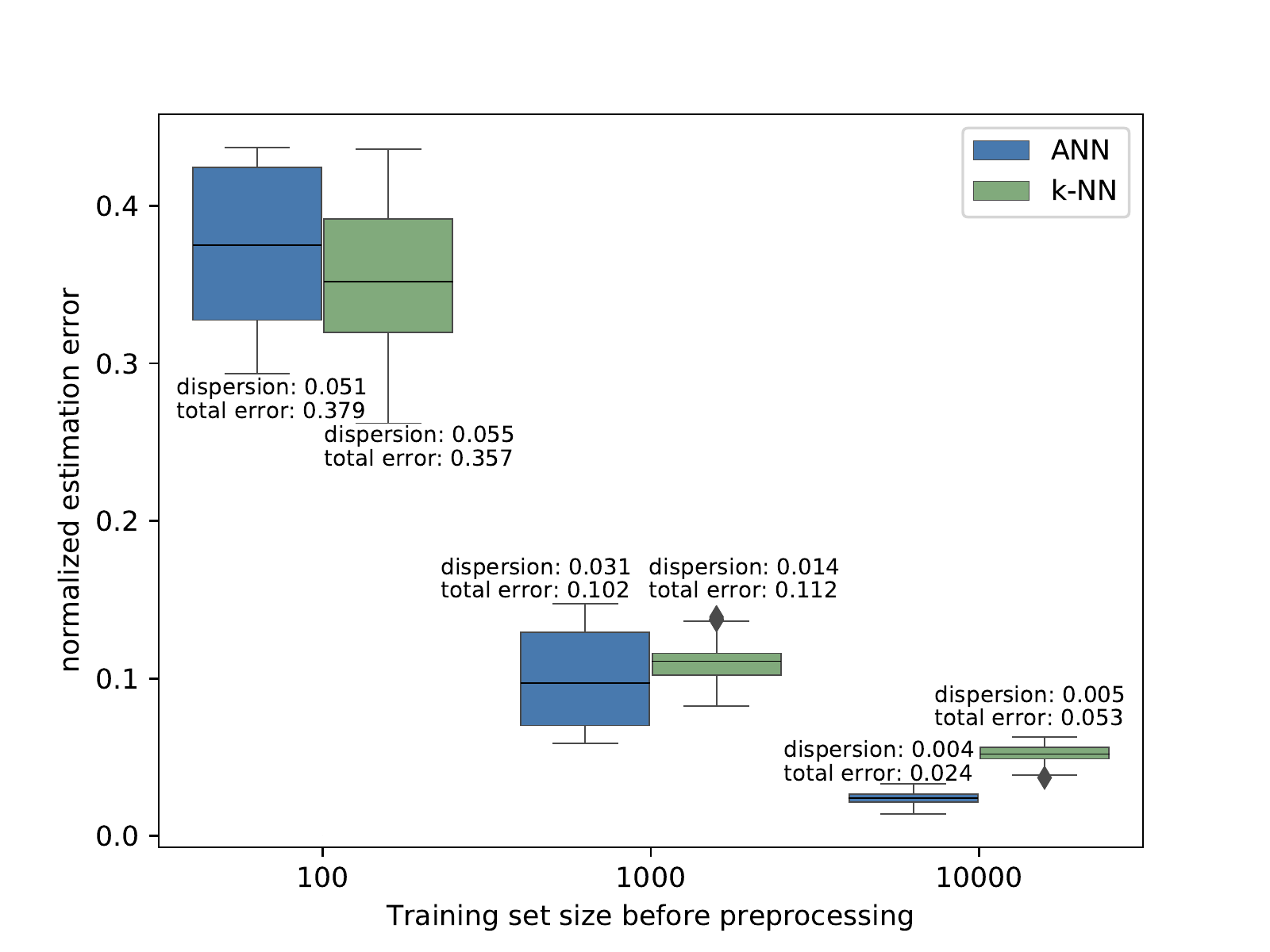}}
	\caption{Location privacy scenario, k-NN and ANN estimation with data pre-processing}
	\label{fig:geo_loc_knn_ann}
\end{figure}
\begin{figure}[!htb]
	\centering
	\makebox[\linewidth][c]{\includegraphics[scale=.34]{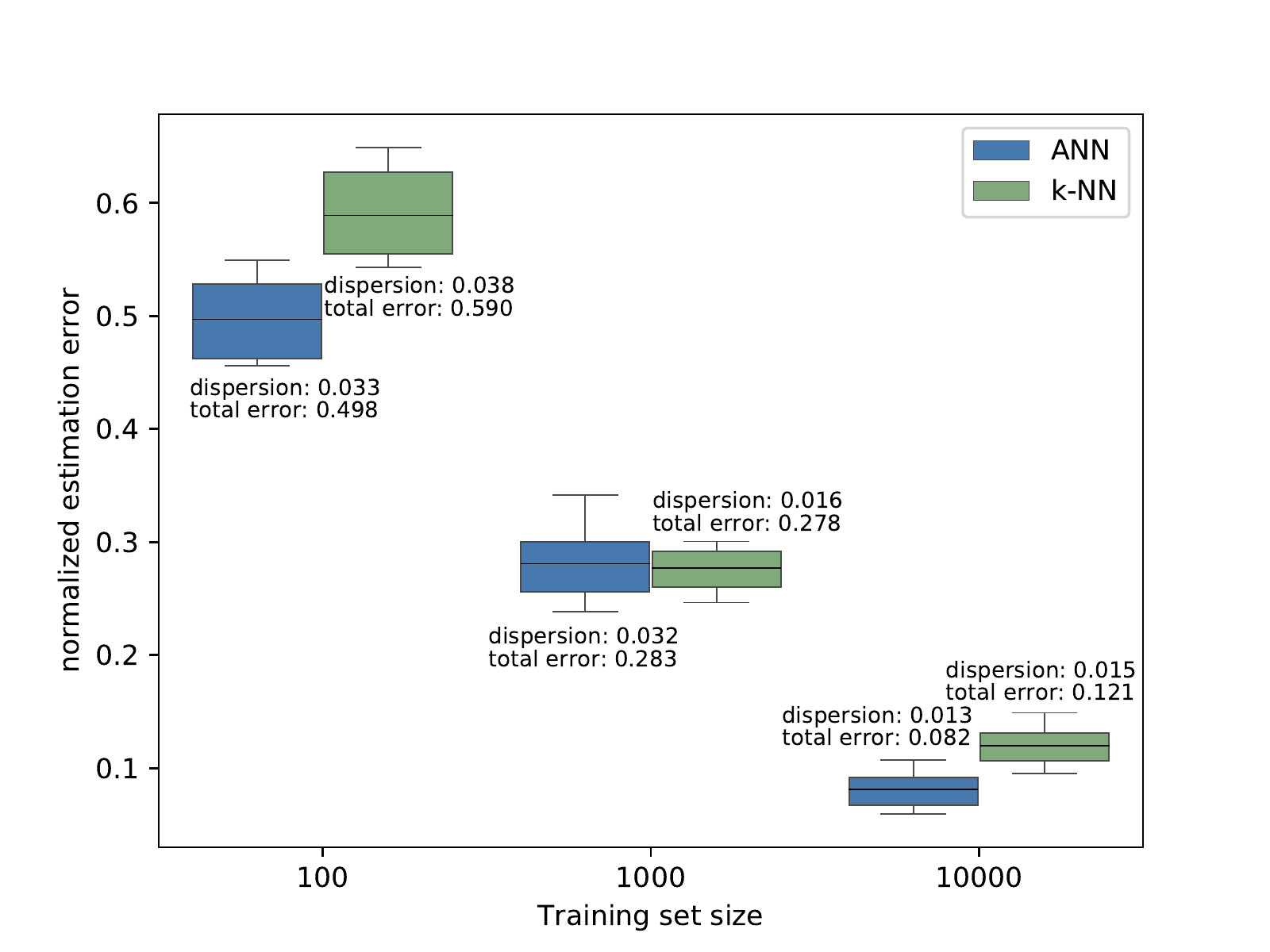}}
	\caption{Location privacy scenario, k-NN and ANN estimation with channel pre-processing}
	\label{fig:geo_loc_knn_ann_bis}
\end{figure}

\subsection{Experiment 3: differential privacy}
In this section we consider a  popular application of DP: individual data protection in medical datasets from which we wish to  extract some statistics via   counting queries. 
%
It is well known that the release of exact information from the database, even if it is only the result of statistical computation on the aggregated data, can leak sensitive information about the individuals. The solution proposed by DP is to obfuscate the released information with carefully crafted noise that obeys certain properties. The goal is to make it difficult to detect whether a certain individual is in the database or not. In other words, two adjacent datasets (i.e., datasets that differ only for the presence of one individual) should have almost the same likelihood to produce a certain observable  result. 

In our experiment, we consider the Cleveland heart disease dataset~\cite{Cleveland:1989:Online} which consist of $303$ records of patients with a medical heart condition. Each condition is labeled by an integer number  indicating the severity: from $0$,  representing a healthy patient, to $4$,   representing a patient whose life is at risk. 

We assume that the hospital publishes the histogram of the patients' records, i.e., the number of occurrences for each label. 
To protect  the patients' privacy, the hospital sanitizes  the histogram by adding  geometric noise (a typical DP mechanism) to each label's count. 
More precisely, if the count of a label is $z_1$, the probability that the corresponding published number is $z_2$ 
is defined by the distribution in \eqref{geom_distr_multiguess}, where  $x$ and $y$ are replaced by $z_1$ and $z_2$ respectively, and $r$ is  $1$.
Note that $z_1$, the real count,  is an integer between $0$ and $303$, while  its noisy version $z_2$ ranges on all integers. 
As for the  value of $\nu$, in this experiment we set it to  $1$.

The secrets space $\Xcal$ is set to be a set of two elements: the full dataset, and the dataset with  one  record less. 
These are  adjacent  in the sense of DP, and, as customary in DP, we assume that the record on which the two databases differ is the target of the adversary. 
The observables space $\Ycal$ is the set of the $5$-tuples produces by the noisy counts of the $5$ labels. 
$\Wcal$ is set to be the same as $\Xcal$.

We assume that the adversary  is  especially interested in finding out whether the patient has a serious condition. 
The function $g$ reflects this preference by assigning higher value to higher labels. Specifically, we set $\Wcal = \Xcal$ and                                                                                                                                  
\begin{equation}
    g(w, x) = \begin{cases} 
        0, & \mbox{if } w\neq x \\ 1, & \mbox{if } w = x \land x\in\{0,1,2\}\\ 2, & \mbox{if } w = x \land x\in\{3,4\}, 
    \end{cases}
\end{equation}

For the estimation, we consider  10K, 30K and 50K samples for the training sets, and  50K samples for the validation set. %
\begin{figure}[!htb]
	\centering
	\makebox[\linewidth][c]{\includegraphics[scale=.34]{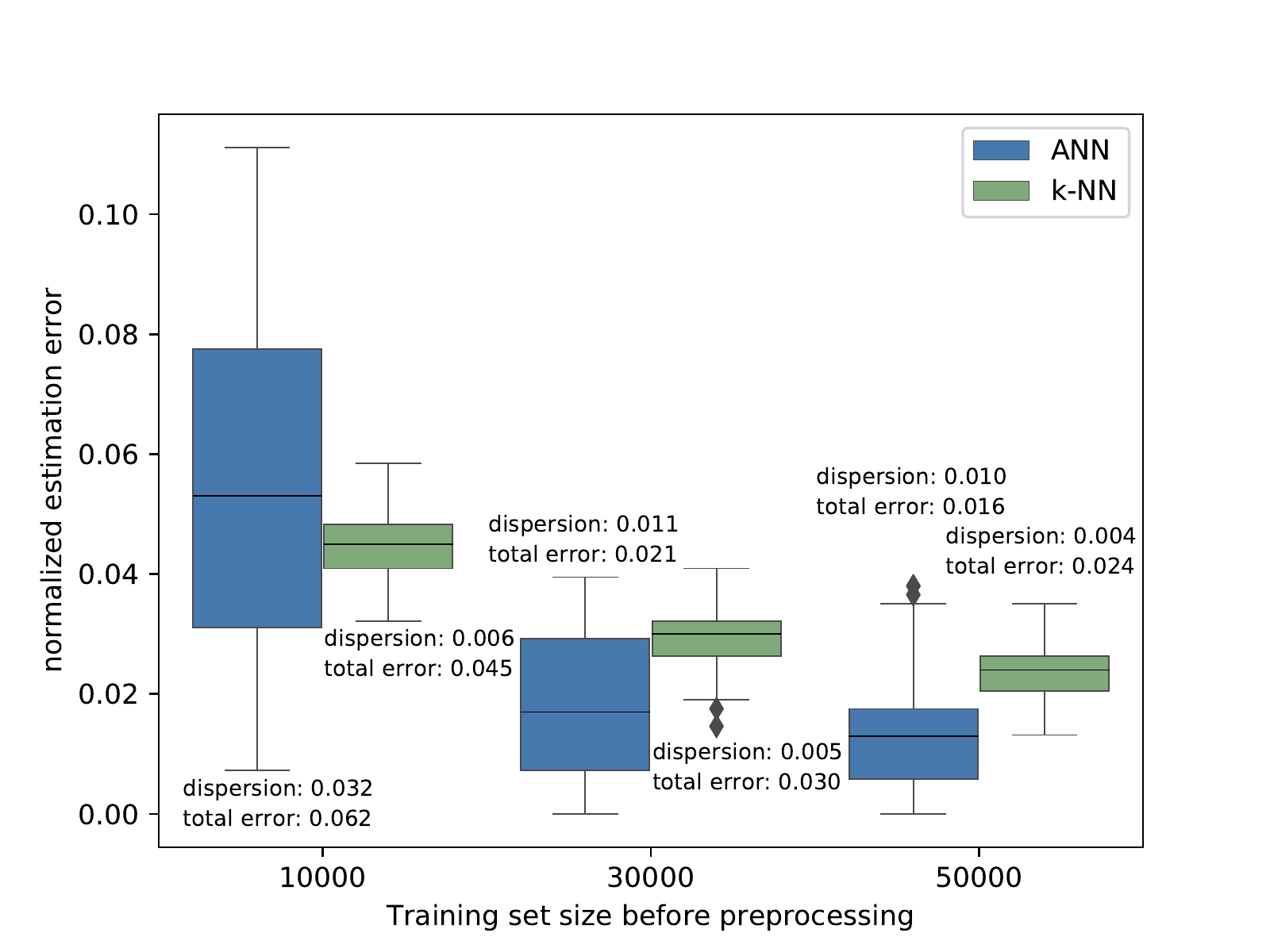}}
	\caption{Differential privacy scenario, k-NN and ANN estimation with data pre-processing}
	\label{fig:dp_knn_ann}
\end{figure}
\begin{figure}[!htb]
	\centering
	\makebox[\linewidth][c]{\includegraphics[scale=.34]{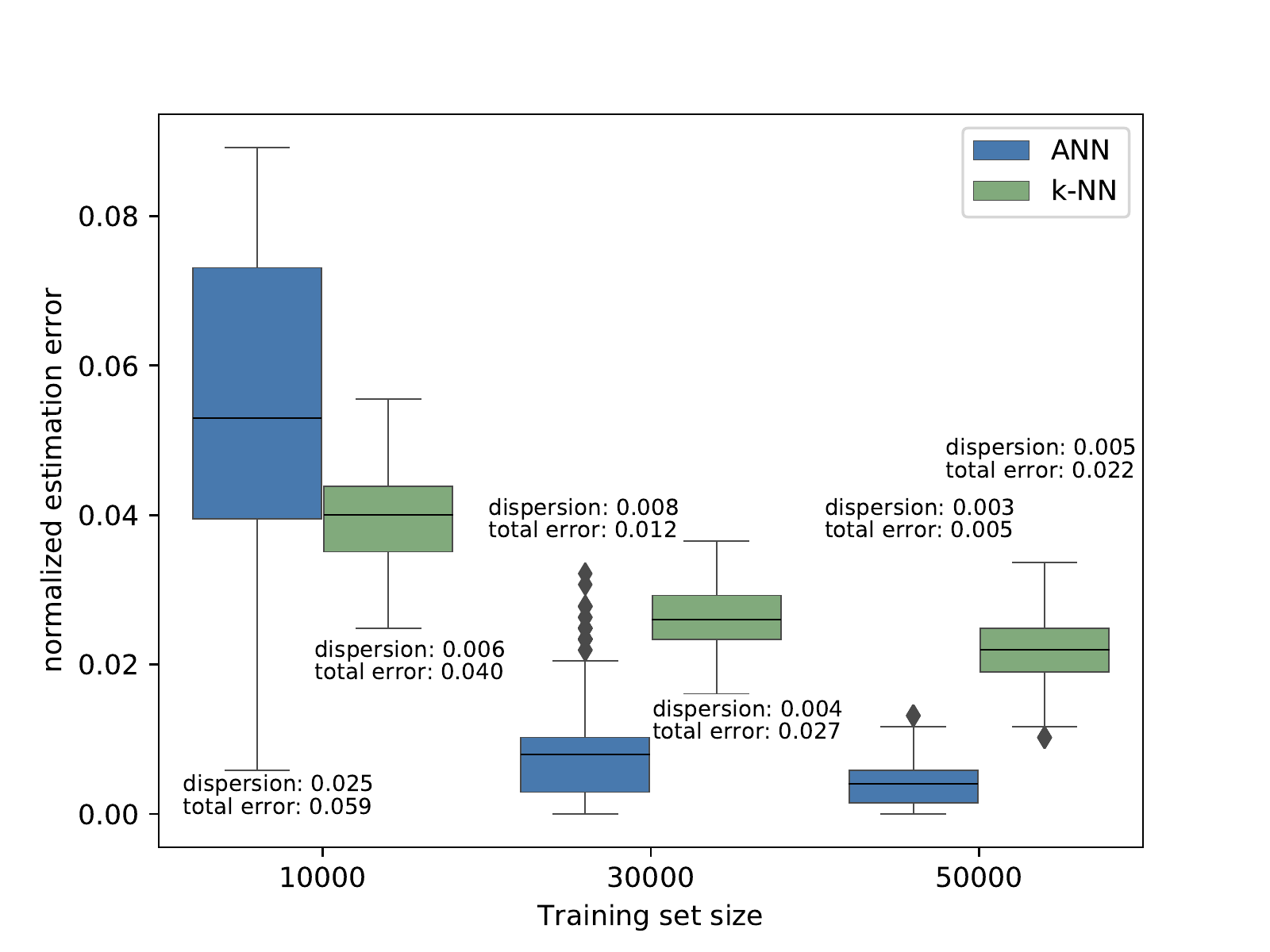}}
	\caption{Differential privacy scenario, k-NN and ANN estimation with channel pre-processing}
	\label{fig:dp_knn_ann_bis}
\end{figure}
For the experiments with k-NN we choose the Manhattan distance, which  is typical for DP\footnote{The Manhattan distance on histograms corresponds to the total variation distance on the 
distributions resulting from the normalization of these histograms.}. 
In the case of data pre-processing the size of the transformed training set  \trainingsetdtpreprocsub  is about $1.2$ times  the original     \trainingsetsub. 
The performances of the data and channel  pre-processing  are shown in Figures \ref{fig:dp_knn_ann} and \autoref{fig:dp_knn_ann_bis} respectively.
Surprisingly, in this case the data pre-processing method outperforms the channel pre-processing one, although only slightly. 
Additional plots, including the results for the frequentist approach, can be found in Appendix~\ref{extra_plots}.

\subsection{Experiment 4: password checker}
In this experiment we consider a password checker, namely a program that tests whether 
a given string corresponds to the password stored in the system. 
We assume that string and password are sequences of $128$ bits, an that the program is ``leaky'', in the sense that 
it checks the two sequences bit by bit and it stops checking as soon as it finds a mismatch, reporting failure. 
It is well known that this opens the way to a timing attack (a kind of side-channel attack), 
so we assume that the system tries to mitigate the threat  by adding some random delay, 
sampled from a Laplace distribution and then bucketing the reported time in 128 bins corresponding to the positions in the sequence
(or equivalently, by sampling the delay from a Geometric distribution, cfr. \autoref{geom_distr_multiguess}).
Hence the channel $C$ is a $2^{128} \times 128$ stochastic matrix. 

The typical attacker is an interactive one, which figures out larger and larger prefixes of the password by testing each bit at a time.  
We assume that the attacker has already figured out the first $6$ bits of the sequence and it is trying to figure out the $7$-th.  
Thus the prior $\pi$ is distributed (uniformly, we assume) only on the sequences formed by the known 
$6$-bits prefix and all the possible remaining $122$ bits, while the 
$g$ function assigns $1$ to the sequences whose $7$-th bit agrees with the stored password, and $0$ otherwise. 
Thus $g$ is a \emph{partition gain function}~\cite{Alvim:12:CSF}, and its particularity is that for such  kind of  functions   
data pre-processing and   channel pre-processing  coincide. 
This is because $g(w,x)$ is either $0$ or $1$, so  in both cases
we generate exactly one pair $(w,y)$ for each pair $(x,y)$ for which $g(w,x)=1$. 
Note that  in this case the data pre-processing transformation does not 
increase the training set, and the channel pre-processing 
transformation does not  introduce any additional noise. 
The $RC$ matrix (cfr. Section~\ref{data_preproc}) is a $2\times 128$ stochastic matrix. 
\begin{figure}[!htb]
	\centering
	\makebox[\linewidth][c]{\includegraphics[scale=.34]{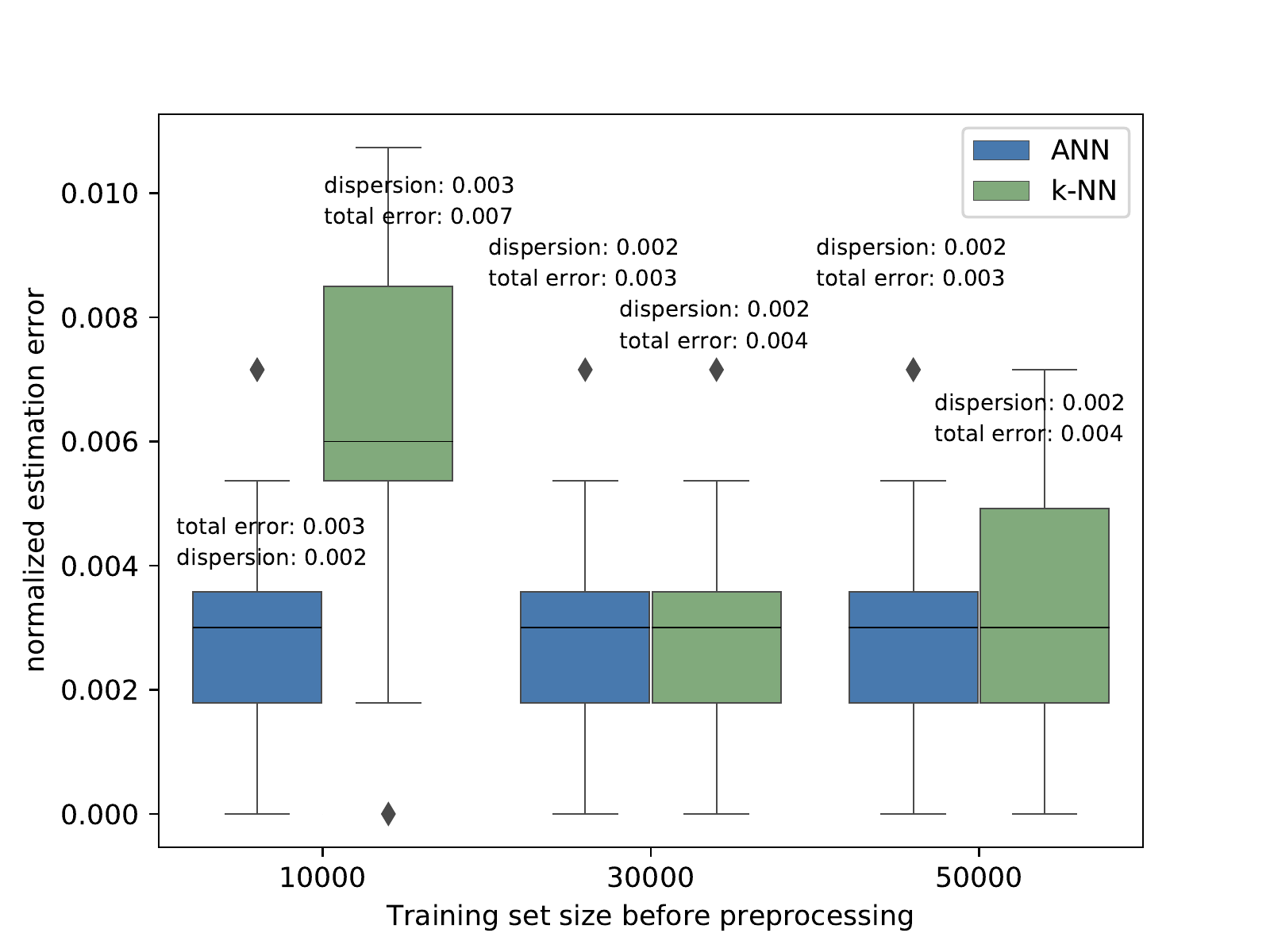}}
	\caption{Password checker scenario, k-NN and ANN estimation with data and channel pre-processing}
	\label{fig:psw_knn_ann}
\end{figure}
The experiments are done with  training sets of 10K, 30K and 50K samples.
The results are reported in Figure~\ref{fig:psw_knn_ann}. 
We note that the estimation error is quite small, especially in the ANN case. 
This is because the learning problem is particularly simple since, by considering the $g$-leakage and the 
preprocessing, we have managed to reduce the problem to learning a function of type $\Ycal \rightarrow \Wcal$,  
rather than $\Ycal \rightarrow \Xcal$, and there is a huge difference in size between 
$\Wcal$ and $\Xcal$ (the first is $2$  and the latter is $2^{128}$). 
Also the frequentist approach does quite well (cfr. Appendix~\ref{extra_plots}) , and this is because  $\Ycal$ is small. 
With a finer bucketing (on top of the Laplace delay), or no bucketing at all, 
we expect that the difference between the accuracy of the frequentist and of the ML estimation would be much larger. 

\rev{
\subsection{Discussion}
In almost all the experiments, our method gives much better result than the frequentist approach 
(see \autoref{fig:mult_guess_est_freq_knn_ann} and the other plots in the appendix \autoref{extra_plots}). 
The exception of the second experiment  can be explained  by the fact that 
the observable space is not very large, which is a scenario where the frequentist approach can be successful because the available data is enough to estimate the real distribution.  
In general, with the frequentist approach there is no real learning, therefore, if $|\Ycal|$ is large and the training set contains few samples, we cannot make a good guess with the observables never seen before \cite{Cherubin:19:SP}. 
In ML, on the contrary, we can still make an informed guess, as ML models are able to generalize from samples, especially when the Bayes error is small. 
}

\rev{
We observe that  ANN outperforms the k-NN in all experiments. This is because usually ANN models are better at generalizing, and hence provide better classifiers. In particular, k-NN are not very good when the distributions are not smooth with respect to the metric with respect to which the neighbor relation is evaluated~\cite{Cherubin:19:SP}.
}
\rev{
The data pre-processing method gives better results, than the channel pre-processing in all experiments except  the third one (DP), in which the difference is very small. The main advantage of using the channel pre-processing method is when the gain function is such that the data pre-processing would generate a set too large, as explained in Section~\ref{sec:ComparisonDataChannel}.
}

\rev{
The experiments show that our method is not too sensitive to the size of $|\Ycal|$. On the other hand, the size of $|\Xcal|$ is important, because the ML classifiers are in general more precise (approximate better the ideal Bayes classifier) when the number of classes are small. This affects the estimation error of both the  Bayes vulnerability and the \gv. However, for the latter there is also the additional problem of the magnification  due to the $g$. To better understand this point, consider a modification of the  last experiment, and assume that 
the password checker is not leaky, i.e., the observables are only $\mathit{fail}$ or $\mathit{success}$. A pair $(x, \emph{success})$ would have a negligible 
probability of appearing in the training set, hence our method, most likely, would estimate the vulnerability to be $0$. This is fine if we are trying to estimate the Bayes vulnerability, which is also negligible. But  the \gv\ may not be negligible, in particular if we consider a $g$ that gives an enormous gain for the success case. If we can ensure that all the pairs $(x, y)$  are represented in the training set in proportion to their probability in $P_{XY}$, then the above "magnification" in \gv\ is not a problem,  because our method will ensure the also  the pairs $(w, y)$ would be magnified (with respect to the the pairs $(w, y)$) in the same proportion. 
}


\section{Conclusion and future work}

We have proposed an approach to estimate the \gv\ of a system under the black-box assumption, using machine learning. 
The basic idea is to reduce the problem to learn the Bayes classifier on a set of pre-processed training data, and we have 
proposed two techniques for this transformation, with different advantages and disadvantages. 
We have then evaluated our approach on various scenarios, showing favorable results. 
We have compared our approach to the frequentist one, showing that the performances are similar on small 
observable domains, while ours performs better  on large ones. This is in line with what already observed in \cite{Cherubin:19:SP} 
for the estimation of the Bayes error. 

As future work, we plan to test our framework on more real-life scenarios such as the web fingerprinting attacks \cite{Cherubin:17:POPETS,Cherubin:17:POPETS1} 
and the AES cryptographic algorithm \cite{Cherisey:19:CHES}. We also would like to consider the more general case, often considered in Information-flow security,  
of channels that have both ``high'' and ``low'' inputs, where the first are the secrets and the latter are data visible to, or even controlled by, the adversary.  
Finally, a more ambitious goal is to use our approach to 
minimize the $g$-vulnerability of complex systems, 
using a GAN based approach, along the lines of \cite{Romanelli:20:CSF}. 


\begin{acks}
\rev{
This research was supported by DATAIA ``Programme d'Investis\-sement d'Avenir'' (ANR-17-CONV-0003). It was also supported by the ANR project REPAS, and by the Inria/DRI project LOGIS. The work of Catuscia Palamidessi was supported by the project HYPATIA, funded by  the European Research Council (ERC) under the European Union's Horizon 2020 research and innovation program, grant agreement n. 835294. This project has received funding from the European Union’s Horizon 2020 research and innovation programme under the Marie Skłodowska-Curie grant agreement No 792464.}
\end{acks}

\bibliographystyle{ACM-Reference-Format}
\bibliography{files/biblio}


\appendix
\appendix
\section{Auxiliary Results}
\label{AppA}
\begin{proposition}[\rev{Bernstein's inequality \cite{Boucheron:13:CI}}]\label{lemma-Bernstein}
\rev{Let\linebreak ${Z_1,\dots,Z_n\sim Z}$ be i.i.d. random variables such that $Z\in[ a,b]$ almost surely and let $S_n=\frac{1}{n}\sum_{i=1}^n Z_i-\mathbb{E}\left[Z\right]$ and 
$v= \text{Var}(Z)$ the variance of $Z$. Then, for any $\varepsilon>0$, we have: 
\begin{align}
    \label{bernstein_bound}
    \mathbb{P}\left[ S_n \geq \varepsilon\right]\leq \exp\left(-\frac{n\, \varepsilon^2}{2\,v+\nicefrac{2\left(b-a\right)\varepsilon}{3}}\right).
\end{align}
Compared to the Hoeffding's inequality, it is easy to check that, for regimes where $\varepsilon$ is small, Bernstein's inequality offers tighter bounds for $v \ll (b-a)^2$.}
\end{proposition}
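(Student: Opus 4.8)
This is a classical concentration bound, so the plan is to give the standard proof via the Chernoff (exponential Markov) method, sketching the three ingredients rather than reproducing every line. First I would center the variables: put $Y_i \defsym Z_i - \E[Z]$, so that the $Y_i$ are i.i.d., centered, bounded with $|Y_i|\le b-a$ almost surely, and satisfy $\E[Y_i^2]=v$. Since $\{S_n\ge\varepsilon\}=\{\sum_{i=1}^n Y_i\ge n\varepsilon\}$, for any $\lambda>0$ Markov's inequality applied to $e^{\lambda\sum_i Y_i}$ together with independence gives
\[
\mathbb{P}[S_n\ge\varepsilon]\;\le\;e^{-\lambda n\varepsilon}\,\big(\E[e^{\lambda Y_1}]\big)^n .
\]

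The core of the argument is a bound on the moment generating function. Expanding $e^{\lambda Y_1}=1+\lambda Y_1+\sum_{k\ge 2}\lambda^k Y_1^k/k!$ (the interchange with $\E$ being harmless because $Y_1$ is bounded), using $\E[Y_1]=0$ and $|\E[Y_1^k]|\le\E[|Y_1|^k]\le(b-a)^{k-2}\E[Y_1^2]=(b-a)^{k-2}v$ for $k\ge 2$, and then the elementary inequality $k!\ge 2\cdot 3^{k-2}$, the tail series is dominated by a geometric series and collapses to $\sum_{k\ge 2}x^k/k!\le\tfrac{x^2/2}{1-x/3}$ for $0\le x<3$. This yields, for $0<\lambda<3/(b-a)$,
\[
\E[e^{\lambda Y_1}]\;\le\;1+\frac{\lambda^2 v/2}{1-\lambda(b-a)/3}\;\le\;\exp\!\left(\frac{\lambda^2 v/2}{1-\lambda(b-a)/3}\right),
\]
using $1+x\le e^x$. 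Plugging this back gives $\mathbb{P}[S_n\ge\varepsilon]\le\exp\!\big(-\lambda n\varepsilon+\tfrac{n\lambda^2 v/2}{1-\lambda(b-a)/3}\big)$.

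It then remains to choose $\lambda$. Rather than minimizing exactly, I would simply set $\lambda=\varepsilon/(v+(b-a)\varepsilon/3)$, which lies in $(0,3/(b-a))$ for every $\varepsilon>0$; a short substitution shows $1-\lambda(b-a)/3=v/(v+(b-a)\varepsilon/3)$ and the exponent telescopes to exactly $-n\varepsilon^2/(2v+2(b-a)\varepsilon/3)$, which is the claimed bound. The only genuinely delicate point --- the main ``obstacle'', such as it is --- is the numerical constant $3$ in the MGF bound: it is produced precisely by $k!\ge 2\cdot 3^{k-2}$, and replacing that step by a cruder estimate would degrade the constant in the final inequality; everything else is routine bookkeeping. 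Finally, the comparison with Hoeffding's inequality needs no separate argument: Hoeffding's denominator is $(b-a)^2/2$, whereas here it is $2v+2(b-a)\varepsilon/3$, which is smaller as soon as $v\ll(b-a)^2$ and $\varepsilon$ is small.
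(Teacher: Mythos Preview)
Your proof is correct and is the standard Chernoff-method argument for Bernstein's inequality. Note, however, that the paper does not actually prove this proposition: it is stated in Appendix~A as an auxiliary result quoted from \cite{Boucheron:13:CI}, so there is no ``paper's own proof'' to compare against --- your write-up simply supplies what the paper omits by citation.
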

\mrev{\begin{lemma}\label{lemma-inequality} 
Let $\sigma^2=\text{Var}(Z)$ and let $Z$ be a real-valued random variable such that for all $0<t\leq \sigma^2$, 
\begin{equation}
\mathbb{P}(Z\geq t) \leq 2q \exp \left( -\frac{t^2}{r^2}\right).
\end{equation} 
Then,
\begin{equation}
\int_{0}^{\sigma^2} \mathbb{P}(Z\geq t) dt \leq qr\sqrt{\pi}\erf\left(\frac{\sigma^2}{r}\right),
\end{equation} 
where, for large $x$, 
\begin{align}
\label{taylor}
\erf(x) \approx 1 - \frac{\exp(-x^2)}{x\sqrt{\pi}}+ \mathcal{O}\left(x^{-1}\exp(-x^2)\right).
\end{align}
\end{lemma}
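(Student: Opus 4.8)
The plan is a direct substitution of the hypothesized tail bound into the integral, followed by a change of variables that turns the resulting integral into the standard Gaussian integral defining $\erf$. Since the bound $\mathbb{P}(Z\geq t) \leq 2q\exp(-t^2/r^2)$ is assumed to hold for every $t$ in the range of integration $(0,\sigma^2]$, monotonicity of the integral immediately gives
\begin{equation}
\int_{0}^{\sigma^2}\mathbb{P}(Z\geq t)\,dt \;\leq\; \int_{0}^{\sigma^2} 2q\exp\!\left(-\frac{t^2}{r^2}\right)dt .
\end{equation}

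Next I would substitute $\mu = t/r$, so that $dt = r\,d\mu$ and the limits become $0$ and $\sigma^2/r$, yielding
\begin{equation}
\int_{0}^{\sigma^2} 2q\exp\!\left(-\frac{t^2}{r^2}\right)dt \;=\; 2qr\int_{0}^{\sigma^2/r}\exp(-\mu^2)\,d\mu .
\end{equation}
Recalling the definition $\erf(\theta)=\frac{2}{\sqrt{\pi}}\int_{0}^{\theta}\exp(-\mu^2)\,d\mu$ stated in the excerpt, the right-hand side equals $2qr\cdot\frac{\sqrt{\pi}}{2}\erf\!\left(\sigma^2/r\right) = qr\sqrt{\pi}\,\erf\!\left(\sigma^2/r\right)$, which is exactly the claimed bound. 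For the asymptotic expansion \eqref{taylor}, I would simply invoke the classical asymptotic series for the complementary error function (obtainable by repeated integration by parts of $\int_{x}^{\infty}\exp(-\mu^2)\,d\mu$), since it is a standard fact and not specific to this paper.

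There is essentially no hard step here: the entire argument is a one-line inequality plus a routine change of variables. The only point requiring a modicum of care is making sure the hypothesis is invoked on exactly the interval $(0,\sigma^2]$ over which we integrate, so that the pointwise bound is legitimately usable inside the integral; and, if one wants to be fully rigorous, noting that the integrand $\mathbb{P}(Z\geq t)$ is measurable (indeed monotone non-increasing) so the integral is well defined. No obstacle of substance is anticipated.
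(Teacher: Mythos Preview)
Your proposal is correct and matches the paper's own proof essentially line for line: the paper simply bounds the integrand by the hypothesis, evaluates $\int_0^{\sigma^2}2q\exp(-t^2/r^2)\,dt = qr\sqrt{\pi}\,\erf(\sigma^2/r)$, and remarks that \eqref{taylor} is the standard asymptotic (Taylor) expansion of $\erf$. Your write-up is in fact more explicit about the change of variables and the measurability point than the paper's one-line argument.
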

\begin{proof}
\begin{align*}
 \int_{0}^{\sigma^2} \mathbb{P}(Z\geq t) dt &\leq  \int_{ 0}^{\sigma^2} 2q \exp \left( -\frac{t^2}{r^2}\right) dt\\
& = qr\sqrt{\pi}\erf\left(\frac{\sigma^2}{r}\right),
\end{align*}
and~\cref{taylor} follows from the Taylor's expansion of the $\erf$ function.
\end{proof}
}


\section{Proofs for the statistical bounds}\label{AppB}

\rev{The following lemma   is a simple adaption of the uniform deviations of relative frequencies from probabilities theorems 
in~\cite{Devroye:96:PTPR}.}  

\begin{lemma}\label{lemmaworstcase}
\rev{The following inequality holds: 
\begin{align}
\label{upperbounds1}
V_g- V(f^\star_m)  & \leq  2 \max\limits_{f \in\Hcal} \big |  \widehat{V}_m(f) - V(f)\big |.
\end{align}}
\end{lemma}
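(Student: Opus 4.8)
The plan is to compare the optimal rule $f^\star$ (the true maximizer of $V$ over $\Hcal$, which exists because $\Hcal$ is finite and $V_g = \max_{f\in\Hcal}V(f) = V(f^\star)$) with the empirical optimal rule $f^\star_m$ (the maximizer of $\widehat{V}_m$), and to bound the gap $V(f^\star) - V(f^\star_m)$ by inserting the empirical functional as an intermediate quantity. First I would write the telescoping decomposition
\begin{equation}
V_g - V(f^\star_m) = \big(V(f^\star) - \widehat{V}_m(f^\star)\big) + \big(\widehat{V}_m(f^\star) - \widehat{V}_m(f^\star_m)\big) + \big(\widehat{V}_m(f^\star_m) - V(f^\star_m)\big).
\end{equation}

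The middle term is $\le 0$: since $f^\star_m$ maximizes $\widehat{V}_m$ over $\Hcal$ and $f^\star\in\Hcal$, we have $\widehat{V}_m(f^\star)\le\widehat{V}_m(f^\star_m)$, so $\widehat{V}_m(f^\star) - \widehat{V}_m(f^\star_m)\le 0$. Hence $V_g - V(f^\star_m) \le \big(V(f^\star) - \widehat{V}_m(f^\star)\big) + \big(\widehat{V}_m(f^\star_m) - V(f^\star_m)\big)$. Each of the two remaining terms is of the form $\pm(V(f) - \widehat{V}_m(f))$ for some $f\in\Hcal$ (namely $f=f^\star$ and $f=f^\star_m$ respectively), and therefore each is bounded above by $\max_{f\in\Hcal}|\widehat{V}_m(f) - V(f)|$. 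Summing the two bounds gives exactly $V_g - V(f^\star_m)\le 2\max_{f\in\Hcal}|\widehat{V}_m(f) - V(f)|$, which is \eqref{upperbounds1}.

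The only mild subtlety — and the one thing worth stating carefully — is that $f^\star_m$ is a random (data-dependent) function, so the bound $\widehat{V}_m(f^\star_m) - V(f^\star_m)\le\max_{f\in\Hcal}|\widehat{V}_m(f) - V(f)|$ must be justified by a uniform (worst-case over $\Hcal$) argument rather than a pointwise one: for \emph{every} realization of the training set, $f^\star_m$ lands in the fixed finite class $\Hcal$, so its deviation is dominated by the maximum deviation over $\Hcal$ evaluated on that same realization. Since $\Hcal$ is finite this maximum is well defined and measurable, and no uniform convergence machinery beyond a union bound (used later, not here) is needed. This is exactly the ``uniform deviations of relative frequencies from probabilities'' trick of \cite{Devroye:96:PTPR}, here applied to the gain functional instead of to $0/1$ loss. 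I do not anticipate a real obstacle; the argument is a two-line insertion plus the observation that the optimality of $f^\star_m$ kills the cross term.
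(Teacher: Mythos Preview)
Your argument is correct and is essentially the same as the paper's: both insert $\widehat{V}_m$ between $V(f^\star)$ and $V(f^\star_m)$, use the optimality of $f^\star_m$ for $\widehat{V}_m$ to drop (or sign) the cross term, and then bound each remaining deviation by the uniform maximum over $\Hcal$. The only cosmetic difference is that you write a three-term telescope and discard the nonpositive middle term, whereas the paper writes a two-term telescope and uses $\widehat{V}_m(f^\star_m)\ge\widehat{V}_m(f^\star)$ directly; the substance is identical.
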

\begin{proof}
\begin{align}
\nonumber
 V_g  &- V(f^\star_m)   =  V(f^\star)  - \widehat{V}_m(f^\star_m)   +   \widehat{V}_m(f^\star_m)   - V(f^\star_m) \label{eq-missing-appedix1}\\
& \leq  V(f^\star)  - \widehat{V}_m(f^\star_m)  +   \big|\widehat{V}_m(f^\star_m)   - V(f^\star_m)\big|\\
& \leq  V(f^\star)  - \widehat{V}_m(f^\star)  +   \big|\widehat{V}_m(f^\star_m)   - V(f^\star_m)\big|\\
& \leq  \big| V(f^\star)  - \widehat{V}_m(f^\star)  \big| +   \big|\widehat{V}_m(f^\star_m)   - V(f^\star_m)\big|\\
& \leq \max\limits_{f\in\Hcal} \big|\widehat{V}_m(f) -V(f)\big| + \max\limits_{f\in\Hcal}  \big|\widehat{V}_m(f)   - V(f)\big|\\ 
& \leq 2\max\limits_{f\in\Hcal} \big|\widehat{V}_m(f) -V(f)\big|.\label{eq-missing-appedix2}
\end{align}
\end{proof}

\subsection{Proof of~\autoref{mainproposition} }
\mainpropositionrestatable*
\label{AppC}
\begin{proof}
\rev{
We first prove   \eqref{eq-Pro1}. 
We have:
\begin{align}
& \mathbb{P}\left(  \big |  \widehat{V}_n(f^\star_m) - V(f^\star_m)\big |  \geq \varepsilon  \right)  \nonumber\\
& = \E_{\mathcal{D}_m\sim P_{XY}^m}  \mathbb{P}\left(   \big |  \widehat{V}_n(f^\star_m) - V(f^\star_m)\big | \geq \varepsilon \,\, | \,\mathcal{D}_m \right) \label{eq-util} \\
& \leq  2\E_{\mathcal{D}_m\sim P_{XY}^m}     \exp \left(-\frac{n\,\varepsilon^2}{2\sigma^2_{f^\star_m}+\nicefrac{2\left(b-a\right)\varepsilon}{3}}\right), \label{eq-propo1-last}
\end{align}
where \eqref{eq-util}  follows from the definition of $\mathbb{P}$:
we consider the expectation of the probability over all training sets $\mathcal{D}_m$ sampled from $P_{XY}$, and then for each $\mathcal{D}_m$ we take the probability over all possible validation sets $\mathcal{T}_n$ sampled again from $P_{XY}$. Consider the series of $(X_i,Y_i)$'s  sampled from $P_{XY}$ that constitute the validation set, and define the
random variables $Z_i= g(f^\star_m(Y_i), X_i)$. 
Note that $\mathcal{D}_m$ determines  $f^\star_m$, hence, for a given $\mathcal{D}_m$ 
the $Z_i$ are i.i.d.. 
The inequality  \eqref{eq-propo1-last} then follows by applying~\autoref{lemma-Bernstein}. Indeed, since the $Z_i$'s  are i.i.d.,  
they all have  the same expectation and the same variance, hence $S_n=\frac{1}{n}\sum_{i=1}^n Z_i -  \mathbb{E}\left[Z \right]  = \widehat{V}_n(f^\star_m)- V(f^\star_m)$, and  $v=\textit{Var}(Z) = \sigma^2_{f^\star_m} $. 
The factor $2$ in front of the $\exp$ is because we consider the absolute value of $S_n$.
}

\rev{
As for the bound \eqref{eq-Pro2}, we have: 
\begin{align}
 \mathbb{P}\left(   V_g - {V}(f^\star_m)   \geq \varepsilon  \right) 
&\leq \mathbb{P}\left(  \max\limits_{f\in\mathcal{H}}\big|  \widehat{V}_m(f) -  {V}(f)   \big|  \geq \varepsilon /2 \right)\label{eq-Propo1-2}\\
&= \mathbb{P}\left(   \bigcup\limits_{f\in \mathcal{H}} \big\{ \big |  \widehat{V}_m(f) - V(f)\big | \geq \varepsilon/2 \big\}  \right) \label{eq-Propo1-2-2}\\
&\leq \sum\limits_{f\in \mathcal{H}} \mathbb{P}\left(   \big |  \widehat{V}_m(f) - V(f)\big | \geq \varepsilon/2  \right) \label{prop1.1-aux} \\
&  \leq \sum\limits_{f\in \mathcal{H}}  2   \exp \left(-\frac{m\,\varepsilon^2}{8\sigma^2_{f}+\nicefrac{4\left(b-a\right)\varepsilon}{3}}\right), \label{eq-Propo1-3}
\end{align}
where \eqref{eq-Propo1-2} follows from~\autoref{lemmaworstcase}, steps \eqref{eq-Propo1-2-2} and \eqref{prop1.1-aux}  are standard, and  \eqref{eq-Propo1-3} follows from the same reasoning that we have applied to prove \eqref{eq-Pro1}. Here we do not take the expectation on the training sets because in each term of the summation the $f$ is fixed. 
}
\end{proof}


\subsection{Proof of~\autoref{thm_estimate_convergence}}
\thmestimateconvergence*
\label{App_thm_1}
\begin{proof}
Observe that 
\begin{align*}
\E \big|V_g-  \widehat{V}_n(f^\star_m)  \big|&  = \E \big|V_g - {V}(f^\star_m)  + {V}(f^\star_m) - \widehat{V}_n(f^\star_m)  \big| \nonumber \\
& \leq   V_g-   \E \big[ {V}(f^\star_m)\big]  \big|+ \E \big|{V}(f^\star_m)-  \widehat{V}_n(f^\star_m)  \big|, 
\end{align*}
which follows from the triangular inequality. 

\rev{
First, let us call $\sigma^2$ the worst case variance defined above which, according to Popoviciu's inequality is upper-bounded by $\nicefrac{(b-a)^2}{4}$. Second, let us consider that one main advantage of deriving bounds from Bernstein's inequality is that it allows allow the upper-bound in~\cref{bernstein_bound} grows as $\exp(-n\varepsilon)$ instead of $\exp(-n\varepsilon^2)$ if $v\leq\varepsilon$. Moreover~\cref{eq-propo1-last} is upper-bounded by $ 2   \exp \left(-\nicefrac{n\,\varepsilon^2}{2\sigma^2+\nicefrac{2\left(b-a\right)\varepsilon}{3}}\right)$. This said we are going to consider two cases:\\
\\i) $\boldsymbol{\sigma^2\leq\varepsilon}$:
	\begin{align}
	&\mathbb{E}|V(f_m^\star) - \widehat{V}_n(f_m^\star)|\leq \int_{\sigma^2}^{\infty}2\exp\left(-\frac{n\varepsilon^2}{2\sigma^2+\nicefrac{2(b-a)\varepsilon}{3}}\right)d\varepsilon\\
	 & \leq \int_{\sigma^2}^{\infty}2\exp\left(-\frac{n\varepsilon}{2+\nicefrac{2(b-a)}{3}}\right)d\varepsilon\\
	 & = \frac{4(1+\nicefrac{(b-a)}{3})}{n}\exp\left(-\frac{n\sigma^2}{2(1+\nicefrac{(b-a)}{3})}\right),
	\end{align}
	and then,
	\mrev{
	\begin{align}
	&  V_g-   \E \big[ {V}(f^\star_m)\big] \leq\int_{\sigma^2}^{\infty}\sum\limits_{f\in \mathcal{H}}2   \exp \left(-\frac{m\,\varepsilon^2}{8\sigma^2_{f}+\nicefrac{4\left(b-a\right)\varepsilon}{3}}\right)d\varepsilon\\
	&\leq\sum\limits_{f\in \mathcal{H}}\int_{\sigma^2}^{\infty}2   \exp \left(-\frac{m\,\varepsilon^2}{8\sigma^2+\nicefrac{4\left(b-a\right)\varepsilon}{3}}\right)d\varepsilon\\
	&\leq\sum\limits_{f\in \mathcal{H}}\int_{\sigma^2}^{\infty}2   \exp \left(-\frac{m\,\varepsilon}{8+\nicefrac{4\left(b-a\right)}{3}}\right)d\varepsilon\\
	&\leq|\mathcal{H}|\frac{8(2+\nicefrac{(b-a)}{3})}{m}\exp\left(-\frac{3m\sigma^2}{4(6+(b-a))}\right),
	\end{align}
ii) $\boldsymbol{\sigma^2 >\varepsilon}$:
	\begin{align}
		&\mathbb{E}|V(f_m^\star) - \widehat{V}_n(f_m^\star)|\leq \int_{0}^{\sigma^2}2\exp\left(-\frac{n\varepsilon^2}{2\sigma^2+\nicefrac{2(b-a)\varepsilon}{3}}\right)d\varepsilon\\ 
		&\leq \int_{0}^{\sigma^2}2\exp\left(-\frac{n\varepsilon^2}{2\sigma^2+\nicefrac{2(b-a)\sigma^2}{3}}\right)d\varepsilon\\ 
		&= \sqrt{\frac{2\sigma^2(1+\nicefrac{(b-a)}{3})}{n}} \sqrt{\pi} \erf\left(\frac{\sigma^2}{\sqrt{\frac{2\sigma^2(1+\nicefrac{(b-a)}{3})}{n}}}\right),
	\end{align}
	considering $r = \sqrt{\frac{2\sigma^2(1+\nicefrac{(b-a)}{3})}{n}}$, $q=1$ and applying~\cref{lemma-inequality}.
	And finally,\\
\begin{align}
		 &V_g-   \E \big[ {V}(f^\star_m)\big]\leq \sum\limits_{f\in \mathcal{H}}\int_{0}^{\sigma^2}2   \exp \left(-\frac{m\,\varepsilon^2}{8\sigma^2_{f}+\nicefrac{4\left(b-a\right)\varepsilon}{3}}\right)d\varepsilon\\
		 &\leq 2|\mathcal{H}|\int_{0}^{\sigma^2}\exp \left(-\frac{m\,\varepsilon^2}{8\sigma^2+\nicefrac{4\left(b-a\right)\varepsilon}{3}}\right)d\varepsilon\\
		 &\leq 2|\mathcal{H}|\int_{0}^{\sigma^2}\exp \left(-\frac{m\,\varepsilon^2}{8\sigma^2+\nicefrac{4\left(b-a\right)\sigma^2}{3}}\right)d\varepsilon\\
		 &= |\mathcal{H}|\sqrt\frac{8\sigma^2+\nicefrac{4\sigma^2 (b-a)}{3}}{m}\sqrt{\pi}\erf\left(\frac{\sigma^2}{\sqrt\frac{8\sigma^2+\nicefrac{4\sigma^2 (b-a)}{3}}{m}}\right),
	\end{align}
	according to~\cref{lemma-inequality} with $r=\sqrt\frac{8\sigma^2+\nicefrac{4\sigma^2 (b-a)}{3}}{m}$ and $q=|\Hcal|$.}
}

 \end{proof}
 

\subsection{Proof of~\autoref{training_samples_upperbound}}
\label{AppD}
\trainingsamplesupperboudrestatable*
\begin{proof}
We first notice that 

\begin{align}
& \mathbb{P}\left( |  V_g - \widehat{V}_n(f^\star_m) |   \geq \varepsilon  \right) \nonumber\\
&\leq  \mathbb{P}\left(   V_g - {V}(f^\star_m)    \geq \varepsilon  \right)  +  \mathbb{P}\left( |  {V}(f^\star_m) - \widehat{V}_n(f^\star_m) |   \geq \varepsilon  \right),
\end{align}
\rev{and thus from~\eqref{eq-Pro1}, \eqref{eq-Pro2} in Proposition \ref{mainproposition}, we have:
\begin{align}
&\mathbb{P}\left( |  V_g - \widehat{V}_n(f^\star_m) |   \geq \varepsilon  \right) \nonumber\\
&\leq 2|\mathcal{H}|\exp\left(-\frac{m\,\varepsilon^2}{8\,\sigma^2+\nicefrac{4\,\left(b-a\right)\varepsilon}{3}}\right)  \label{boundsigma_m}  + 2\exp  \left(-\frac{n\,\varepsilon^2}{2\sigma^2+\nicefrac{2\,\left(b-a\right)\varepsilon}{3}}\right).
\end{align}
}
\rev{Let us require:
\begin{align}
2\, |\mathcal{H}| \,\exp  \left(-\frac{m\,\varepsilon^2}{8\,\sigma^2+\nicefrac{4\, \left(b-a\right)\varepsilon}{3}}\right) & \leq (\delta - \Delta),\\
2\, \exp  \left(-\frac{n\,\varepsilon^2}{2\,\sigma^2+\nicefrac{2\,\left(b-a\right)\varepsilon}{3}}\right) & \leq \Delta,
\end{align}
}
which satisfies the desired condition: 
\begin{align}
 \mathbb{P}\left( |  V_g - \widehat{V}_n(f^\star_m) |   \geq \varepsilon  \right) &\leq \delta,
 \end{align}
 for any $0<\Delta<\delta$. Finally, from the previous inequality we can derive lower bounds on $n$ and $m$:
\rev{
\begin{align}
 m &\geq \frac{8\,\sigma^2+\nicefrac{4\, \left(b-a\right)\varepsilon}{3}}{\varepsilon^2}\ln\left(\frac{2\, |\Hcal|}{\delta-\Delta}\right),\\
n &\geq \frac{2\,\sigma^2+\nicefrac{2\,\left(b-a\right)\varepsilon}{3}}{ \varepsilon^2}\ln\left(\frac{2}{\Delta}\right),
\label{samples_train_bound}
\end{align}
}
which  by definition of sample complexity shows the corollary. 
\end{proof}

\section{Pre-processing}
\subsection{Data pre-processing}
\label{App_data_preproc}
\datapreprocthm*
\begin{proof}
\rev{\begin{align}
&V_{\GId}(\xi, E) = \sum\limits_{y }\max\limits_{w}\sum\limits_{w^\prime}\xi{w^\prime}\cdot E_{w^\prime y}\cdot g_{\mathrm{id}}(w, {w^\prime})=\\
&\sum\limits_{y}\max\limits_{w} (\xi_{w}E_{w y})=\sum\limits_{y }\max\limits_{w} P_{WY}(w, y)=\sum\limits_{y}\max\limits_{w} \frac{U(w,y)}{\alpha}\\
&=\frac{1}{\alpha} \cdot  \sum\limits_{y}\max\limits_{w} {\sum_x \pi_x \cdot C_{xy} \cdot g(w,x)}=\nicefrac{1}{\alpha} \cdot V_{g}(\pi, C) 
\end{align}}
\end{proof}

\subsection{Channel pre-processing}\label{app:channel-preprocessing}
\channelpreproc*
\label{App_channel_preproc}
\begin{proof}
    In this proof we use a notation that highlights the structure of the preprocessing. 
    We will denote by  $G$ be the matrix form of  $g$, i.e., $G_{wx}= g(w,x)$, and by 
    $\Diag{\pi}$  the square matrix with $\pi$ in its diagonal
    and $0$ elsewhere.
	We have that  $\beta = \|G\Diag{\pi}\|_1 = \sum_{w,x} G_{wx} \pi_x$, which is
	strictly positive because of the assumptions on $g$ and $\pi$.
	Furthermore, we have
	\[
		\tau^T
			\Wide= 
			\beta^{-1} G\Diag\pi\OneVect
		~,
		\qquad
			R
			\Wide=
			\beta^{-1} (\Diag{\tau})^{-1}G\Diag\pi
		~,
	\]
	where $\OneVect$ is the vector of $1$s and $\tau^T$ represents the transposition of vector $\tau$. Note that $(\Diag{\tau})^{-1}$ is
	a diagonal matrix with entries $\tau_w^{-1}$ in its diagonal.
	If $\tau_w = 0$ then the row $R_{w, \cdot}$ is not properly defined; but
	its choice does not affect
	$V_{\GId}(\tau, RC)$ since the corresponding prior is $0$; so
	we can choose $R_{w,\cdot}$ arbitrarily (or equivalently
	remove the action $w$, it can never be optimal since it gives $0$ gain).
	It is easy to check that $\tau$ is a proper distribution
	and $R$ is a proper channel:
	\begin{align*}
		\textstyle
		\sum_w \tau_w
		&\Wide=
		\textstyle
		\sum_w \beta^{-1} \sum_x G_{wx} \pi_x
		&&\Wide=
		\beta^{-1} \beta
		&\Wide=
		1
		~,
		\\
		\textstyle
		\sum_x R_{w,x}
		&\Wide=
		\textstyle
		\sum_x \frac{1}{\tau_w} \beta^{-1} G_{wx} \pi_x
		&&\Wide=
		\frac{ \tau_w  }{ \tau_w }
		&\Wide=
		1
		~.
	\end{align*}
	Moreover, it holds that:
	\[
		\beta \Diag\tau R
		\Wide= \beta \Diag{\tau} \beta^{-1} \Diag{\tau}^{-1}G\Diag\pi
		\Wide= G\Diag\pi
		~.
	\]
	The main result follows from the trace-based formulation of posterior $g$-vulnerability
    \cite{Alvim:12:CSF},
	since for any channel $C$ and strategy $S$, the above equation
	directly yields
	\begin{align*}
		V_g(\pi, C)
		\Wide=
		\max_{S} \Trace{G\Diag\pi CS}
		&\Wide=
		 \beta\cdot\max_{S}\Trace{ \Diag\tau RCS}
        \\
		&\Wide=
		\beta\cdot V_{\GId}(\tau,RC)
		~,
	\end{align*}
	where $\Trace{\cdot}$ is the matrix trace.
\end{proof}

\subsection{Data pre-processing when $g$ is not integer}
\label{App_data_preproc_no_real}
Approximating $g$ so that it only takes values $\in \mathbb{Q}_{\geq 0}$ allows us to represent each gain as a quotient of two integers, namely 
\begin{equation}
\nicefrac{\text{Numerator}(G_{w,x})}{\text{Denominator }(G_{w,x})}. \nonumber
\end{equation}
Let us also define
\begin{equation}
K \defsym\lcm_{wx}(\text{Denominator}(G_{w,x})), 
\end{equation}
where $\lcm(\cdot)$ is the least common multiple. Multiplying $G$ by $K$ gives the integer version of the gain matrix that can replace the original one. It is clear that the calculation of the least common multiplier, as well as the increase in the amount of data produced during the dataset building using a gain matrix forced to be integer, might constitute a relevant computational burden. 
\section{ANN models}
\label{AppG}
We list here the specifics for the ANNs models used in the experiments. All the models are simple feed-forward networks whose layers are fully connected. The activation functions for the hidden neurons are rectifier linear functions, while the output layer has softmax activation function. 

The loss function minimized during the training is the cross entropy, a popular choice in classification problems. The remapping $\Ycal\rightarrow\Wcal$ can be in fact considered as a classification problem such that, given an observable, a model learns to make the best guess.

For each experiments, the models have been tuned by cross-validating them using one randomly chosen training sets among the available ones choosing among the largest in terms of samples. The specs are listed experiment by experiment in~\cref{tab:ann-param}.
\begin{table*}[!h]
\centering
\resizebox{\textwidth}{!}{%
\begin{tabular}{llccccc}
\cline{3-7} 
 &
   &
  \multicolumn{5}{|c|}{\textbf{Hyper-parameters}} \\ \cline{1-7} 
\multicolumn{1}{|l|}{\textbf{Experiment}} &
  \multicolumn{1}{l|}{\textbf{Pre-processing}} &
  \multicolumn{1}{c|}{learning rate} &
  \multicolumn{1}{c|}{hidden layers} &
  \multicolumn{1}{c|}{epochs} &
  \multicolumn{1}{c|}{hidden units per layer} &
  \multicolumn{1}{c|}{batch size} \\ \cline{1-7} 
\multicolumn{1}{|l|}{\multirow{2}{*}{Multiple guesses}} &
  \multicolumn{1}{l|}{Data} &
  \multicolumn{1}{c|}{$10^{-3}$} &
  \multicolumn{1}{c|}{$3$} &
  \multicolumn{1}{c|}{700} &
  \multicolumn{1}{c|}{$[100,100,100]$} &
  \multicolumn{1}{c|}{1000} \\ \cline{2-7} 
\multicolumn{1}{|c|}{} &
  \multicolumn{1}{l|}{Channel} &
  \multicolumn{1}{c|}{$10^{-3}$} &
  \multicolumn{1}{c|}{$3$} &
  \multicolumn{1}{c|}{500} &
  \multicolumn{1}{c|}{$[100,100,100]$} &
  \multicolumn{1}{c|}{1000} \\ \cline{1-7} 
\multicolumn{1}{|l|}{\multirow{2}{*}{Location Priv.}} &
  \multicolumn{1}{l|}{Data} &
  \multicolumn{1}{c|}{$10^{-3}$} &
  \multicolumn{1}{c|}{$3$} &
  \multicolumn{1}{c|}{1000} &
  \multicolumn{1}{c|}{$[500,500,500]$} &
  \multicolumn{1}{c|}{$200,500,1000$} \\ \cline{2-7} 
\multicolumn{1}{|c|}{} &
  \multicolumn{1}{l|}{Channel} &
  \multicolumn{1}{c|}{$10^{-3}$} &
  \multicolumn{1}{c|}{$3$} &
  \multicolumn{1}{c|}{$200, 500, 1000$} &
  \multicolumn{1}{c|}{$[500,500,500]$} &
  \multicolumn{1}{c|}{$20, 200, 500$} \\ \cline{1-7} 
\multicolumn{1}{|l|}{\multirow{2}{*}{Diff. Priv.}} &
  \multicolumn{1}{l|}{Data} &
  \multicolumn{1}{c|}{$10^{-3}$} &
  \multicolumn{1}{c|}{$3$} &
  \multicolumn{1}{c|}{500} &
  \multicolumn{1}{c|}{$[100,100,100]$} &
  \multicolumn{1}{c|}{$200$} \\ \cline{2-7} 
\multicolumn{1}{|c|}{} &
  \multicolumn{1}{l|}{Channel} &
  \multicolumn{1}{c|}{$10^{-3}$} &
  \multicolumn{1}{c|}{$3$} &
  \multicolumn{1}{c|}{500} &
  \multicolumn{1}{c|}{$[100,100,100]$} &
  \multicolumn{1}{c|}{$200$} \\ \cline{1-7} 
  \multicolumn{1}{|l|}{\multirow{1}{*}{Psw SCA}} &
  \multicolumn{1}{l|}{-} &
  \multicolumn{1}{c|}{$10^{-3}$} &
  \multicolumn{1}{c|}{$3$} &
  \multicolumn{1}{c|}{700} &
  \multicolumn{1}{c|}{$[100,100,100]$} &
  \multicolumn{1}{c|}{$1000$} \\ \cline{1-7} 
\end{tabular}%
}
\caption[Hyper-parameters settings table for the experiments in~\cref{chp:leakage-estimation}]{Table with the hyper-parameters setting for each one of the experiments above. When multiple values are provided for the parameters of an experiment it is to be intended that each value corresponds to a specific size of the training set (sorted from the smallest to the largest number of samples).}
\label{tab:ann-param}
\end{table*}

\section{Frequentist approach description}
\label{freq_description}
In the frequentist approach the elements of the channel, namely the conditional probabilities ${P}_{Y|X}(y|x)$, are estimated directly in the following way: the empirical  prior probability of $x$, $\widehat{\pi}_x$, is computed by counting the number of occurrences of $x$ in the training set and dividing the result by the total number of elements. Analogously, the empirical joint probability $\widehat{P}_{XY}(x,y)$ is computed by counting the number of occurrences 
of the pair $(x,y)$  and dividing the result by the total number of elements in the set. 
The estimation $\widehat{C}_{xy}$ of ${C}_{xy}$ is then defined as
\begin{equation}
\widehat{C}_{xy} =\frac{\widehat{P}_{XY}(x,y)}{\widehat{\pi}(x)}.
\end{equation}

In order to have a fair comparison with our approach, which takes advantage of the fact that we have several training sets and validation sets at our disposal, 
while preserving at the same time the spirit of the frequentist approach, we proceed as follows:  
Let us consider a training set \trainingsetsub, that we will use to learn the best remapping $\Ycal \rightarrow \Wcal$, and a validation set \validationsetsub\ which is then used to actually estimate the \gv. We first compute $\widehat{\pi}$ using \trainingsetsub. For each $y$ in 
$\Ycal$ and for each $x\in \Xcal$, the empirical probability $\widehat{P}_{X|Y}$ is computed using \trainingsetsub\ as well. In particular, $\widehat{P}_{X|Y}(x|y)$ is given by the number of times $x$ appears in pair with $y$ divided by the number of occurrences of $y$. In case a certain $y$ is in \validationsetsub\ but not in \trainingsetsub, it is assigned the secret $x^\prime = \argmax_{x\in\Xcal}\widehat{\pi}$ so that $\widehat{P}_{X|Y}(x^\prime|y)=1$ and $\widehat{P}_{X|Y}(x)=0, \forall x\neq x^\prime$. It is now possible to find the best mapping for each $y$ defined as $w(y)=\argmax_{w\in\Wcal}\sum_{x\in\Xcal} \widehat{P}_{X|Y}(x|y) g(w,x)$. Now we compute the empirical joint distribution for each $(x,y)$ in \validationsetsub, namely $\widehat{Q}_{XY}$, as the number of occurrences of $(x,y)$ divided by the total number of samples in \validationsetsub. We now estimate the \gv\ on the validation samples according to:
\begin{equation}
	\widehat{V}_n=\sum\limits_{y\in\Ycal}\sum\limits_{x\in\Xcal}\widehat{Q}_{XY}(x,y)g(w(y), x).
\end{equation}
\\

\rev{
\section{ANN: Model selection and impact on the estimation}
\label{mod_sel}
In this section we are going to: 
\begin{itemize}
    \item briefly summarize the widely known background of the model selection problem from the machine learning standpoint;
    \item show, through a new set of experiments, how this problem affects the leakage estimation;
    \item propose a heuristic which can help the practitioner in the choice of the model on the same line as classical machine learning techniques.
\end{itemize}
The problem of model selection in machine learning is still an open one and, although a state-of-the art algorithm does not exists, heuristics can be proposed to lead the practitioner in the hard task of choosing a model over others.
First, let us underline that the choice of a specific model in the context of machine learning, and specifically neural networks (and deep learning), must go through the hyper-parameter optimization procedure. In fact, if nowadays neural nets and especially deep models represent the state-of-the-art solutions to most of the automatic decision problem, it is also true that, with respect to other simpler methods, they introduce the need for hyper-parameters optimization. Some techniques, such as grid and random search as well as Bayesian optimization have been suggested during the years, especially when not so many parameters need to be tuned. Two aspects must be considered:
\begin{enumerate}
    \item the hyper-parameter optimization relies on try and error strategy,
    \item the results are highly dependent on the data distribution and how well the samples represent such distribution. 
\end{enumerate}
In particular, if we consider the typical classification problem framework in neural nets (which we build on to create our framework) we expect the network to reproduce in output the distribution $P_{class|input}$ from the observed data. In this context, the practitioner should be careful to avoid two main problems which might affect the models, namely under-fitting and over-fitting. Both problems undermine the generalization capabilities of the models: the former occurs when the model is too simple to correctly represent the distribution; the latter occurs when the model is over-complicated, especially for the given amount of samples, and it fits the training data ``too well'' but this does not translates into good performances on other samples drawn from the same distribution.}

\rev{In order to understand how these problems impact our framework, we propose an analysis of the first experiment presented in the paper, considering different networks models and focusing on the different choice of number of hidden layers. In~\cref{fig:bp_03} we compare a model with no hidden layers, \emph{hl0}, and the three hidden layers model presented in the paper, \emph{hl3}. Using neural networks without hidden layers is not common. Indeed, theoretical results (cfr.~\cite{Cybenko:92:MCSS}) state that the simplest universal approximator can be modeled as a one hidden layered neural network. Although this holds in theory, in practice it is well known that this might require layers with too many neurons, and therefore, multiple hidden layers architectures have been gaining ground in real world applications. }

\rev{Therefore, we do not expect too much from the network with no hidden layers and, indeed, the results represented in~\cref{fig:bp_03_no_error} and~\cref{fig:bp_03_error} show that the estimation capabilities of this shallow model are very far from the performance obtained with the three layered model.
\begin{figure*}[!htb]
    \centering
    \begin{subfigure}{.5\textwidth}
        \centering
        \includegraphics[scale=0.31]{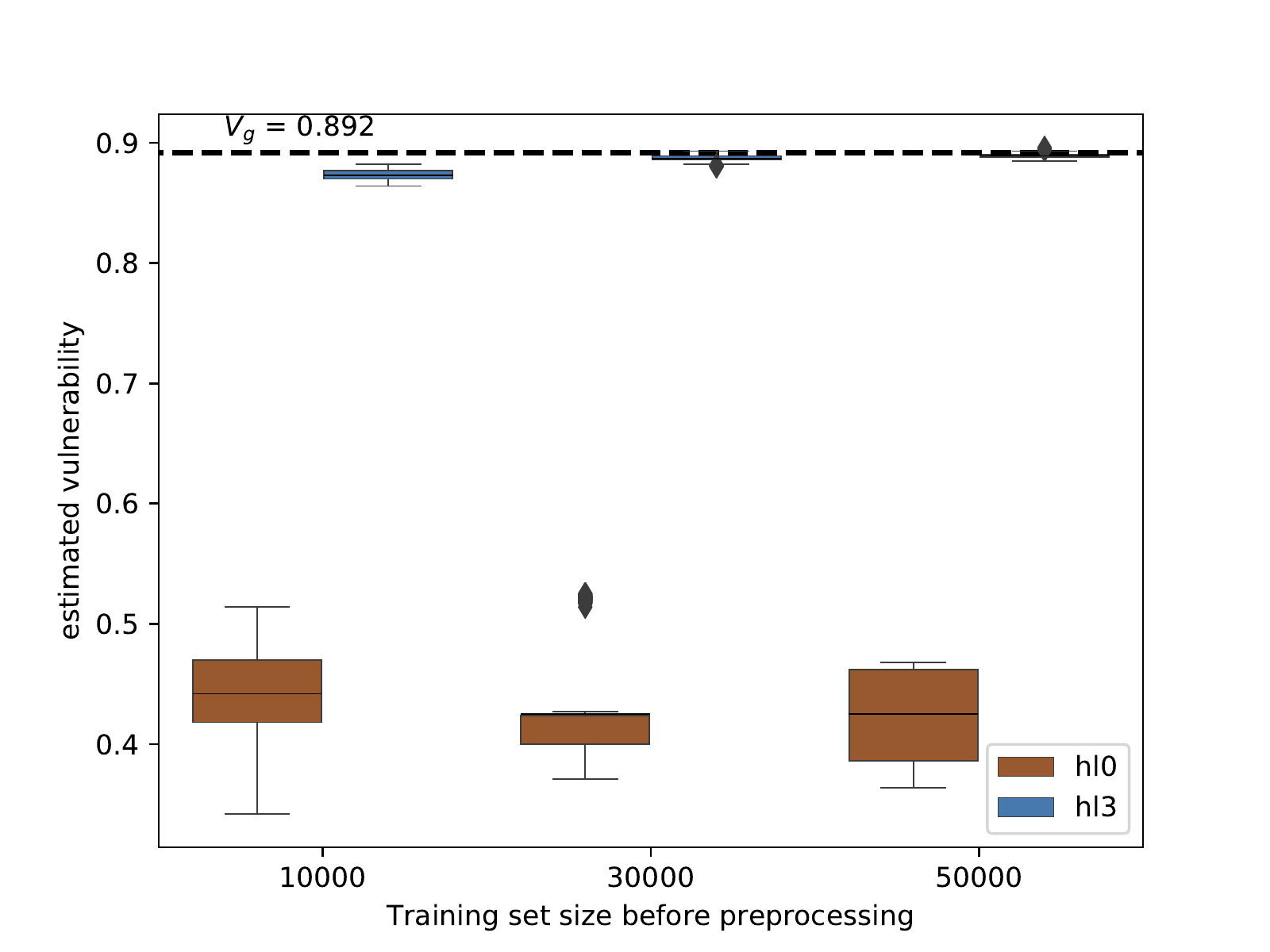}
        \caption{Estimated vulnerability.\\ \ }
        \label{fig:bp_03_no_error}
    \end{subfigure}%
    \begin{subfigure}{.5\textwidth}
        \centering
        \includegraphics[scale=0.31]{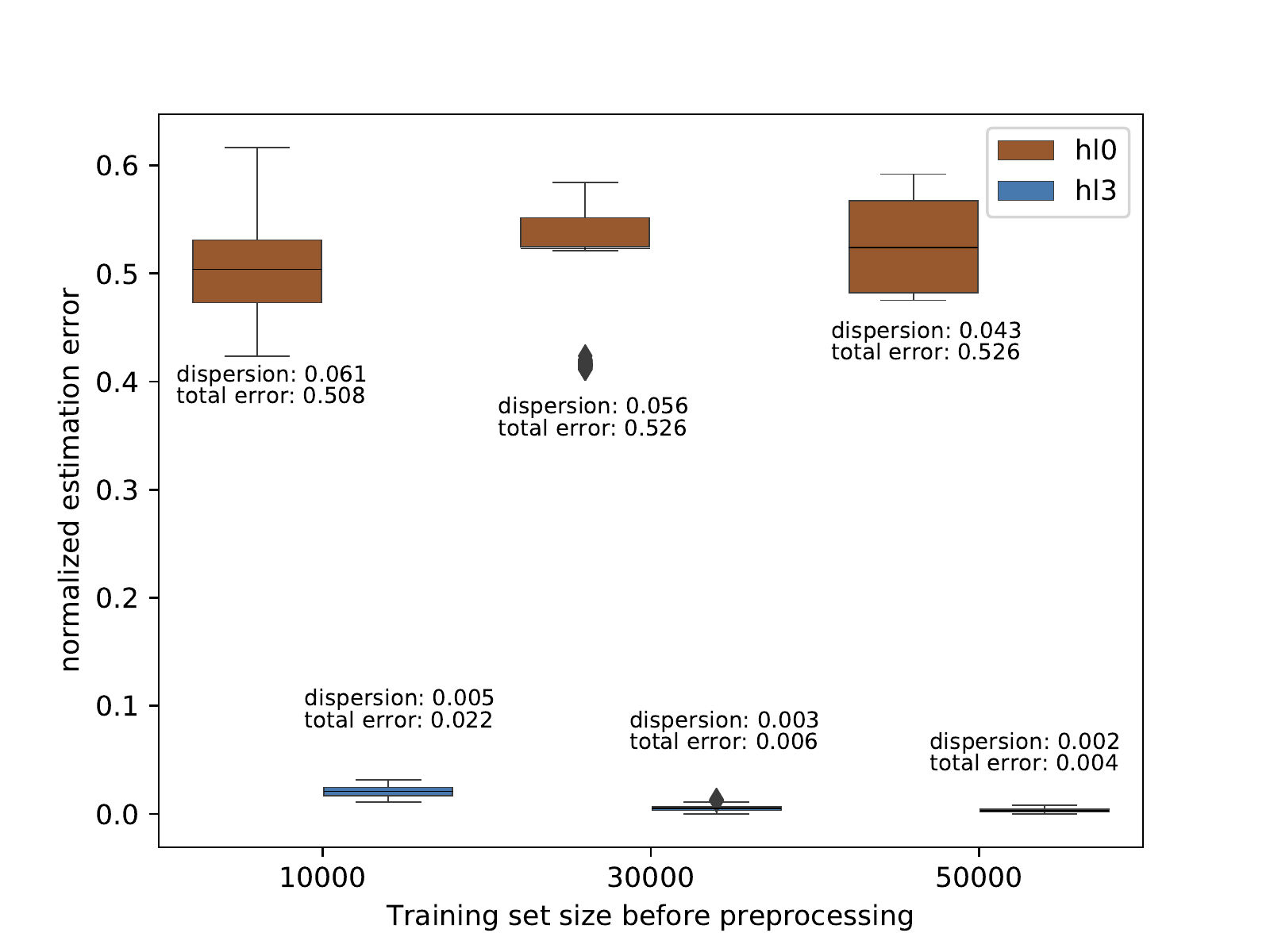}
        \caption{Normalized estimation error.\\ \ }
        \label{fig:bp_03_error}
    \end{subfigure}
    \caption{Multiple guesses scenario, comparison between a model with no hidden layers (\emph{hl0}) and the three hidden layered model present in the paper  (\emph{hl3}).}
    \label{fig:bp_03}
\end{figure*}
The model with no hidden layer is too simple to reproduce the input data distribution and therefore it does not generalize well in the problem of predicting the best (i.e. most probable) $w$ when a certain $y$ is input.}

Let us now focus on the results in~\cref{fig:bp_123}, where we compare the results of three different models with one, two and three hidden layers (respectively \emph{hl1}, \emph{hl2}, and \emph{hl3}). With this specific experiment we want to:
\begin{itemize}
    \item analyze the behavior of different models when we change the number of hidden layer (maintaining the number of hidden neurons per layer fixed to 100) and we consider different sizes for the learning set;
    \item introduce a possible heuristic to guide practitioners in the model selection task.
\end{itemize}
As we can see, observing the box-plots in~\cref{fig:bp_123_no_error} and~\cref{fig:bp_123_error} from left to right, when the amount of samples is relatively small, the shallow model, i.e. \emph{hl1} performs slightly better than the other two models. This is because, since the samples provided are not enough to accurately describe the distribution, a shallow model would be prone to under-fitting the training data, producing what seems to be a more general decision. However, as the number of samples increases, and consequently, we have a better representation of the distribution through the the data samples, a deeper model is able to reproduce the data distribution with increased accuracy. This results in better performances when trying to predict the best $w$ for a given input $y$. As one can see the three hidden layered model keeps improving when the amount of samples available for the training increases. The improvement is limited for the two hidden layers model while for the shallowest model, i.e. \emph{hl1}, there is not meaningful improvement at all.
\begin{figure*}[!htb]
    \centering
    \begin{subfigure}{.5\textwidth}
        \centering
        \includegraphics[scale=0.31]{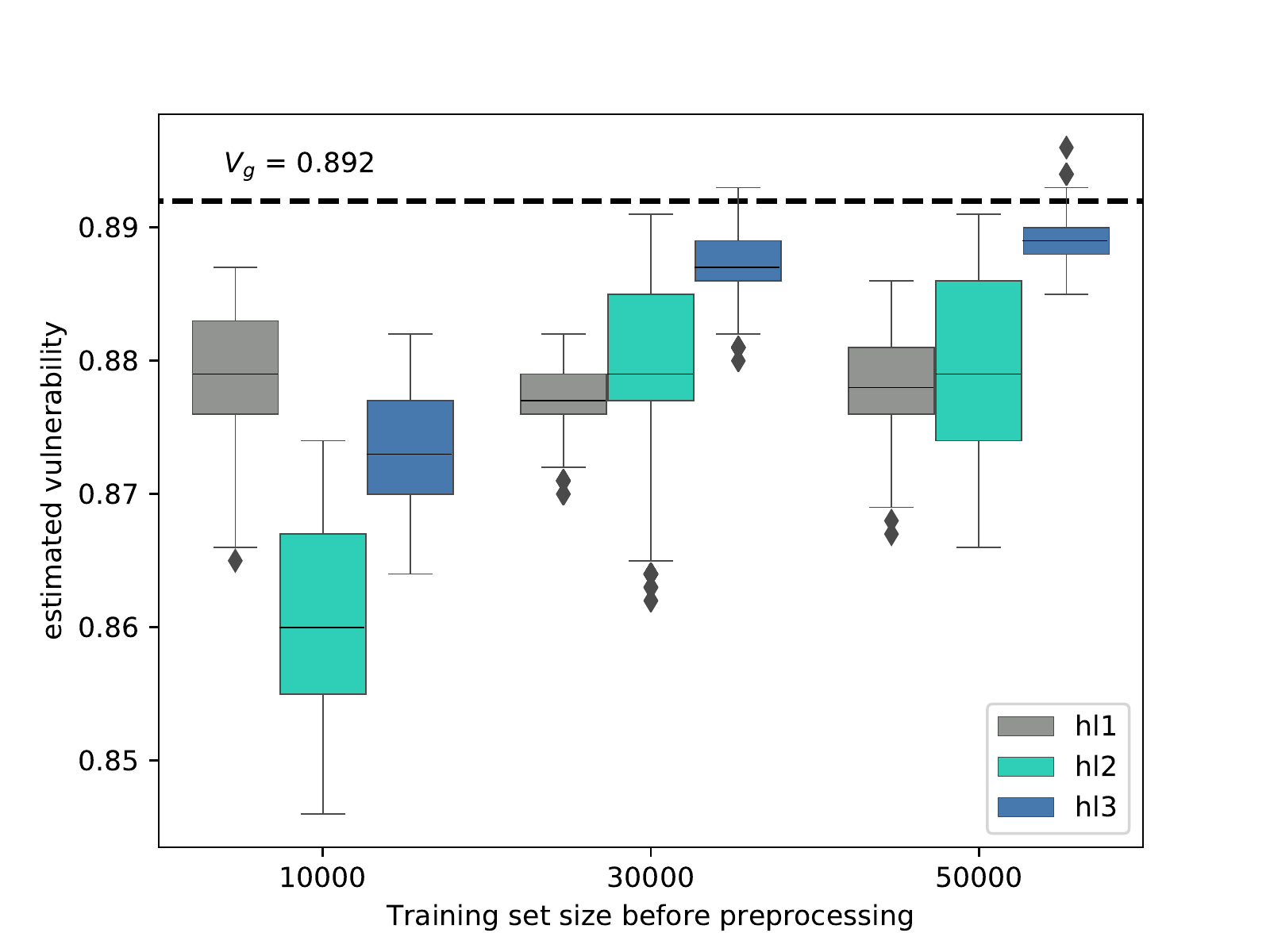}
        \caption{Estimated vulnerability.\\ \ }
        \label{fig:bp_123_no_error}
    \end{subfigure}%
    \begin{subfigure}{.5\textwidth}
        \centering
        \includegraphics[scale=0.31]{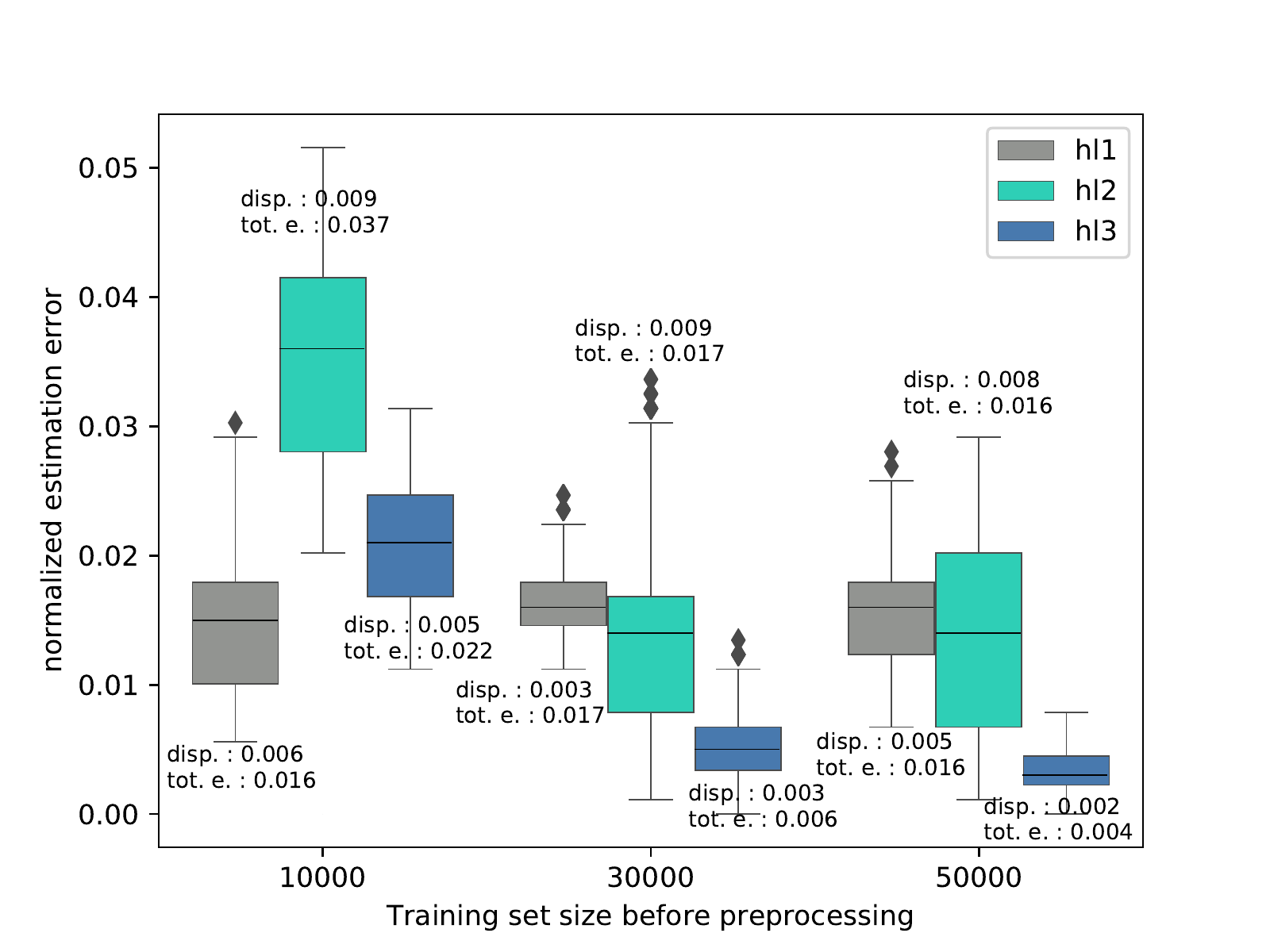}
        \caption{Normalized estimation error.\\ \ }
        \label{fig:bp_123_error}
    \end{subfigure}
    \caption{Multiple guesses scenario, comparison between a model with one hidden layers (\emph{hl1}), a model with two hidden layers (\emph{hl2}), and the three hidden layered model presented in the paper (\emph{hl3}).}
    \label{fig:bp_123}
\end{figure*}

Therefore, provided the availability of enough samples, a good heuristic for the practitioner would be to pick the model that maximizes the leakage, which represents the strongest adversary. In practice, this boils down to trying several models increasing their complexity as long as an increased complexity translates into a higher leakage estimation. This also holds for models with different architecture: if switching from dense to convolutional layers results in a higher leakage estimation, then the convolutional model should be preferred. Even though this is still an open problem for the machine learning field, and we do not provide any no guarantees that, eventually, the optimal model is going to be retrieved, this can be considered a valid empirical approach. Indeed, the principle of no free lunch in machine learning tells that no model can guarantee convergence on its final sample performance. Therefore when a finite amount of sample is available, the practitioner should evaluate several model and stick to a heuristic, as the one we suggest, in order to select the best model. 

\rev{In order to strengthen this point and also address the comment on the use of different architecture and their impact on the estimation, we produce yet another experimental result. 
\begin{figure}[!htb]
    \centering
    \includegraphics[scale=0.31]{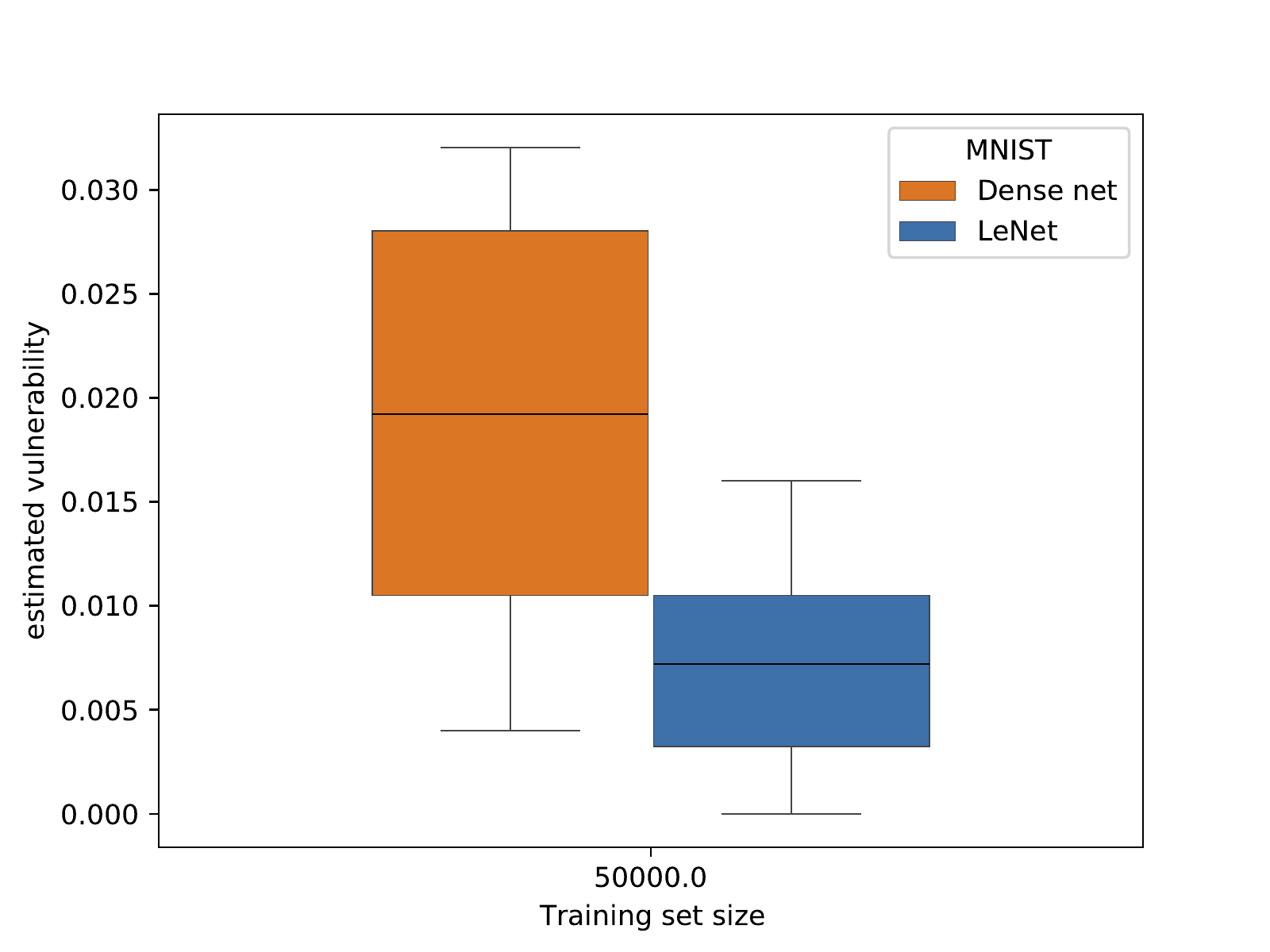}
    \caption{MNIST experiment: dense network vs. LeNet model estimation of the Bayes risk. A smaller risk corresponds to a higher leakage and a more powerful adversary able to take more advantage of the information revealed by the data.}
    \label{fig:mnist}
\end{figure}
Inspired by~\cite{Sekeh:20:TSP}, we consider the problem of estimating the Bayes error rate (BER, aka Bayes risk or Bayes leakage) using the MNIST dataset (a benchmark dataset for ML classification problems). In~\cite{Sekeh:20:TSP}, the authors use an empirical method to estimate bounds for the BER and, in order to do so, they need to perform dimensionality reduction (they try both principal component analysis, or PCA, and auto-encoding via an auto-encoder neural network). Indeed MNIST, being a $28\text{px}\times28\text{px}$ images dataset, contains high dimensionality samples and all the three applied methods (bounds estimation, k-NN and random forest) benefit from dimensionality reduction. Given that the samples distribution for MNIST is unknown, in~\cite{Sekeh:20:TSP} the authors aim at checking whether the three estimations confirm each other. And indeed they do. However, it is well known that in many image classification tasks, the state-of-the-art is represented by deep learning and, for instance convolutional neural nets. While in~\cite{Sekeh:20:TSP} the authors study how the addition of new layers to the model affects the bound estimation, we focus on the comparison between two network models, a dense and a convolutional one. }

\rev{In particular, the dense network model we consider has one hidden layer with 128 hidden units and ReLu activation function. Such an architecture corresponds to a model with 101700 training parameters (connection weights and bias weights). In order to reduce the tendency to over-fit the training data, we consider improving the model with two dropout layers, one before the hidden layer and one after, with dropout rate of $20\%$ and $50\%$ respectively.}

\rev{The convolutional neural network we consider is based on the one proposed in~\cite{Lecun:98:IEEE}, which goes by the name of LeNet. In particular, the implementation of this net only requires 38552 training parameters, almost one third of those required by the dense model which is shallower but has many hidden neurons. This convolutional network consists of a couple of 2D convolutional layers with $3\times3$ kernel, alternated with a couple of average pool layer. We then have two dense layers with 60 and 42 hidden neurons respectively and both preceded by dropout layers with dropout rate of $40\%$ and $30\%$ respectively.}

\rev{Both the dense and the LeNet model have a soft max final layer with 10 nodes, one for each class. We train both models on the MNIST 50000 training samples. We split the remaining 10000 samples set into 10 subsets, each with 1000 samples. We use the trained models to estimate the BER on these 10 subsets and we obtain the results represented in~\cref{fig:mnist}, where the estimated vulnerability is the estimated BER. The average values, represented by the black horizontal lines within the box-plots, are directly comparable to the results in~\cite{Sekeh:20:TSP}. }

\rev{We notice that:
\begin{itemize}
    \item considering the aforementioned paper, the dense network's performances are comparable to those of the random forest and k-NN algorithms and the LeNet's performances are comparable to those of the convolutional net considered by the authors;
    \item the LeNet model's estimate is lower, i.e. the modeled Bayesian adversary is stronger than in the case of the dense net;
    \item according to the previously proposed heuristics, given that we can only estimate the Bayes leakage from samples for the MNIST case, the results of the LeNet model are of higher interest, given that it is bigger than the leakage estimate through the bound;
    \item it is therefore important, when only relying on data for leakage estimation, to compare different models and always look for the state-of-the-art to design the adversary which exploits the system's leakage.
\end{itemize}}

\section{Majority vote}
\begin{figure*}[!htb]
    \centering
    \begin{subfigure}{.5\textwidth}
        \centering
        \includegraphics[scale=0.31]{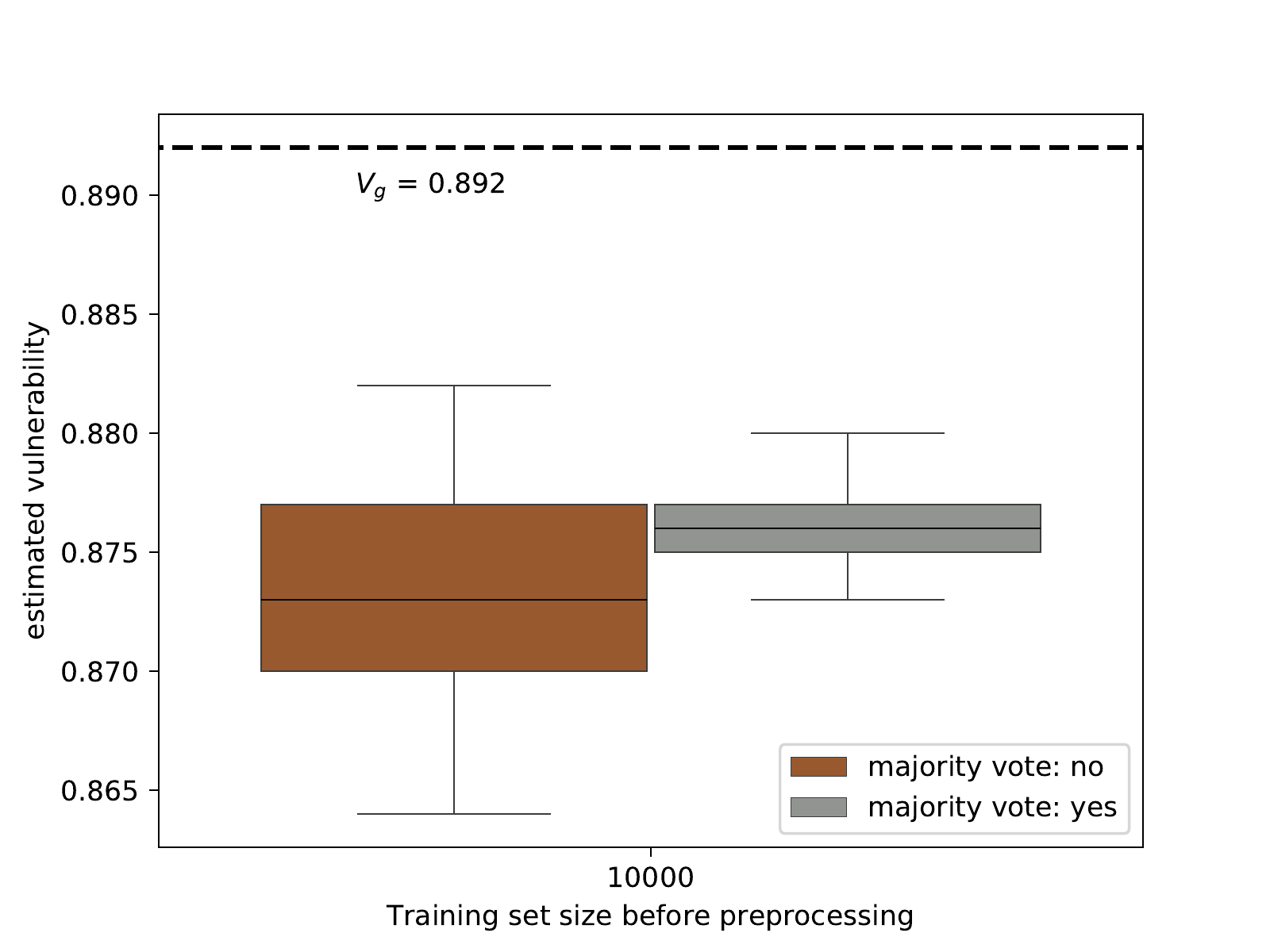}
        \caption{Estimated vulnerability.\\ \ }
        \label{fig:maj_vote_a_1}
    \end{subfigure}%
    \begin{subfigure}{.5\textwidth}
        \centering
        \includegraphics[scale=0.31]{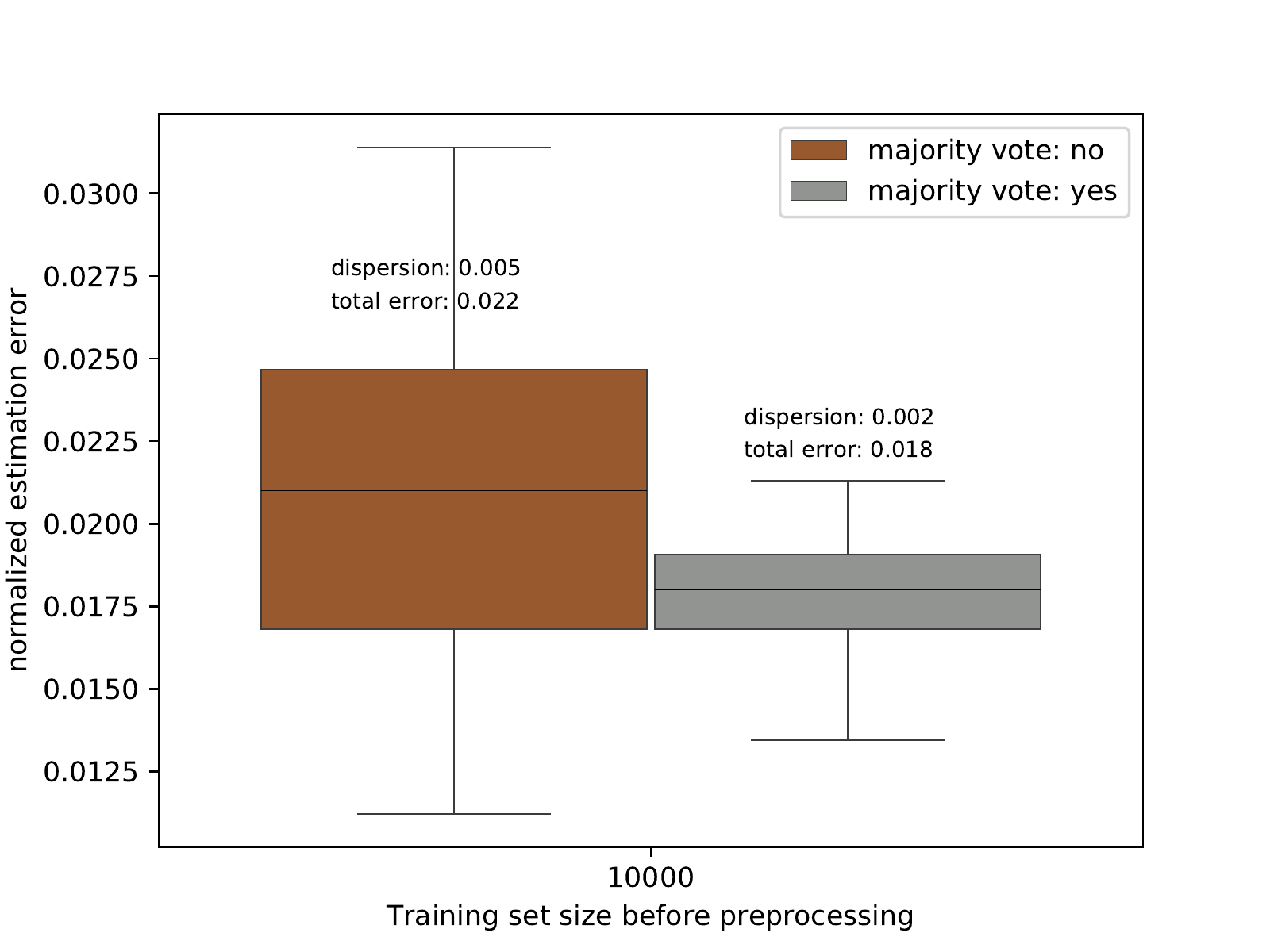}
        \caption{Normalized estimation error.\\ \ }
        \label{fig:maj_vote_a_2}
    \end{subfigure}
    \caption{Comparison between our leakage estimate and the one obtained with a majority vote based model. In both cases each model is trained on 10K samples.}
    \label{fig:maj_vote_a}
\end{figure*}

\rev{In this section we show an alternative procedure to perform leakage estimation. Given a set of models, instead of using each one of them to obtain an estimate, we derive a new model from them and we use this new model to estimate the leakage. According to~\cite{Tumer:03:IJSESD}, we obtain the new model by taking a majority vote on the predictions of each model in the model set, i.e. when each one of the models receives the input it outputs a class. The class eventually assigned to the observable is the class most frequently predicted by the model ensemble (or a random one in case of ties, since we are considering the simplest way to aggregate models). }

\rev{We consider the first experiment proposed in our paper, in particular the 5 models obtained by training on the 5 i.i.d. training sets of size 10K samples. }

\rev{As we can see in~\cref{fig:maj_vote_a}, the box-plot corresponding to the estimation not based on the majority vote is the same already reported (cfr.~\cref{mult_guess_section}). Compared to it, the one based on majority vote shows a higher (and therefore more precise) average estimation and a lower dispersion. }

\begin{figure*}[!htb]
    \centering
    \begin{subfigure}{.5\textwidth}
        \centering
        \includegraphics[scale=0.31]{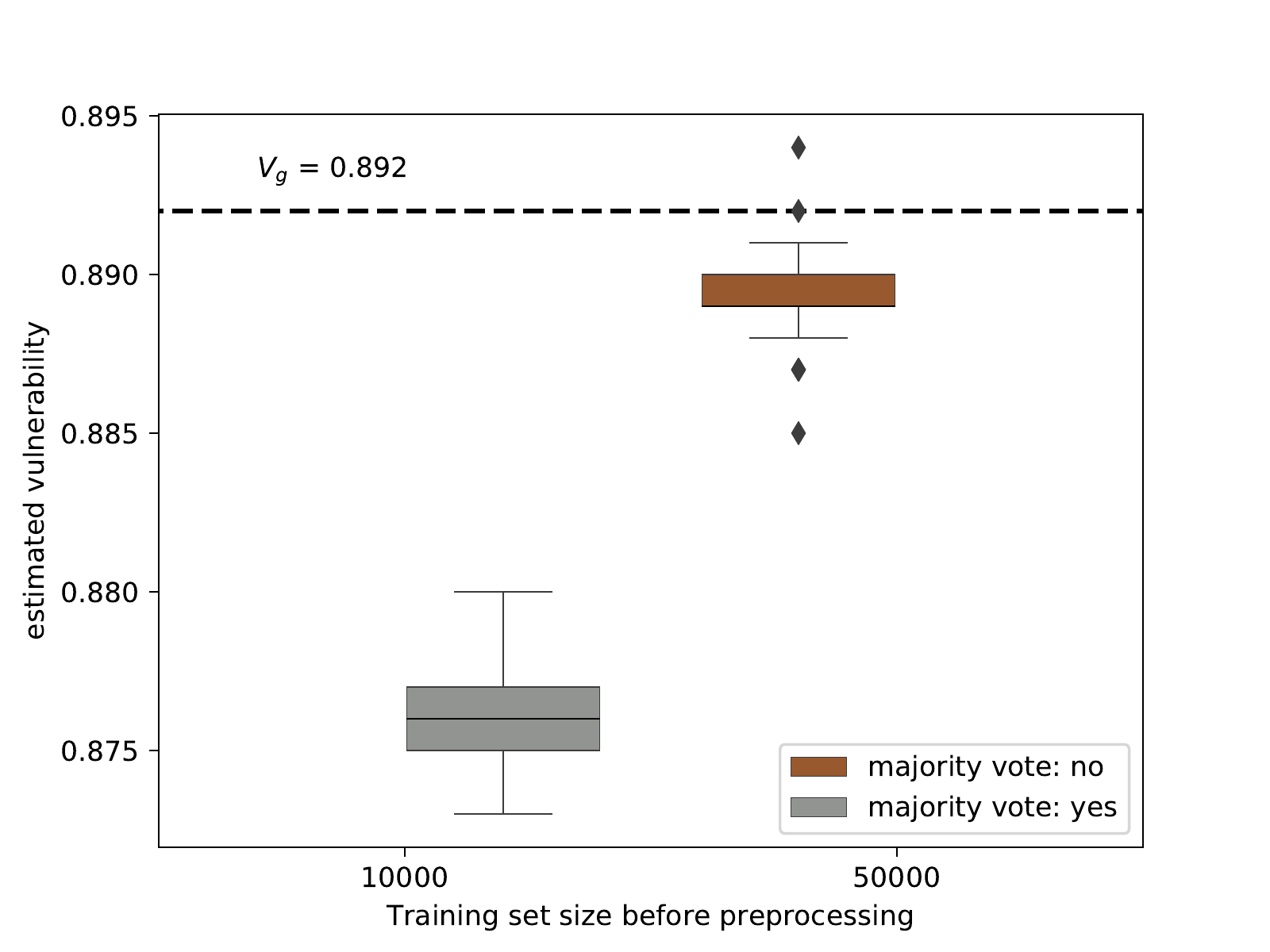}
        \caption{Estimated vulnerability.\\ \ }
        \label{fig:maj_vote_b_1}
    \end{subfigure}%
    \begin{subfigure}{.5\textwidth}
        \centering
        \includegraphics[scale=0.31]{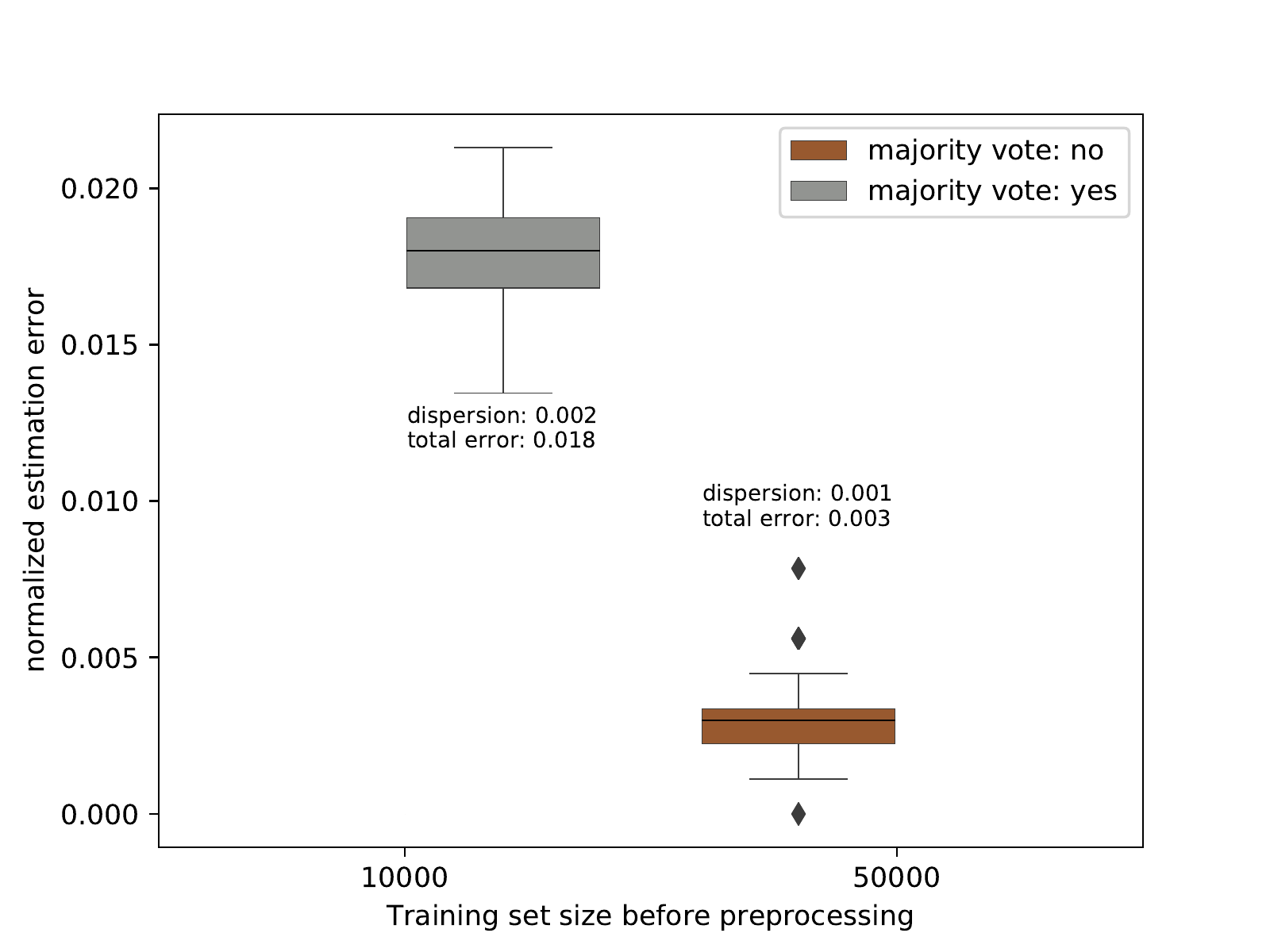}
        \caption{Normalized estimation error.\\ \ }
        \label{fig:maj_vote_b_2}
    \end{subfigure}
    \caption{Comparison between the majority vote model and a model trained on all the samples available to each model involved in the majority vote.}
    \label{fig:maj_vote_b}
\end{figure*}

\rev{However, it is important to notice that, when in this case we consider a majority vote ensemble, since we are exploiting the ``expertise'' of many models trained on i.i.d. training samples of size 10K, we are actually exploiting the knowledge coming from 50K samples. We therefore analyze the case in which, according to what has already been done in~\cref{mult_guess_section}, the same 50K samples, obtained by merging the 5 datasets with 10K samples each, are used to train a single model. The results are showed in~\cref{fig:maj_vote_b}. In this case, it is clear that having multiple weak classifiers and taking the majority vote according to the simple procedure described above, gives worse results in terms of leakage estimation performances than using all the samples to train a strong classifier.}

\section{Supplementary plots}
\label{extra_plots}
In the next pages, we show supplementary plots for each experiment presented in~\autoref{experiments}. 
In particular, we have included the  plots representing the comparison between the estimated  vulnerability and the real one, 
and the plots showing the comparison between the frequentist approach and ours. 

\begin{figure*}[!htbp]
\centering
\begin{minipage}[b]{\textwidth}
  \centering
  \begin{subfigure}[t]{.31\linewidth}
    \centering\includegraphics[width=0.8\linewidth]{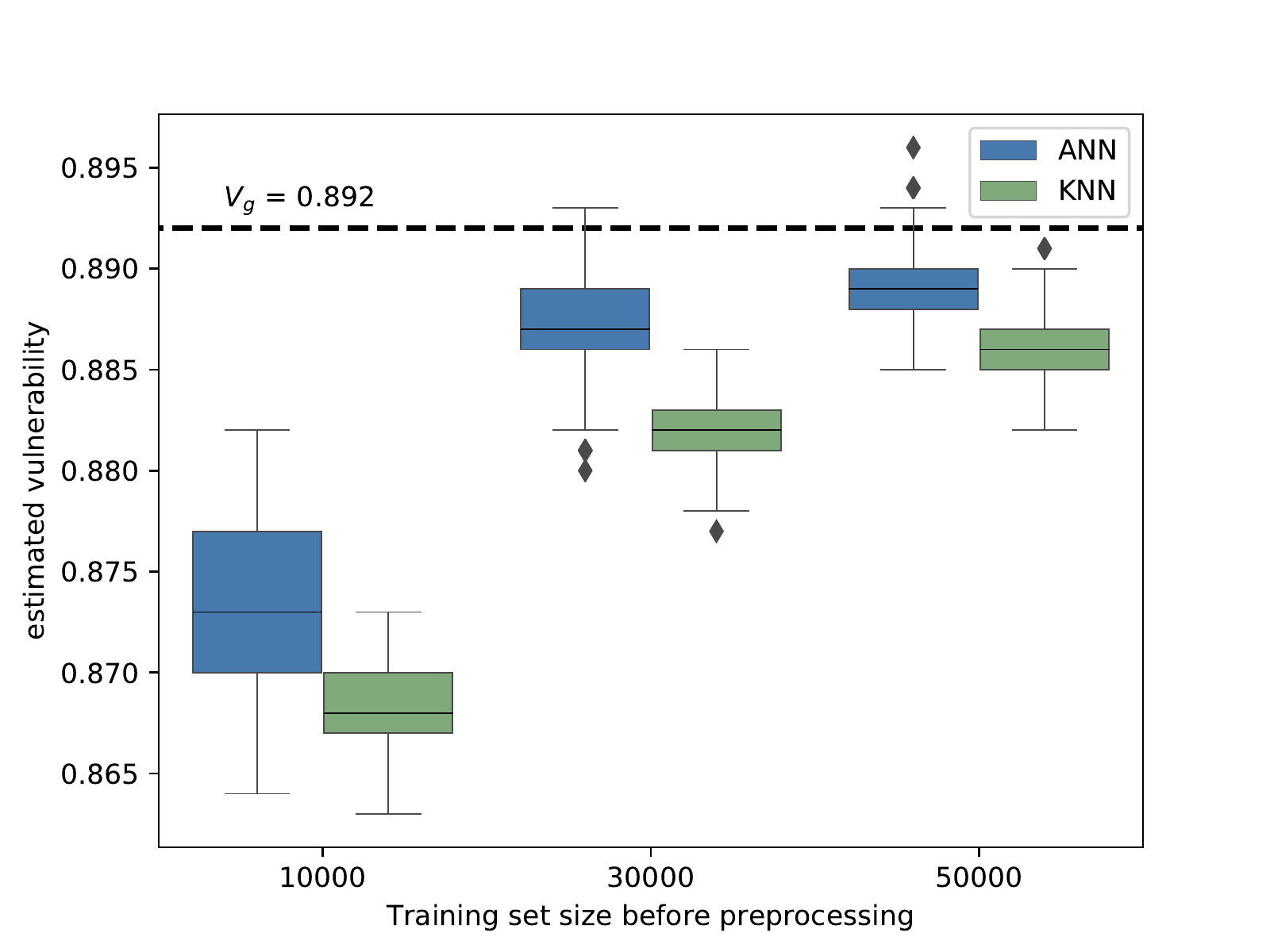}
    \caption{Vulnerability estimation for ANN and k-NN with data pre-processing.\\}
  \end{subfigure}
  \quad
  \begin{subfigure}[t]{.31\linewidth}
    \centering\includegraphics[width=0.8\linewidth]{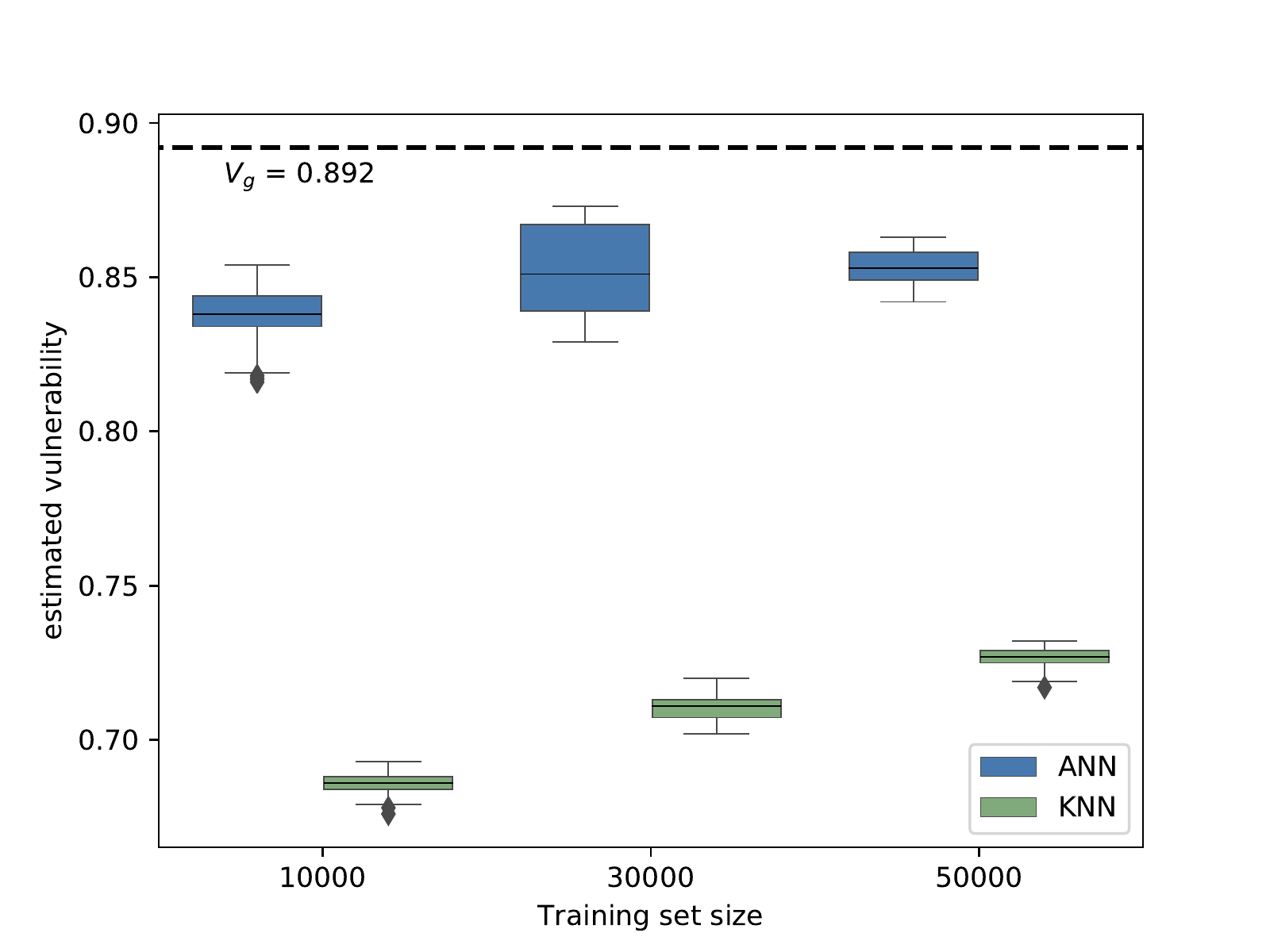}
    \caption{Vulnerability estimation for ANN and k-NN with channel pre-processing.\\}
  \end{subfigure}
  \quad
  \begin{subfigure}[t]{.31\linewidth}
    \centering\includegraphics[width=0.8\linewidth]{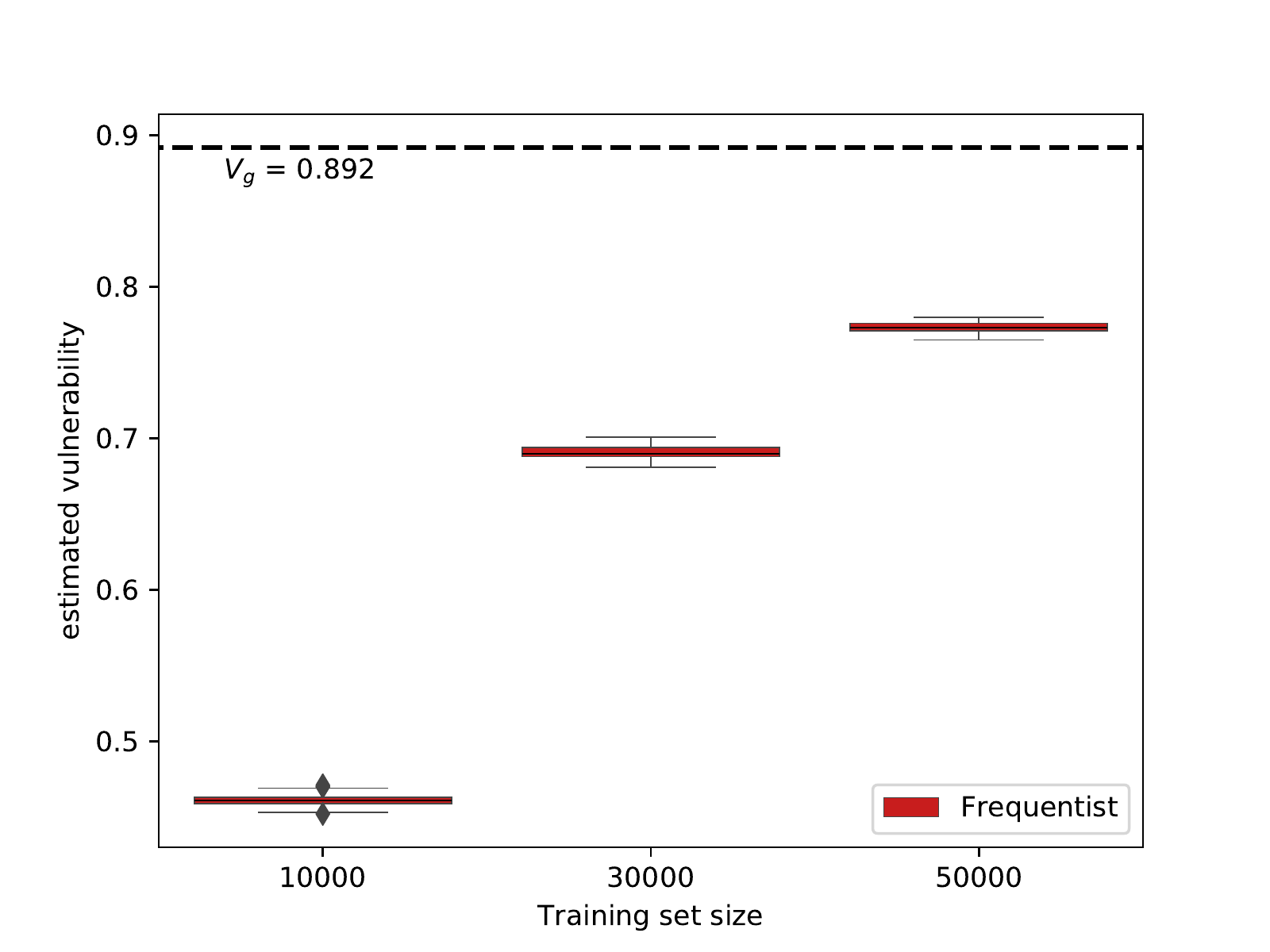}
    \caption{Vulnerability estimation for the frequentist approach.\\}
  \end{subfigure}
\end{minipage}
\begin{minipage}[b]{\textwidth}
  \centering
  \begin{subfigure}[t]{.31\linewidth}
   \centering\includegraphics[width=0.8\linewidth]{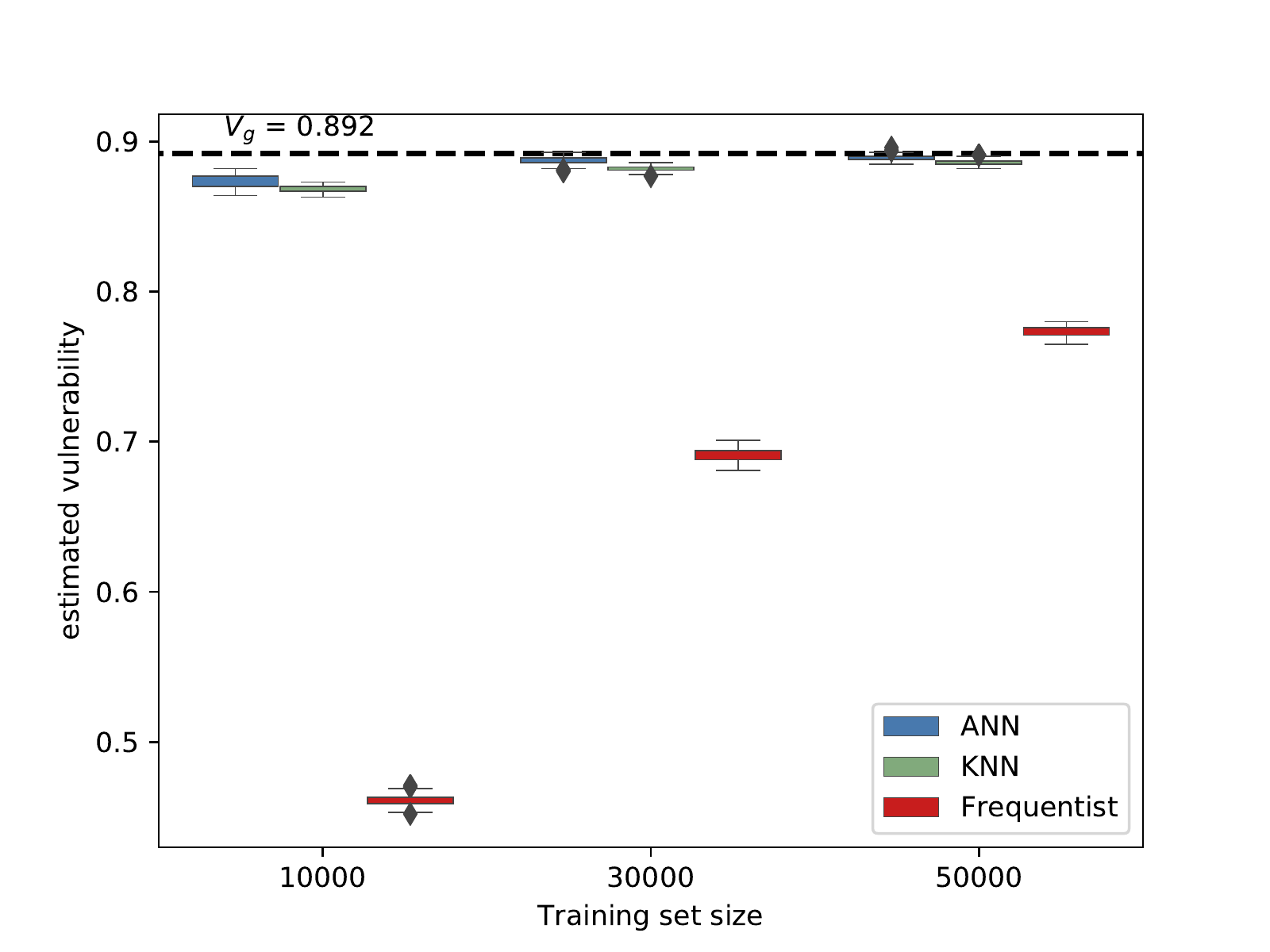}
    \caption{Vulnerability estimation for ANN and k-NN with data pre-processing, and the frequentist approach.}
  \end{subfigure}
  \quad
  \begin{subfigure}[t]{.31\linewidth}
    \centering\includegraphics[width=0.8\linewidth]{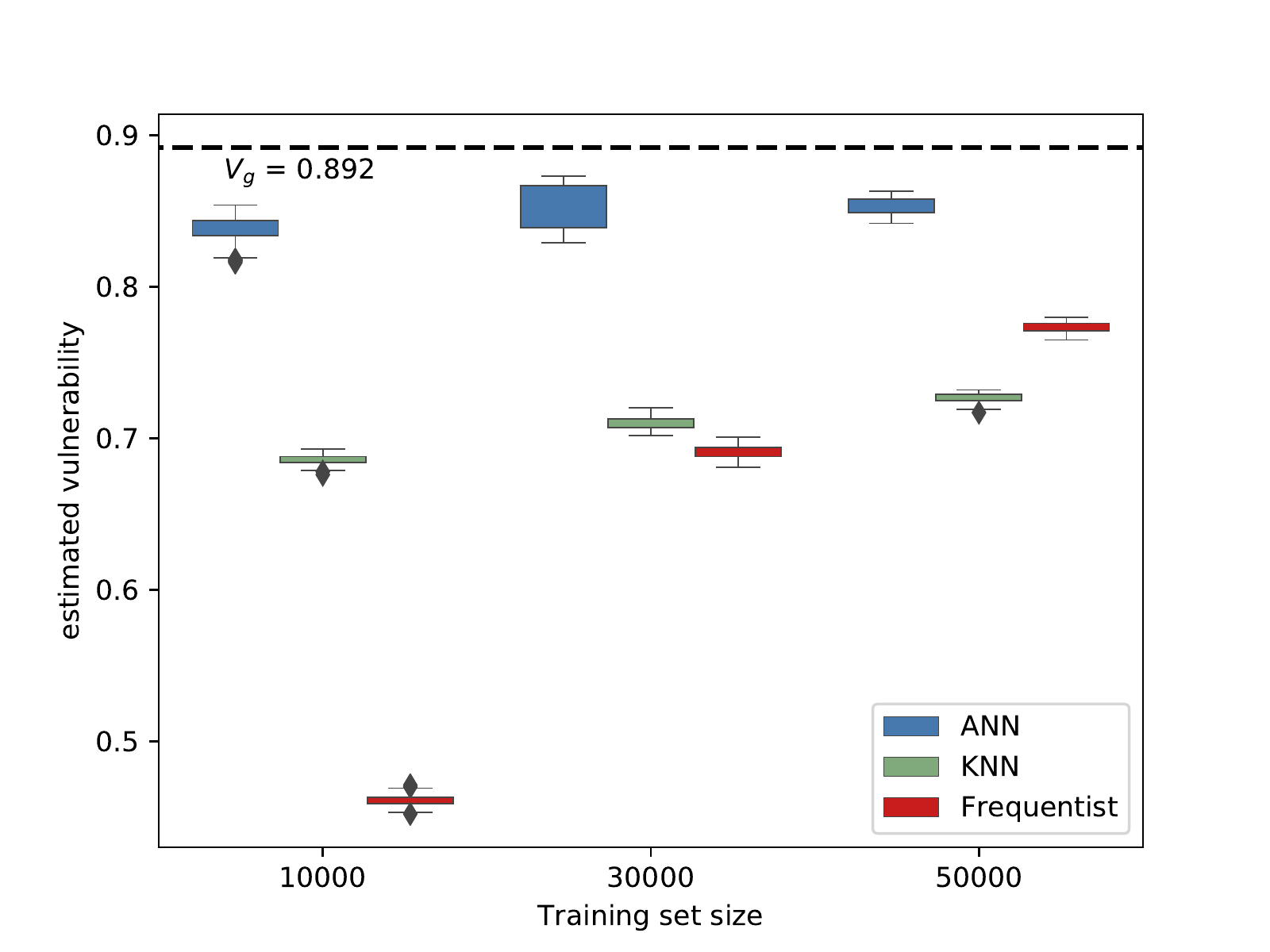}
        \caption{Vulnerability estimation for ANN and k-NN with channel pre-processing, and the frequentist approach.\\}
  \end{subfigure}
\quad
  \begin{subfigure}[t]{.31\linewidth}
  \centering\includegraphics[width=0.8\linewidth]{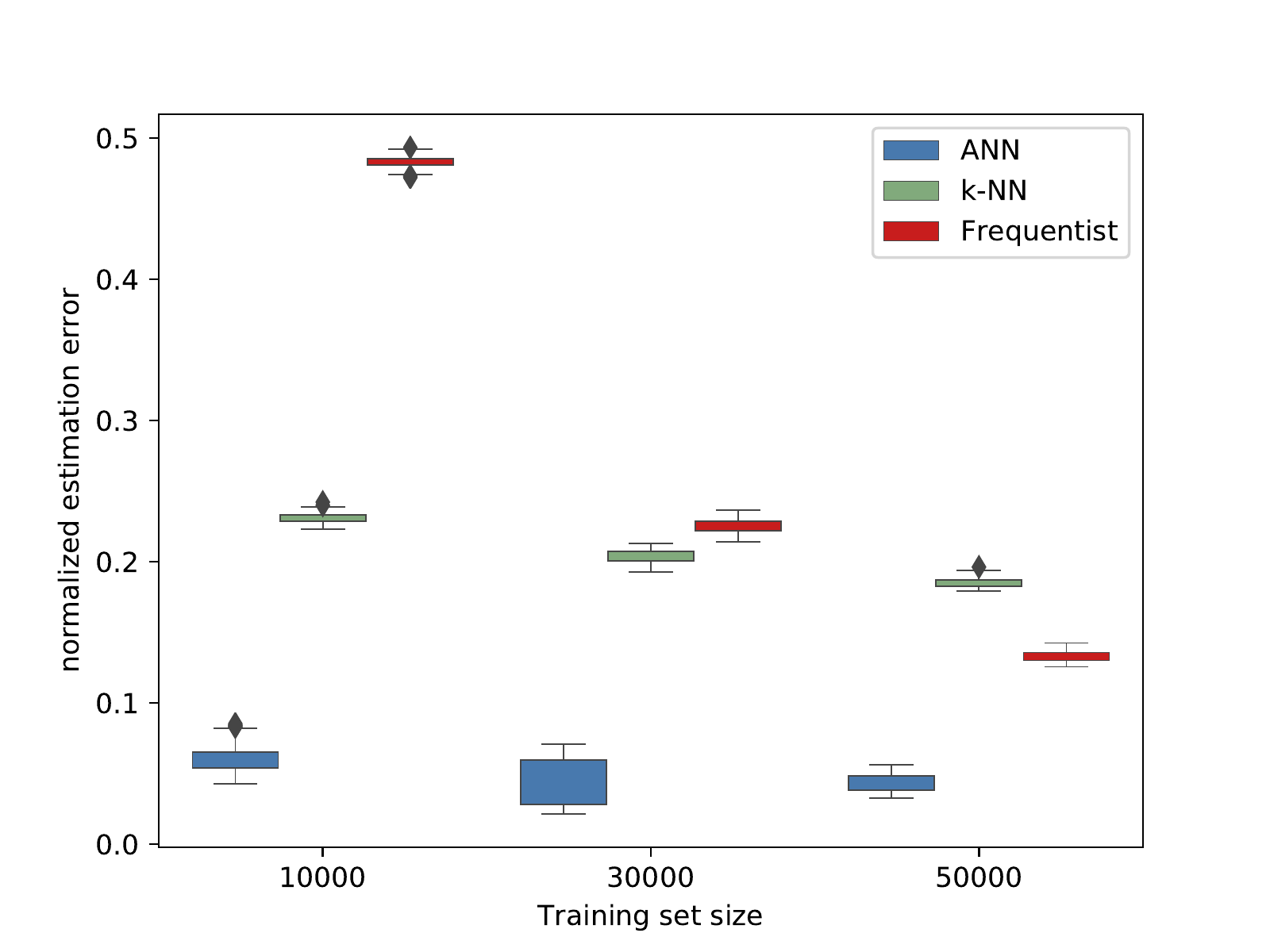}
        \caption{Normalized estimation error for   ANN and k-NN with channel pre-processing, and the frequentist approach.\\}
  \end{subfigure}
\end{minipage}
\caption{Supplementary plots for the multiple-guesses experiment.}\label{suppl_mult_guess}
\end{figure*}

\begin{figure*}[!htbp]
\centering
\begin{minipage}[b]{\textwidth}
  \centering
  \begin{subfigure}[t]{.31\linewidth}
    \centering\includegraphics[width=0.8\linewidth]{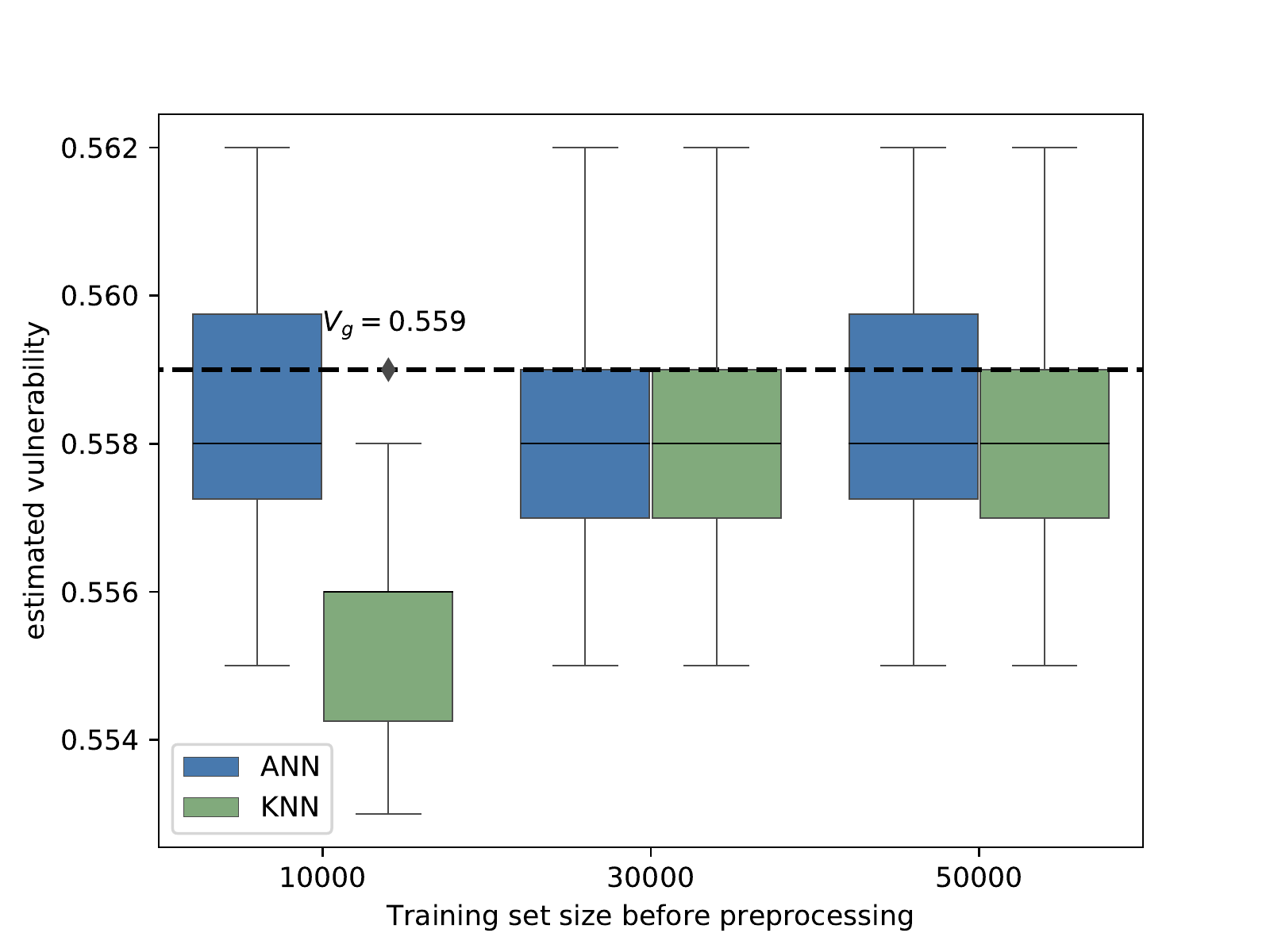}
     \caption{Vulnerability estimation for ANN and k-NN with data and channel pre-processing.\\}
  \end{subfigure}
  \quad
  \begin{subfigure}[t]{.31\linewidth}
    \centering\includegraphics[width=0.8\linewidth]{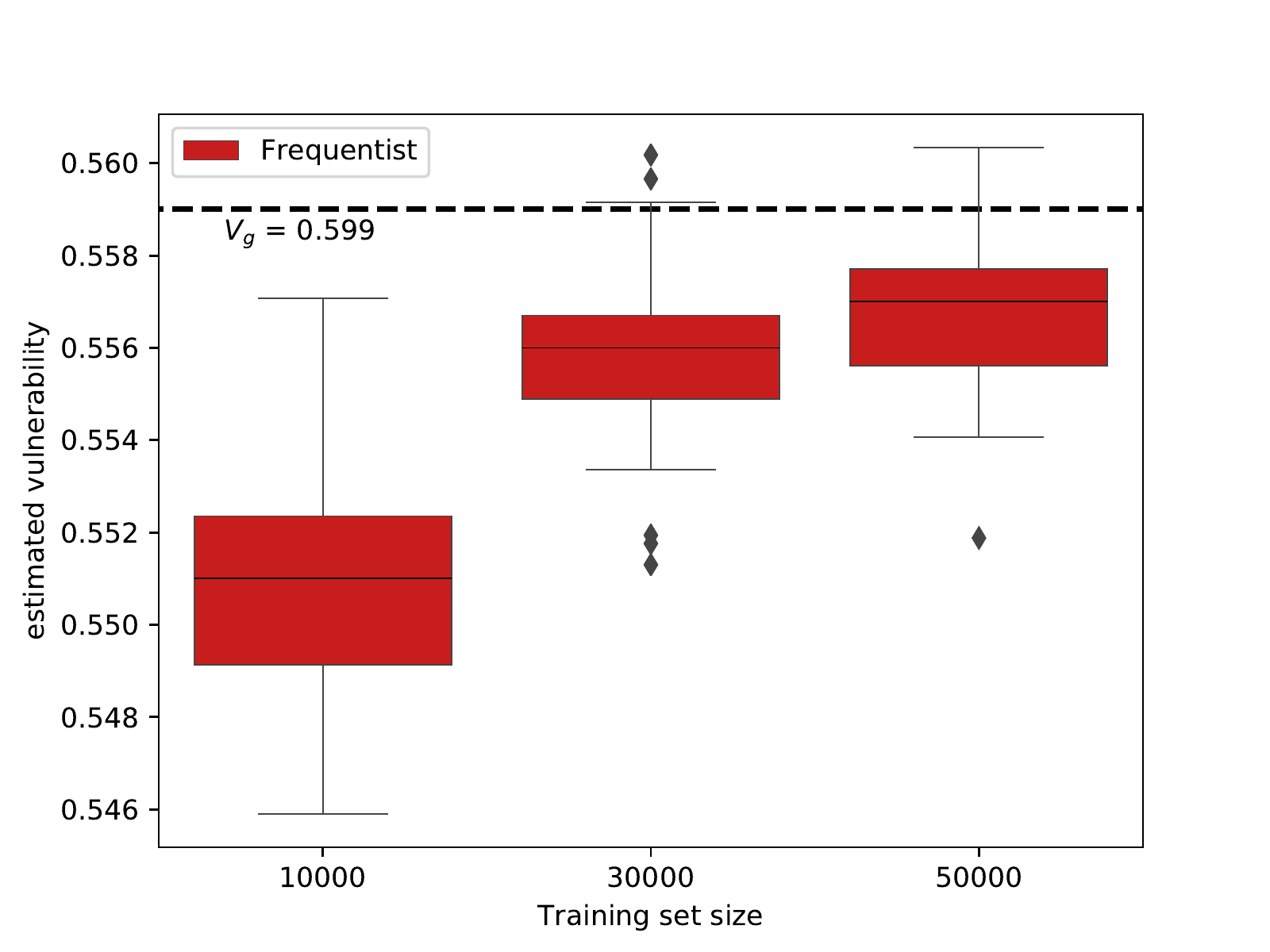}
    \caption{Vulnerability estimation for the frequentist approach.\\}
  \end{subfigure}
  \quad
  \begin{subfigure}[t]{.31\linewidth}
    \centering\includegraphics[width=0.8\linewidth]{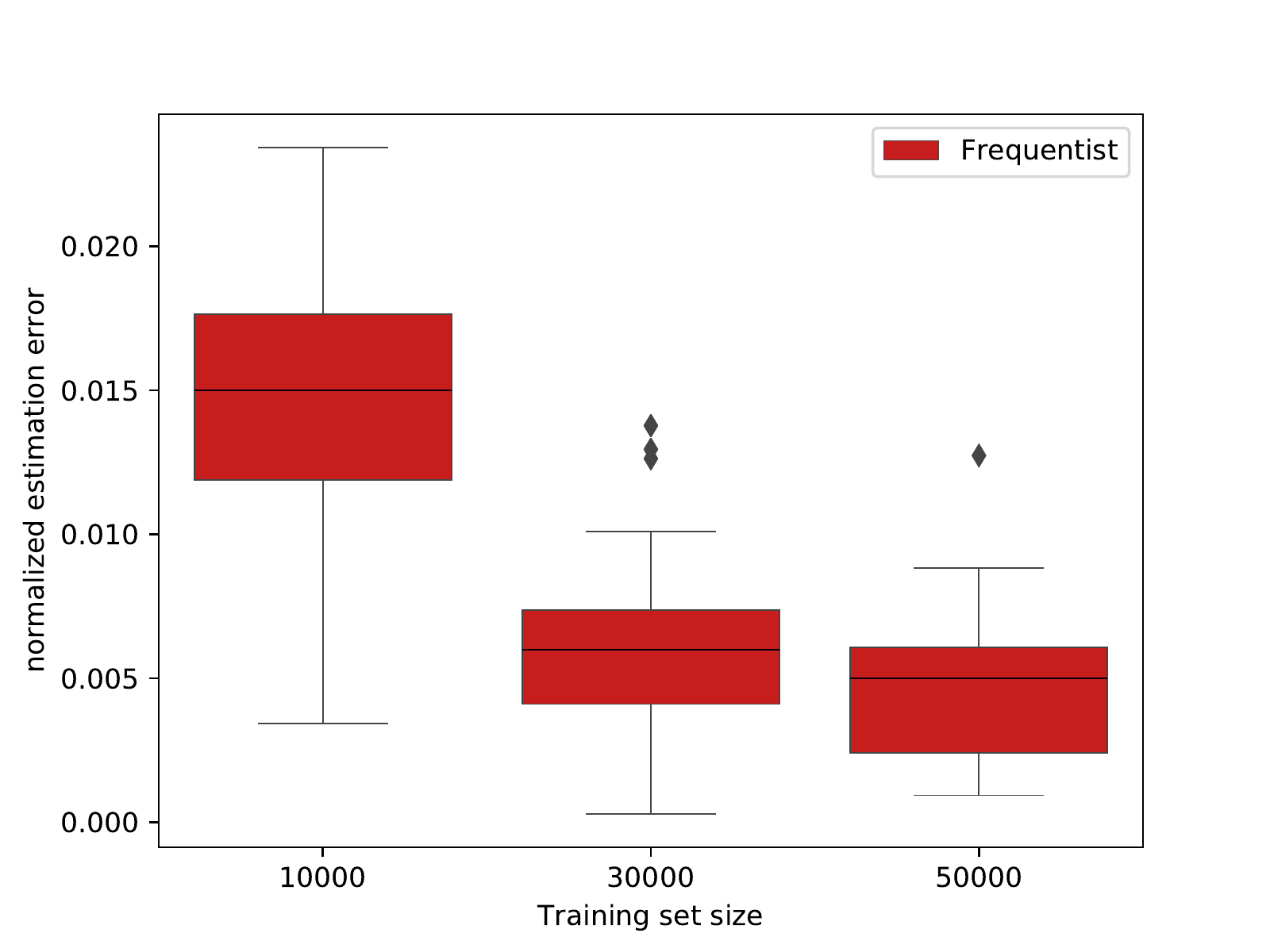}
    \caption{Normalized estimation error for the frequentist approach.\\}
  \end{subfigure}
 \end{minipage}
\begin{minipage}[b]{\textwidth}
  \centering
  \begin{subfigure}[t]{.31\linewidth}
    \centering\includegraphics[width=0.8\linewidth]{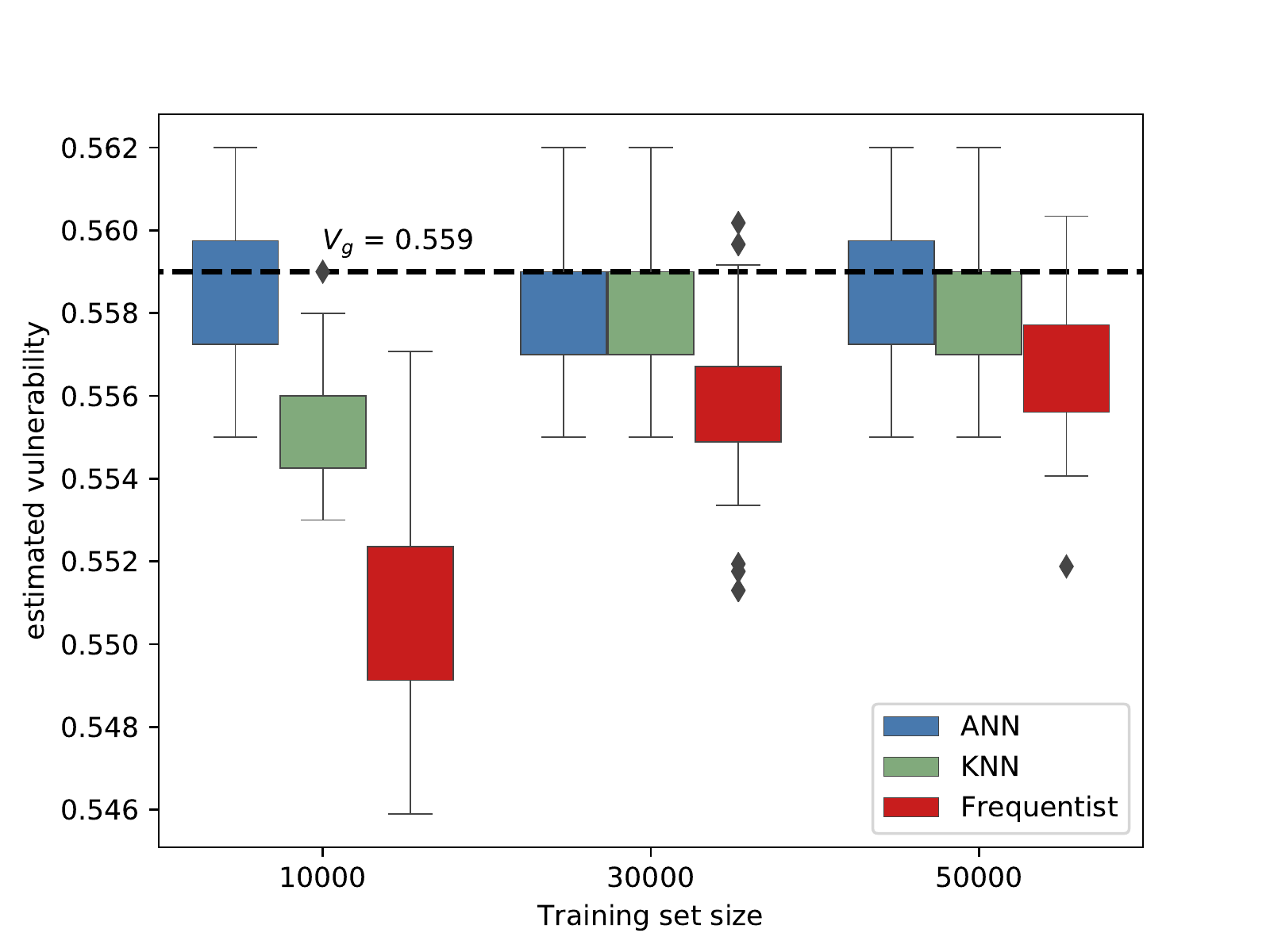}
    \caption{Vulnerability estimation for ANN and k-NN with data and channel pre-processing, and the frequentist approach.\\}
  \end{subfigure}
  \quad
  \begin{subfigure}[t]{.31\linewidth}
    \centering\includegraphics[width=0.8\linewidth]{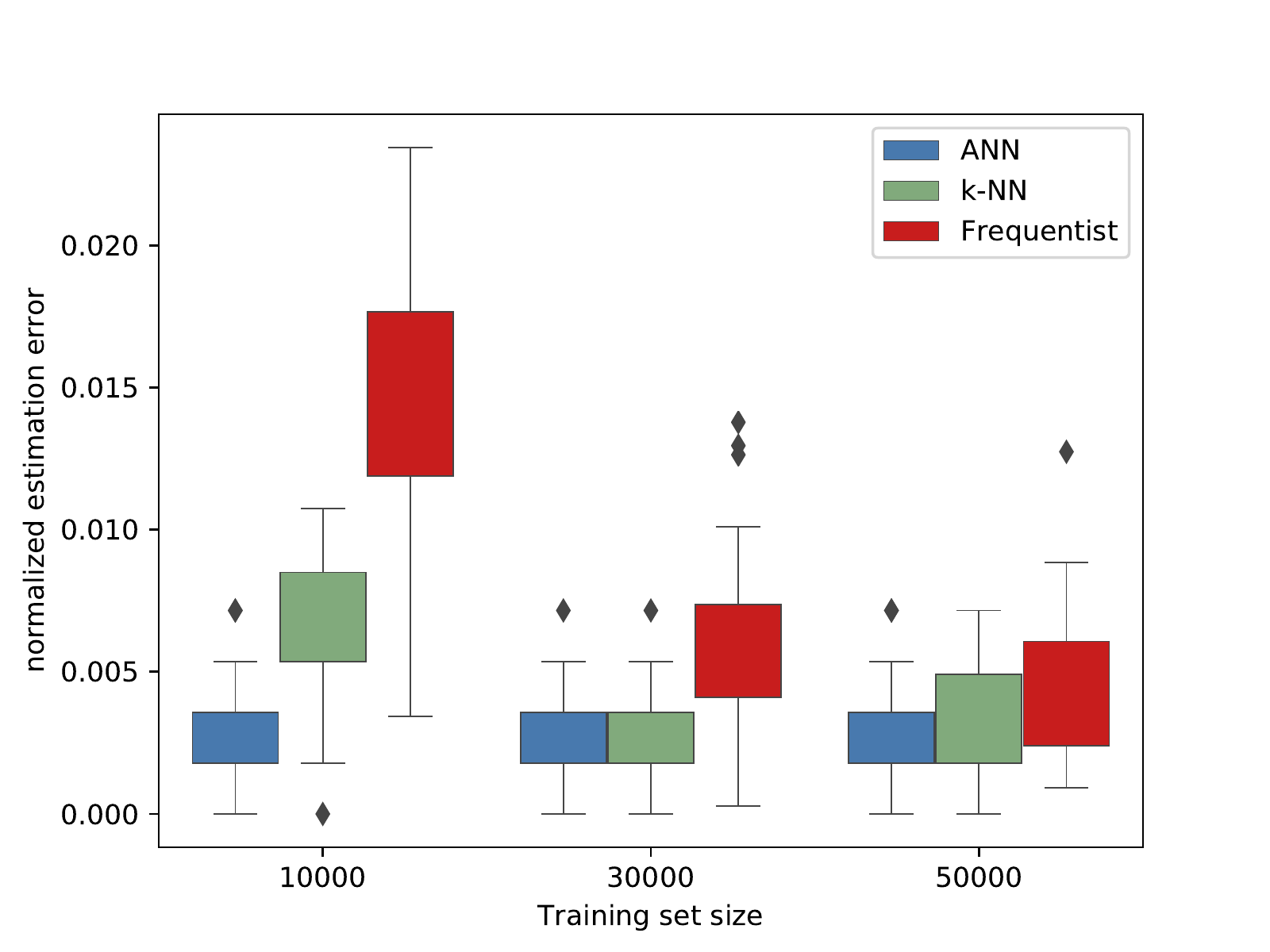}
    \caption{Normalized estimation error for ANN and k-NN with data and channel pre-processing, and the frequentist approach.\\}
  \end{subfigure}
 \end{minipage}
\caption{Supplementary plots for the password-checker experiment.}\label{suppl_side_channel}
\end{figure*}

\begin{figure*}[!htbp]
\centering
\begin{minipage}[b]{\textwidth}
  \centering
  \begin{subfigure}[t]{.4\linewidth}
    \centering\includegraphics[width=0.8\linewidth]{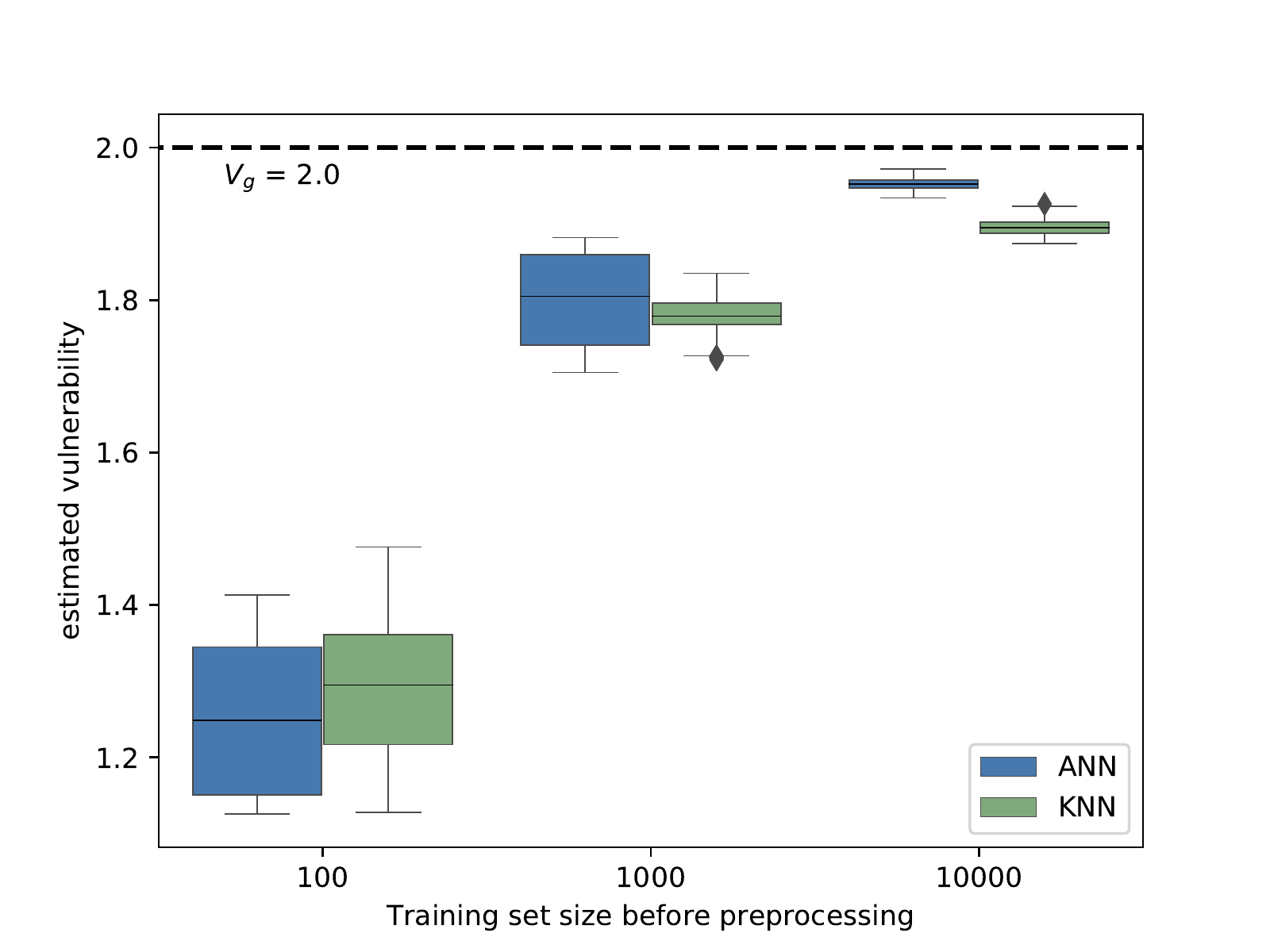}
    \caption{Vulnerability estimation for ANN and k-NN with data pre-processing.\\}
  \end{subfigure}
  \quad
  \begin{subfigure}[t]{.4\linewidth}
    \centering\includegraphics[width=0.8\linewidth]{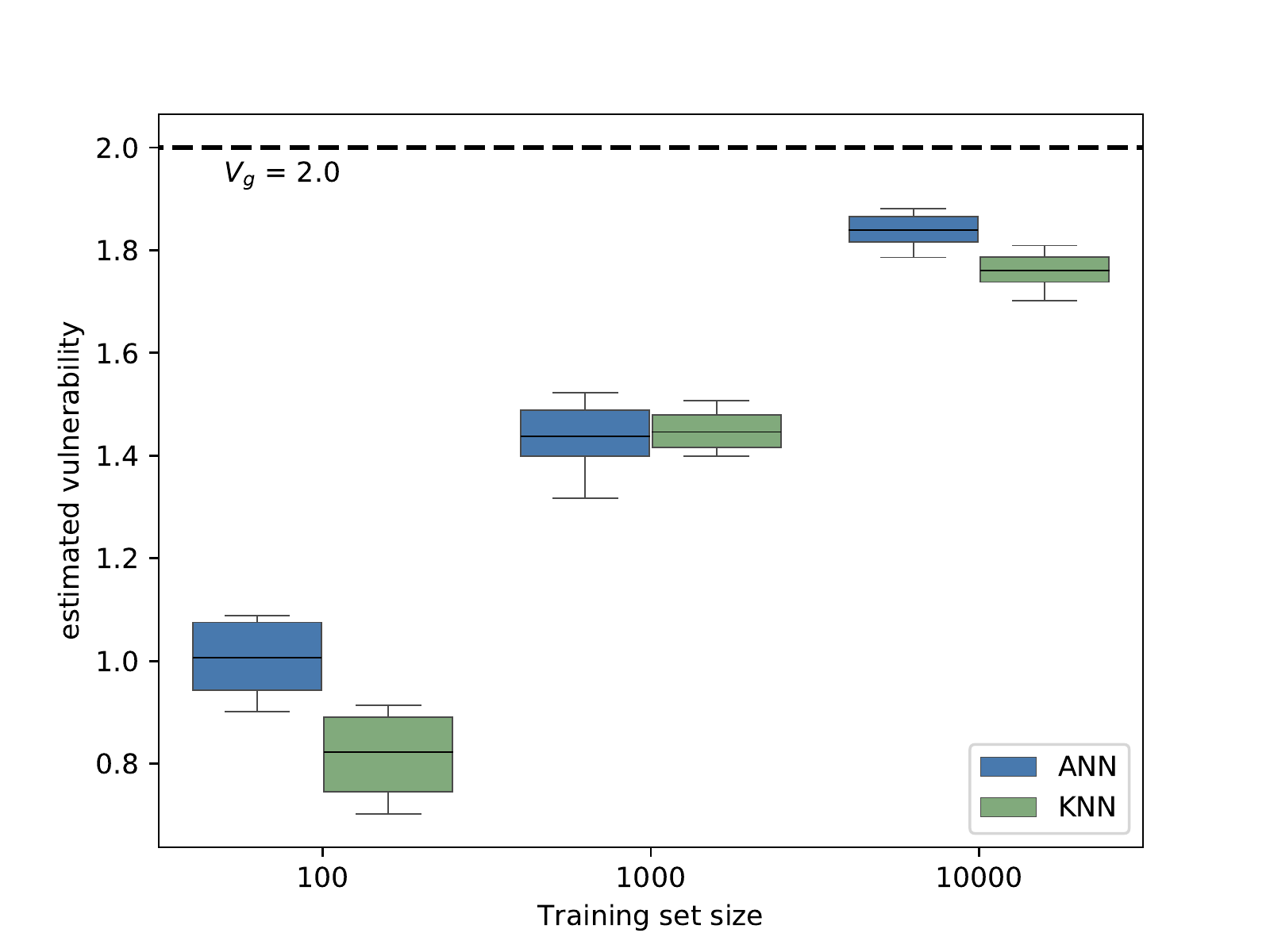}
    \caption{Vulnerability estimation for ANN and k-NN with channel pre-processing.\\}
  \end{subfigure}
\end{minipage}
\begin{minipage}[b]{\textwidth}
  \centering
  \begin{subfigure}[t]{.4\linewidth}
    \centering\includegraphics[width=0.8\linewidth]{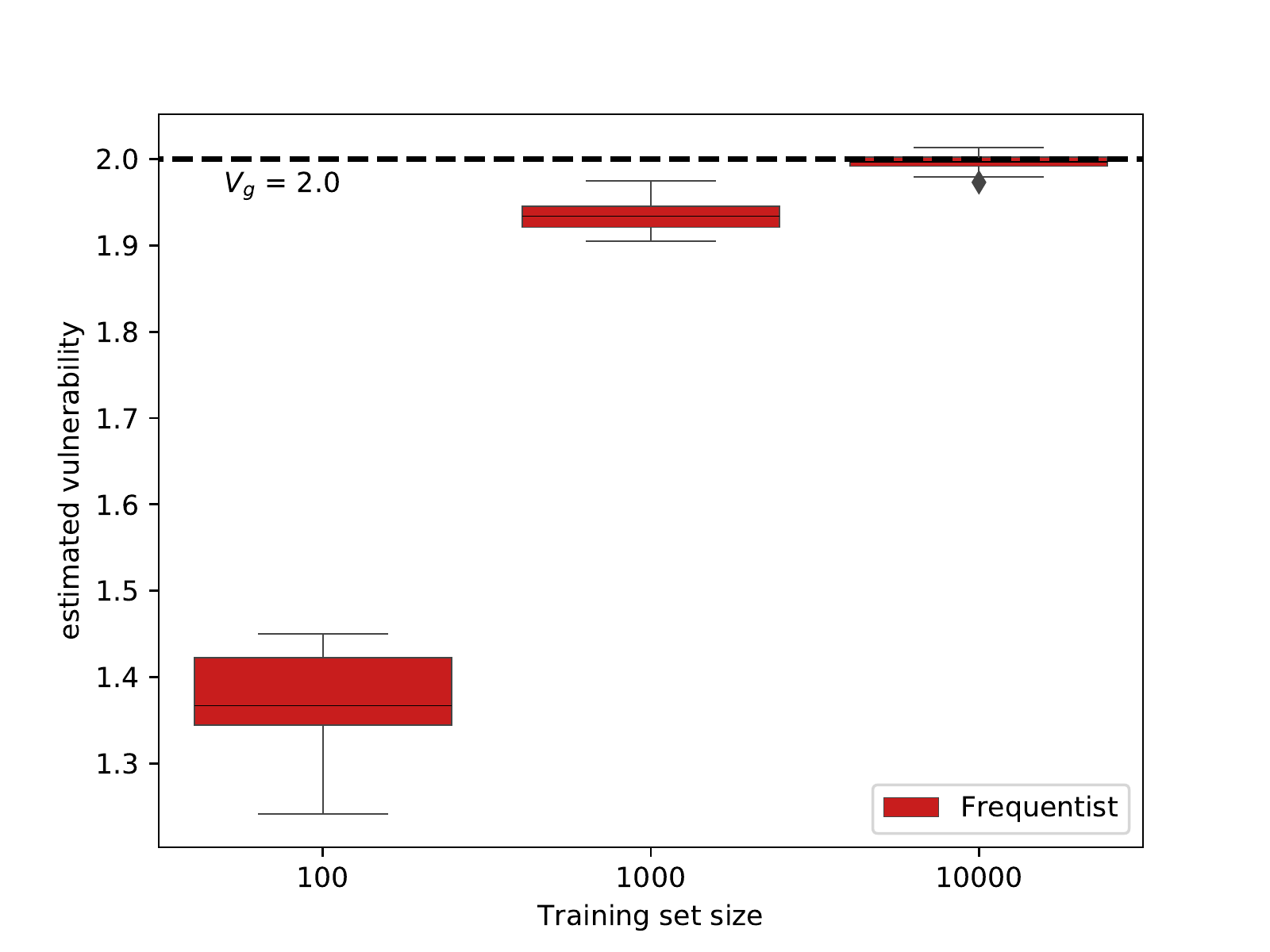}
    \caption{Vulnerability estimation for the frequentist approach.\\}
  \end{subfigure}
  \quad
  \begin{subfigure}[t]{.4\linewidth}
    \centering\includegraphics[width=0.8\linewidth]{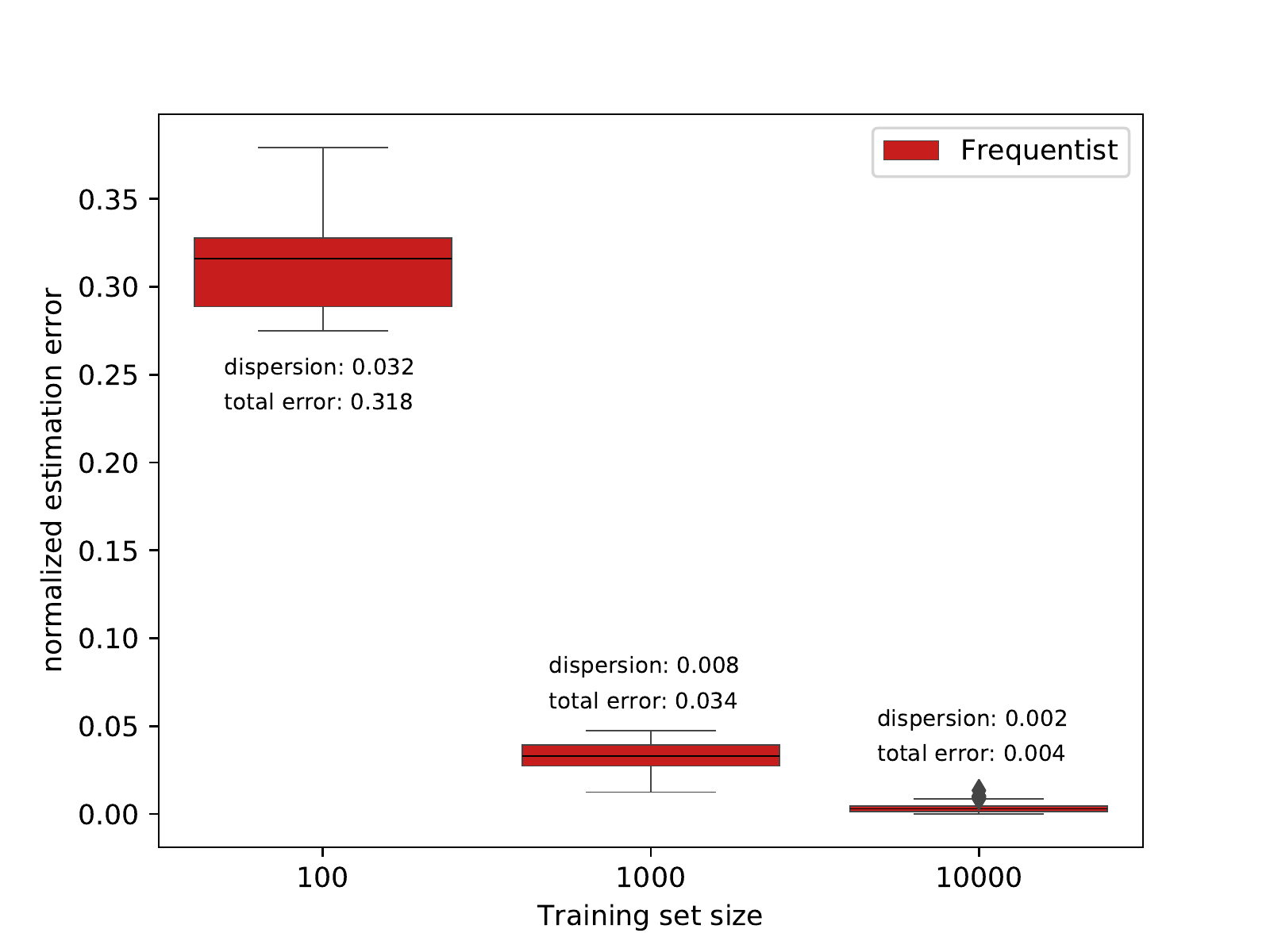}
    \caption{Normalized estimation error for the frequentist approach.\\}
  \end{subfigure}
\end{minipage}
\begin{minipage}[b]{\textwidth}
  \centering
  \begin{subfigure}[t]{.4\linewidth}
    \centering\includegraphics[width=0.8\linewidth]{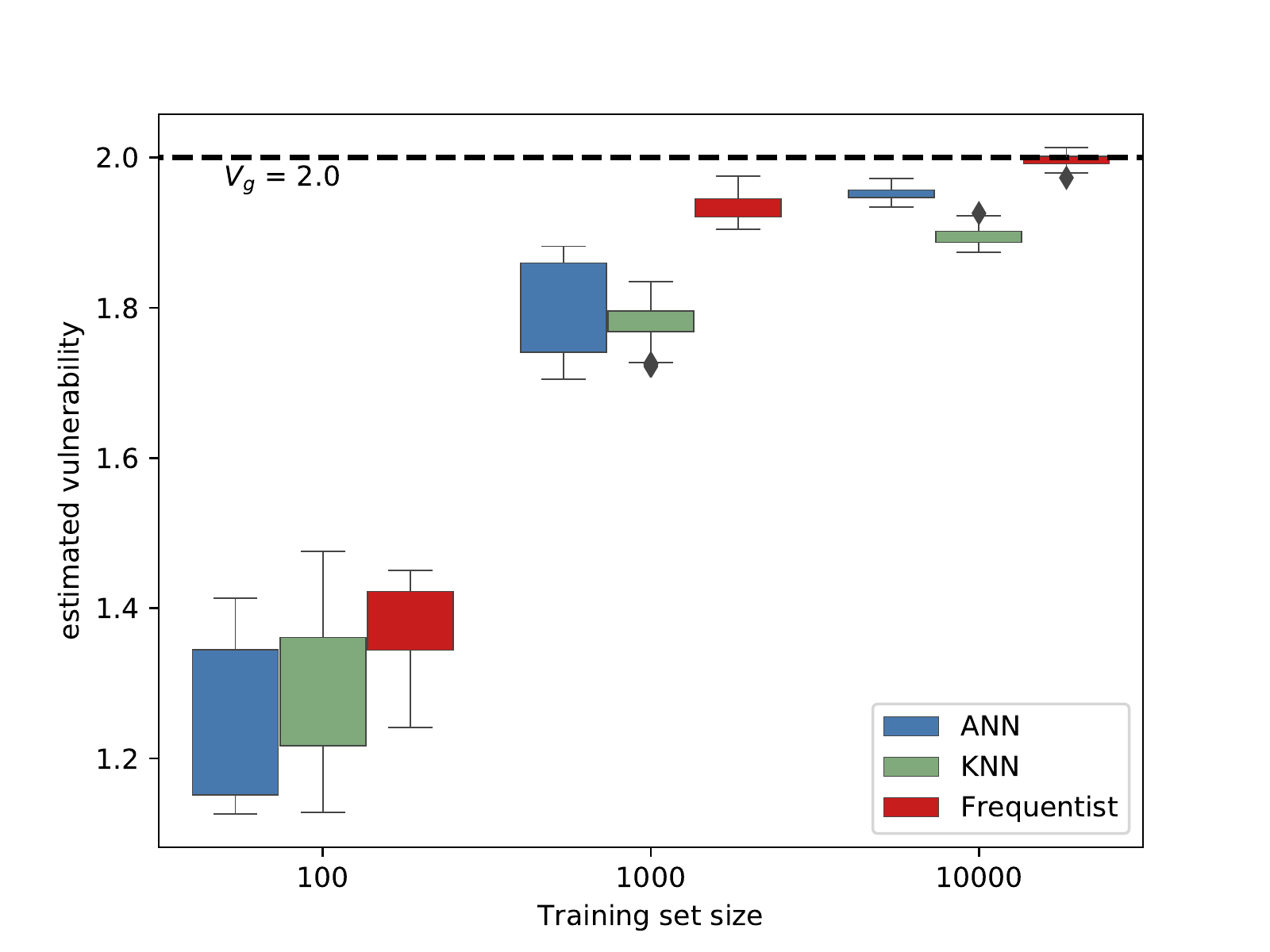}
    \caption{Vulnerability estimation for ANN and k-NN with data pre-processing, and the frequentist approach.\\}
  \end{subfigure}
  \quad
  \begin{subfigure}[t]{.4\linewidth}
    \centering\includegraphics[width=0.8\linewidth]{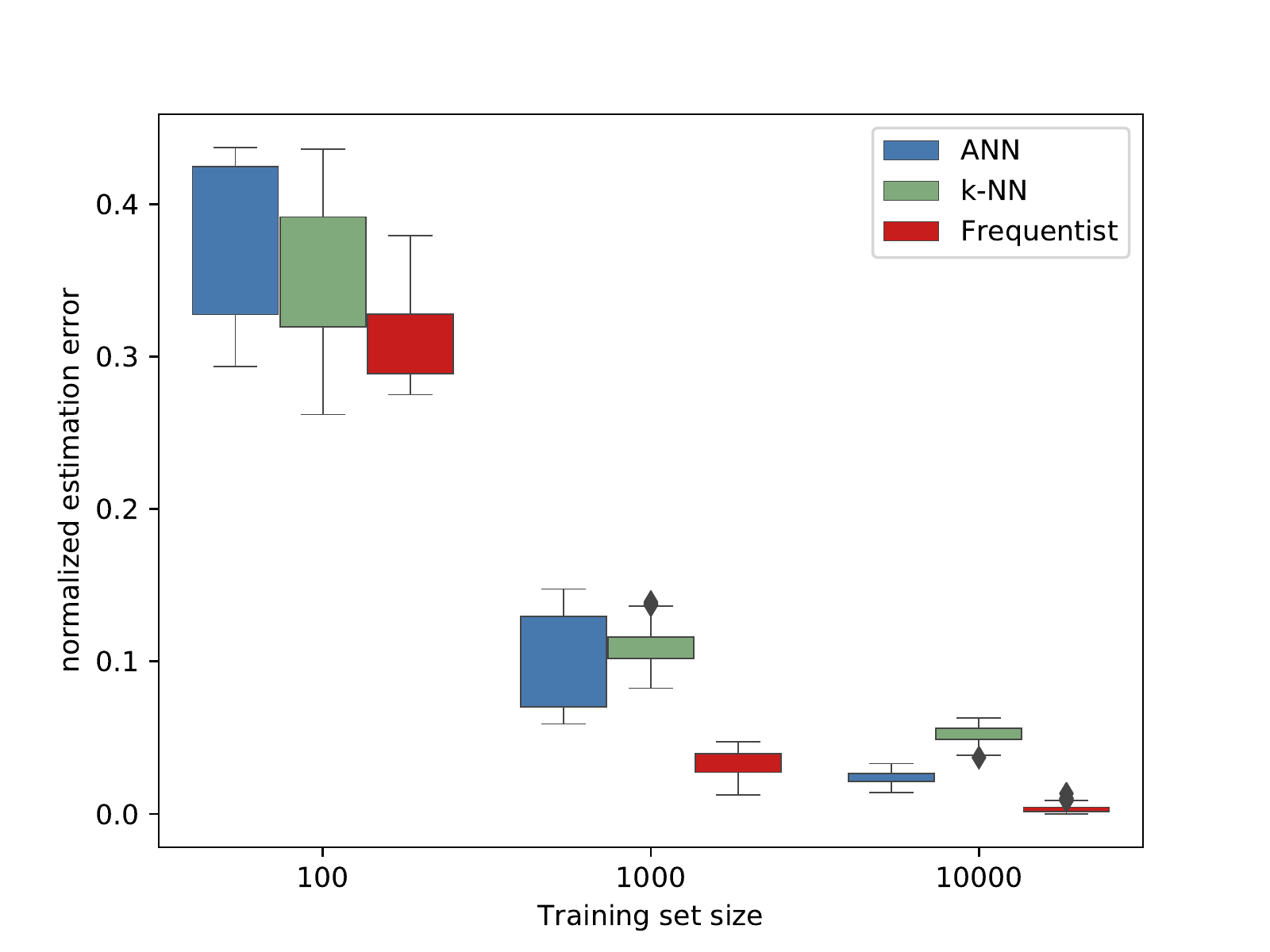}
    \caption{Normalized estimation error for ANN and k-NN with data pre-processing, and the frequentist approach.\\}
  \end{subfigure}
\end{minipage}
\begin{minipage}[b]{\textwidth}
  \centering
  \begin{subfigure}[t]{.4\linewidth}
    \centering\includegraphics[width=0.8\linewidth]{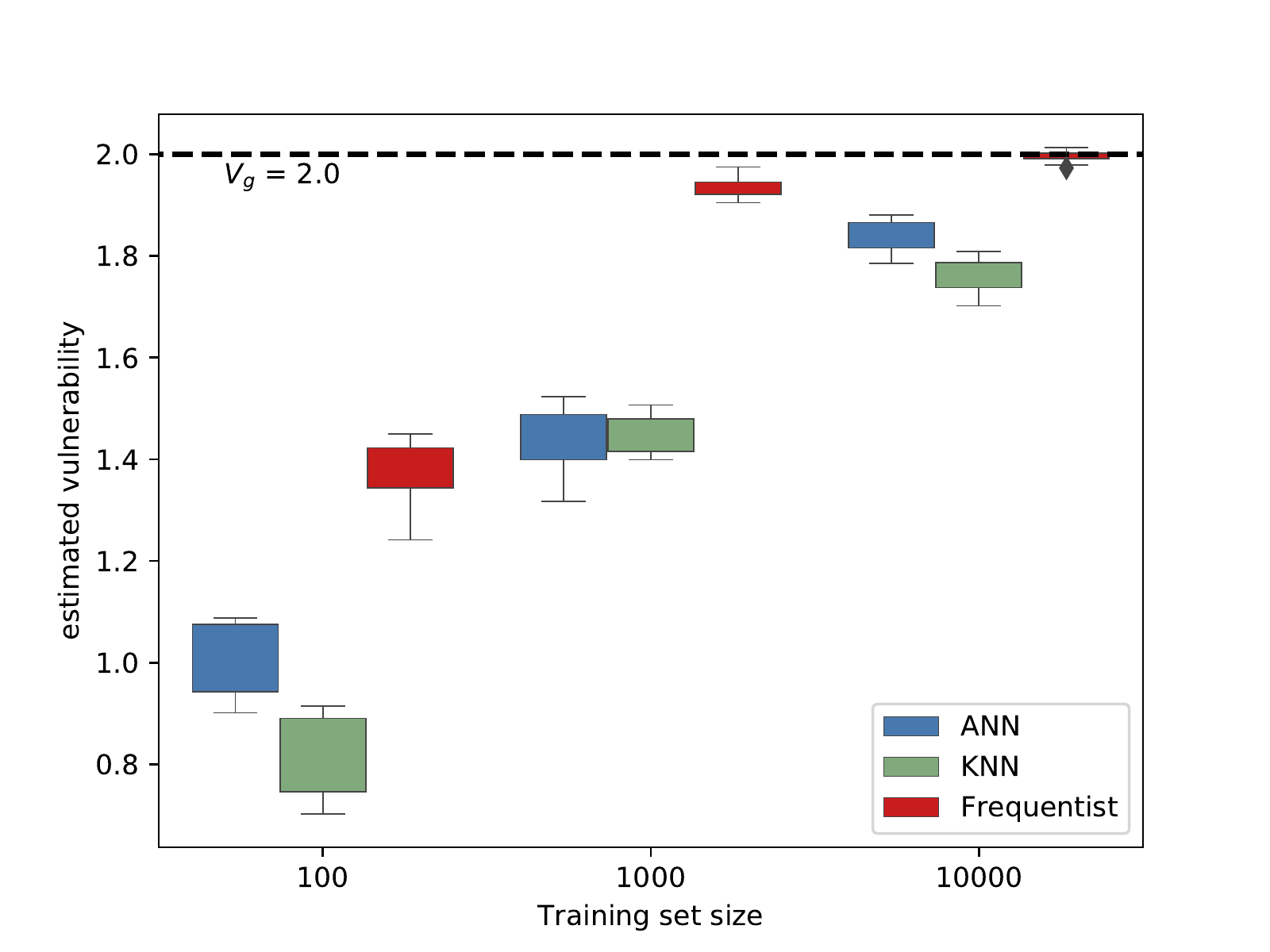}
    \caption{Vulnerability estimation for ANN and k-NN with channel pre-processing, and the frequentist approach.\\}
  \end{subfigure}
   \quad
  \begin{subfigure}[t]{.4\linewidth}
    \centering\includegraphics[width=0.8\linewidth]{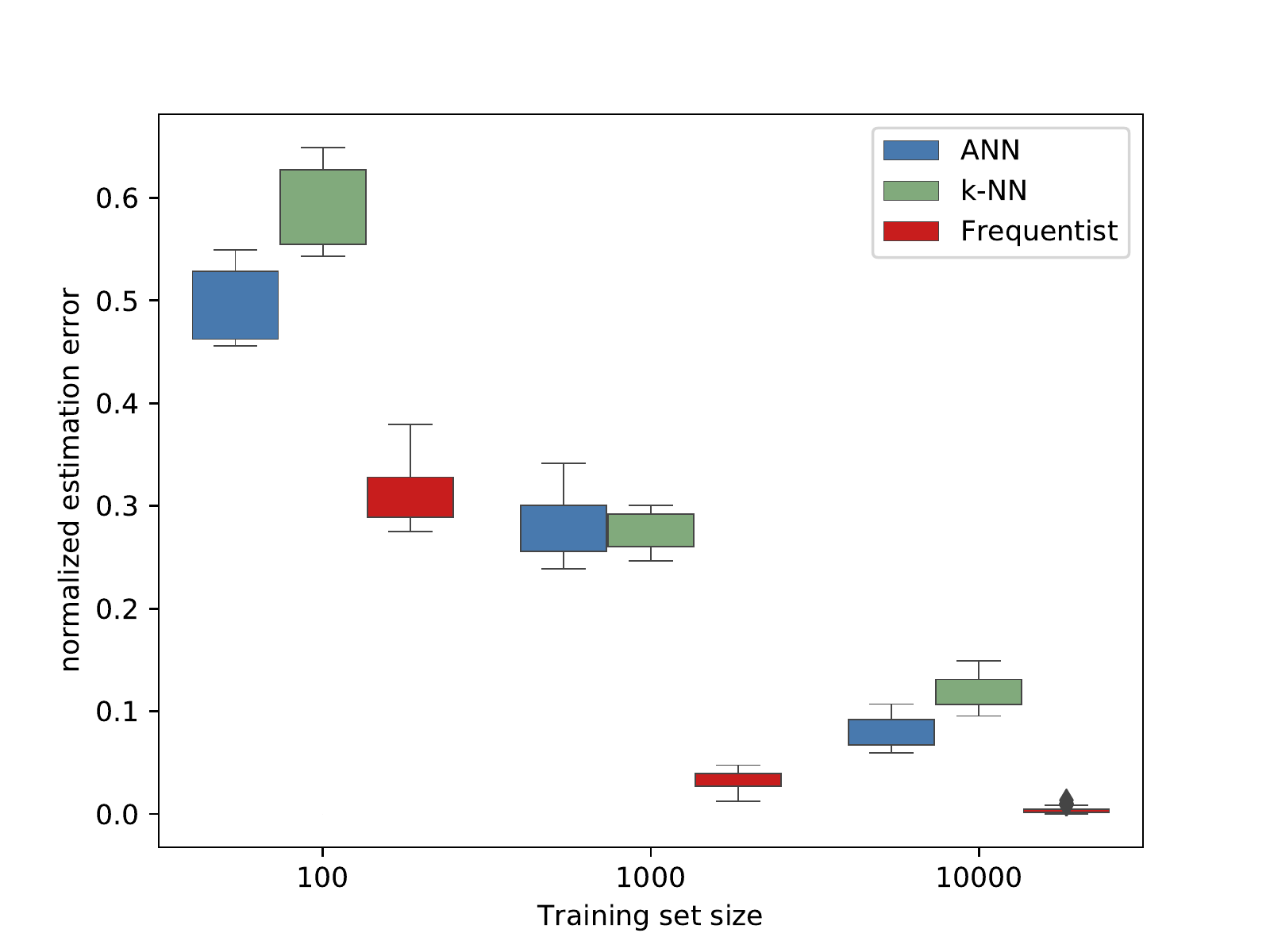}
    \caption{Normalized estimation error for ANN and k-NN with channel pre-processing, and the frequentist approach.\\}
  \end{subfigure}
\end{minipage}
\caption{Supplementary plots for the location-privacy experiment. }\label{suppl_loc_pr}
\end{figure*}

\begin{figure*}[!htbp]
\centering
\begin{minipage}[b]{\textwidth}
  \centering
  \begin{subfigure}[t]{.4\linewidth}
    \centering\includegraphics[width=0.8\linewidth]{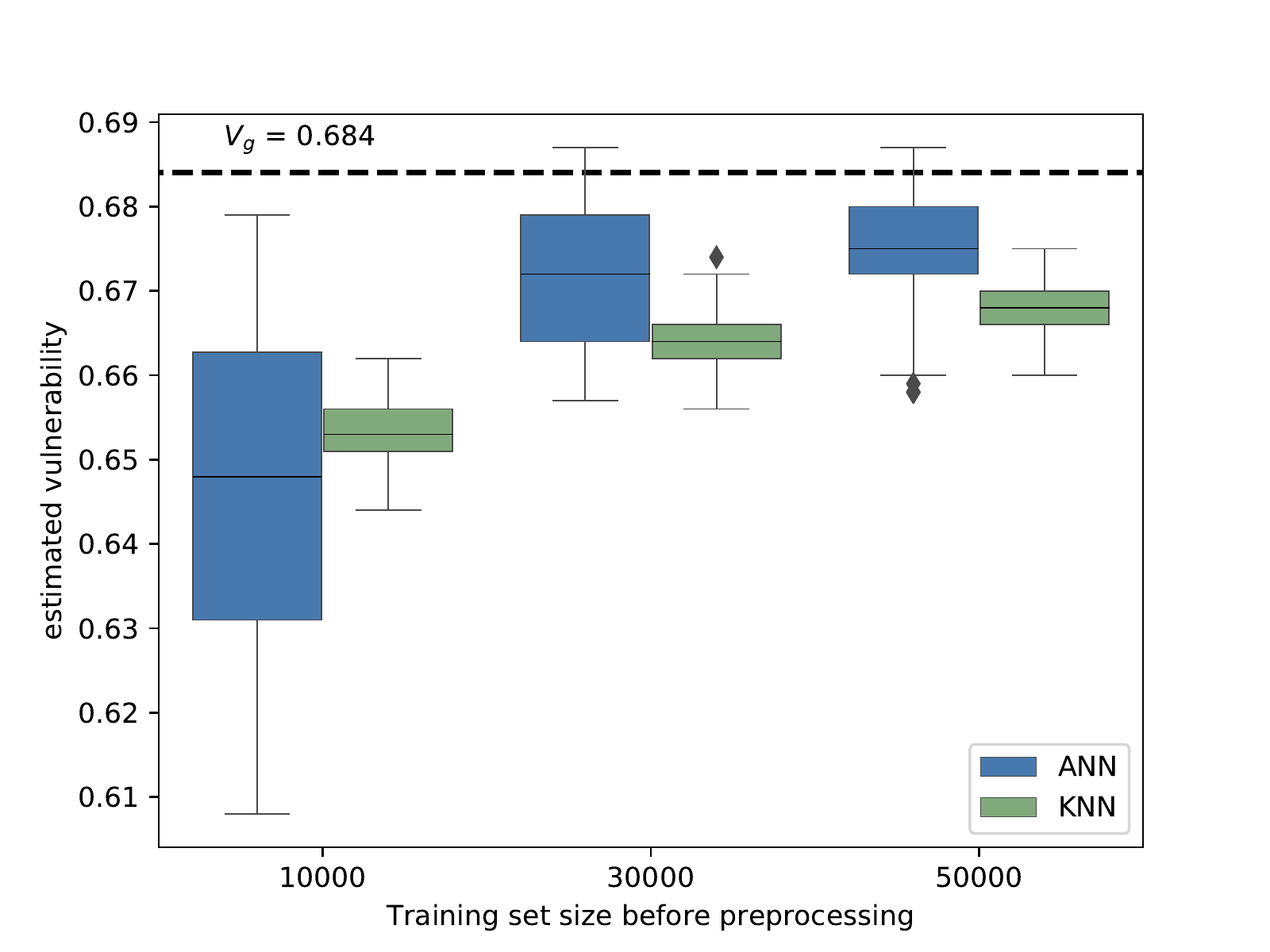}
     \caption{Vulnerability estimation for ANN and k-NN with data pre-processing.\\}
       \end{subfigure}
  \quad
  \begin{subfigure}[t]{.4\linewidth}
    \centering\includegraphics[width=0.8\linewidth]{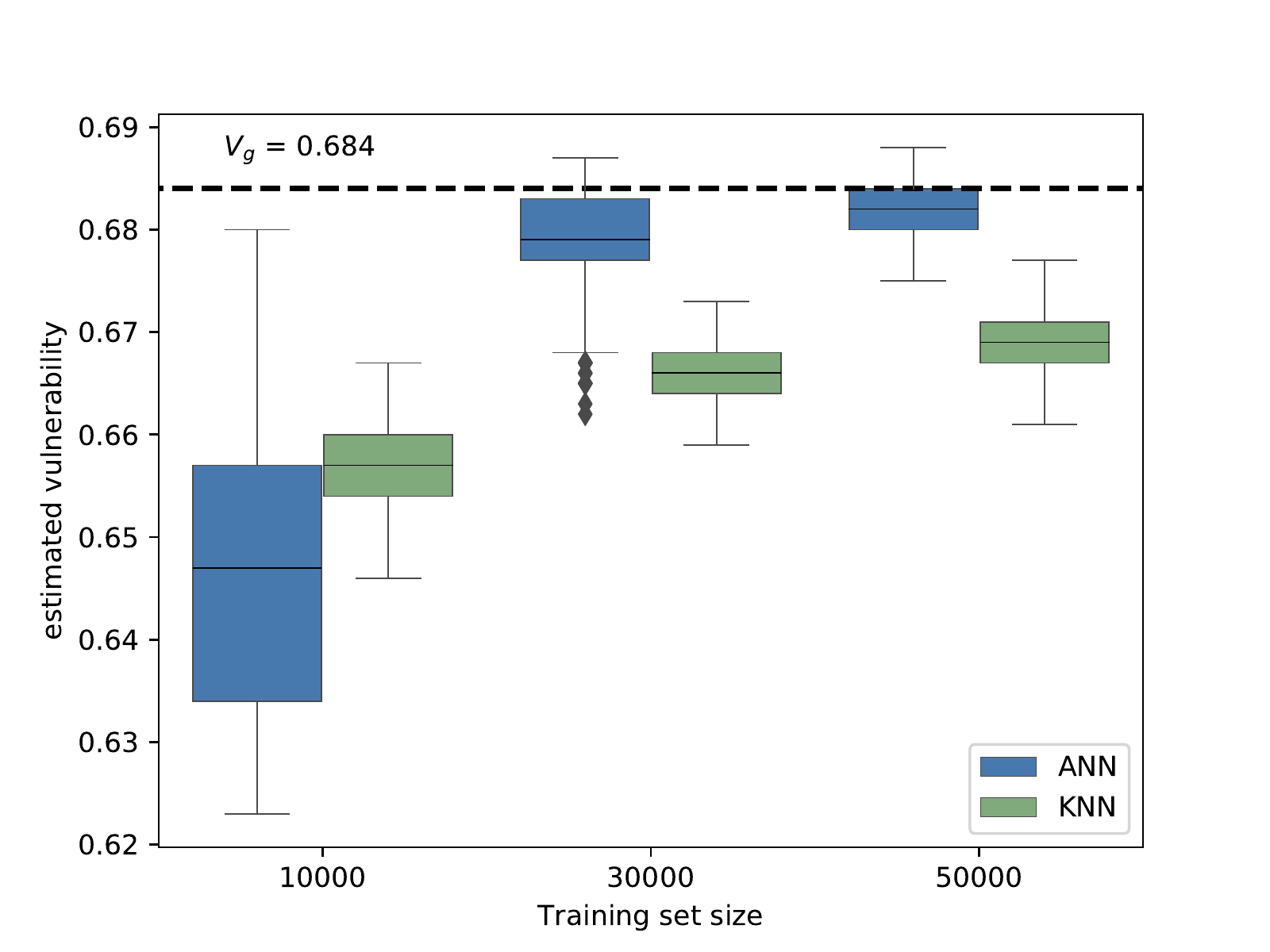}
    \caption{Vulnerability estimation for ANN and k-NN with channel pre-processing.\\}
  \end{subfigure}
\end{minipage}
\begin{minipage}[b]{\textwidth}
  \centering
  \begin{subfigure}[t]{.4\linewidth}
    \centering\includegraphics[width=0.8\linewidth]{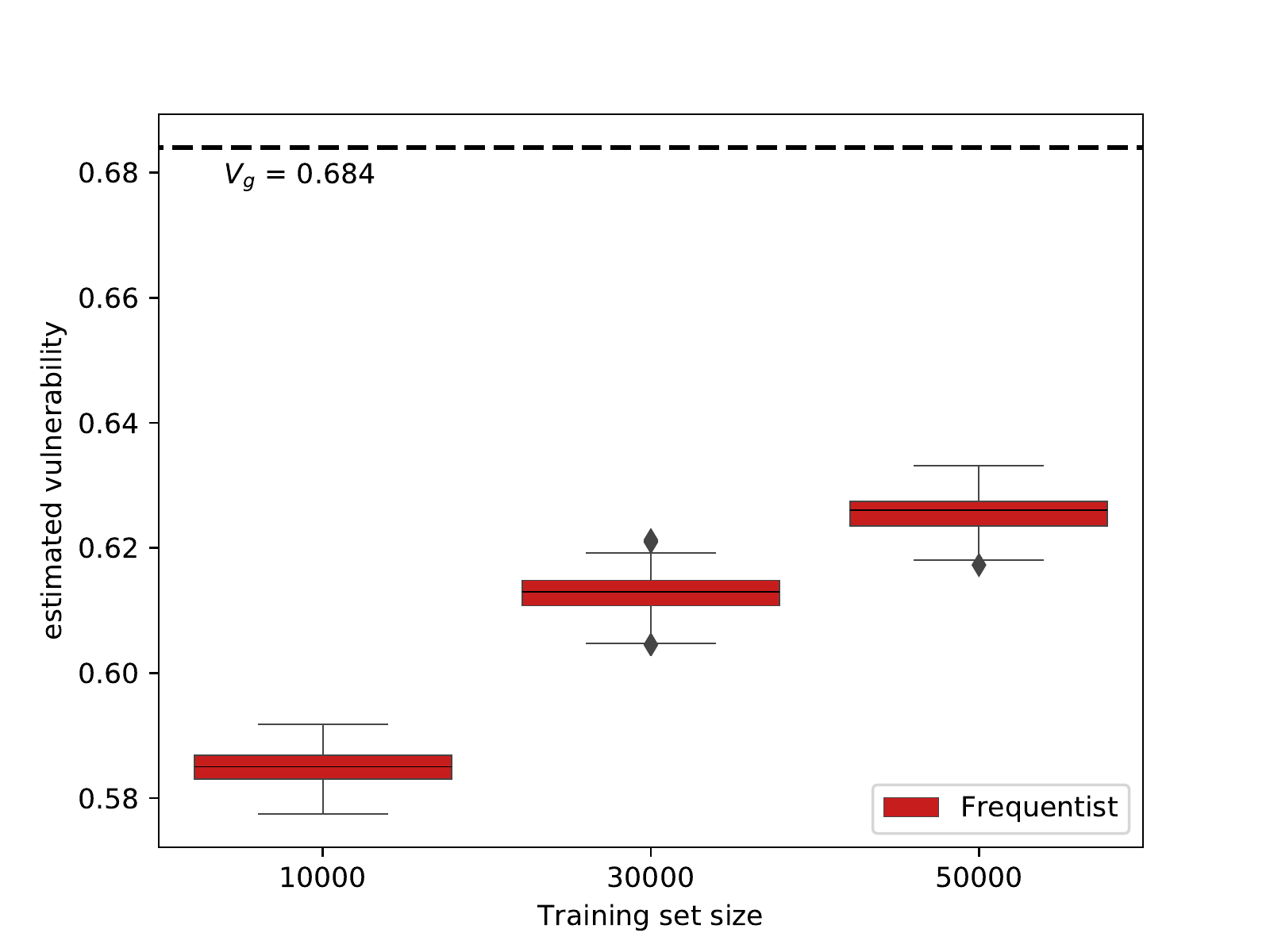}
    \caption{Vulnerability estimation for the frequentist approach.\\}
  \end{subfigure}
  \quad
  \begin{subfigure}[t]{.4\linewidth}
    \centering\includegraphics[width=0.8\linewidth]{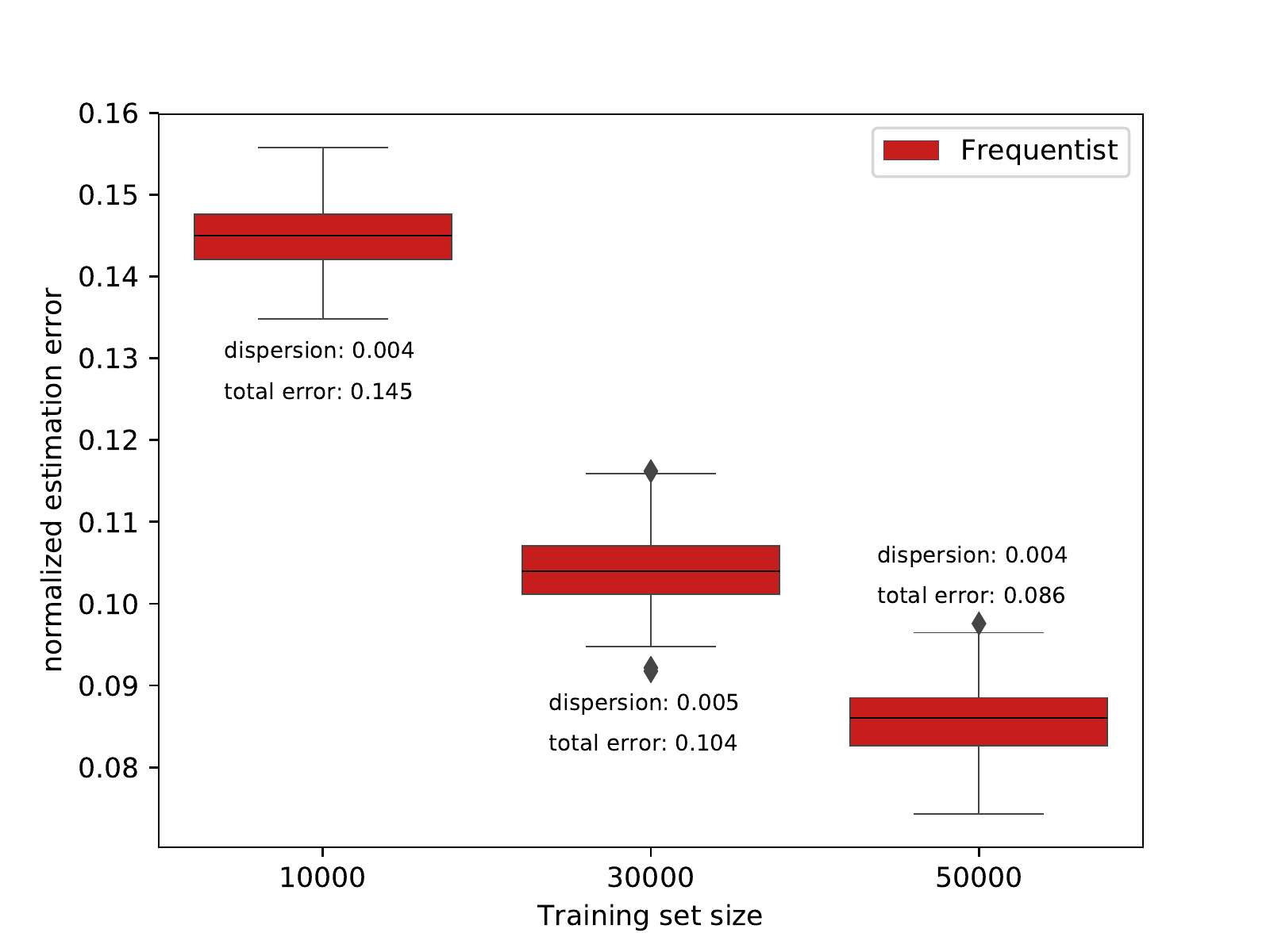}
    \caption{Normalized estimation error for the frequentist approach.\\}
  \end{subfigure}
\end{minipage}
\begin{minipage}[b]{\textwidth}
  \centering
  \begin{subfigure}[t]{.4\linewidth}
    \centering\includegraphics[width=0.8\linewidth]{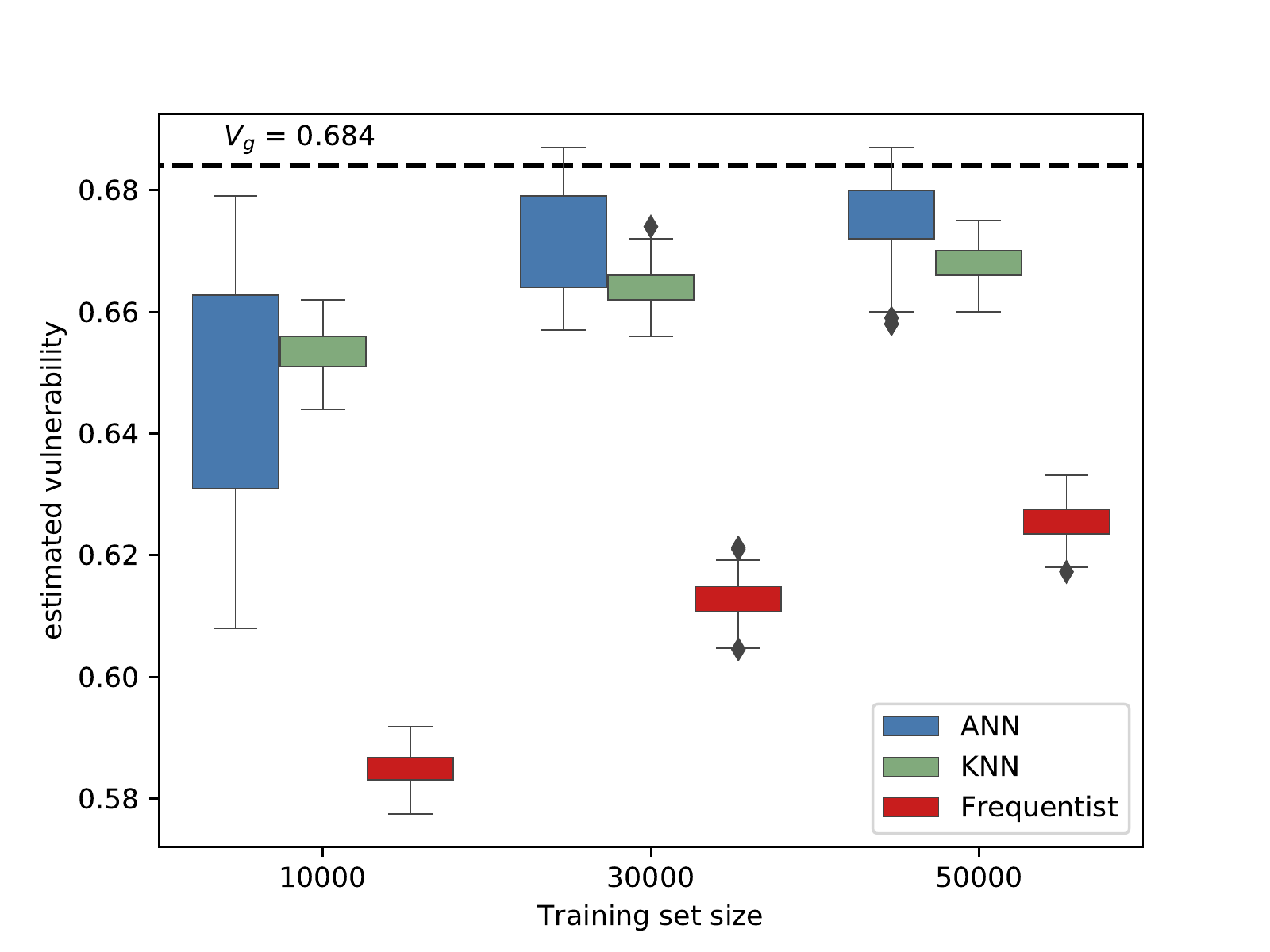}
    \caption{Vulnerability estimation for ANN and k-NN with data pre-processing, and the frequentist approach.\\}
  \end{subfigure}
  \quad
  \begin{subfigure}[t]{.4\linewidth}
    \centering\includegraphics[width=0.8\linewidth]{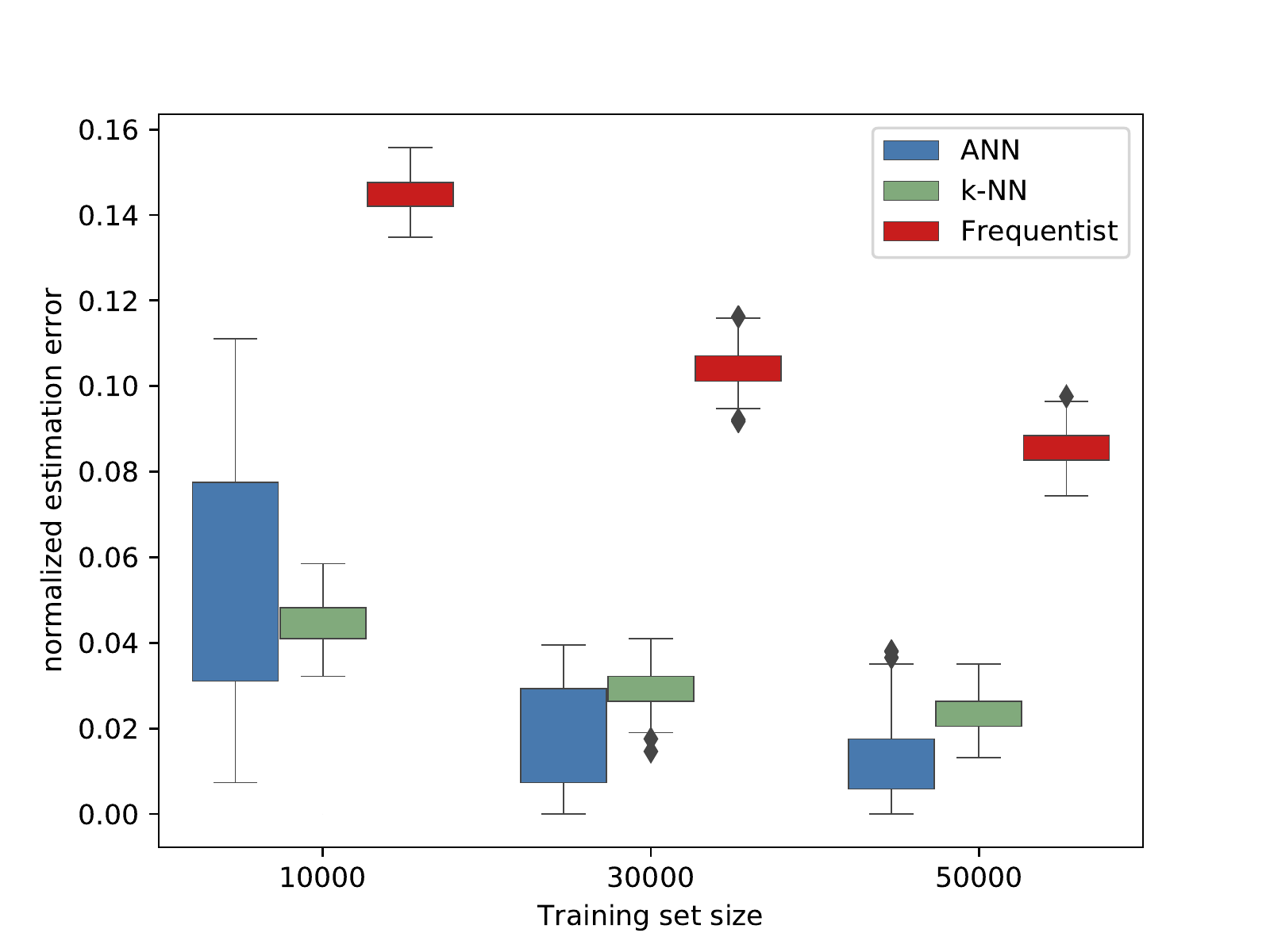}
    \caption{Normalized estimation error for ANN and k-NN with data pre-processing, and the frequentist approach.\\}
  \end{subfigure}
\end{minipage}
\begin{minipage}[b]{\textwidth}
  \centering
  \begin{subfigure}[t]{.4\linewidth}
    \centering\includegraphics[width=0.8\linewidth]{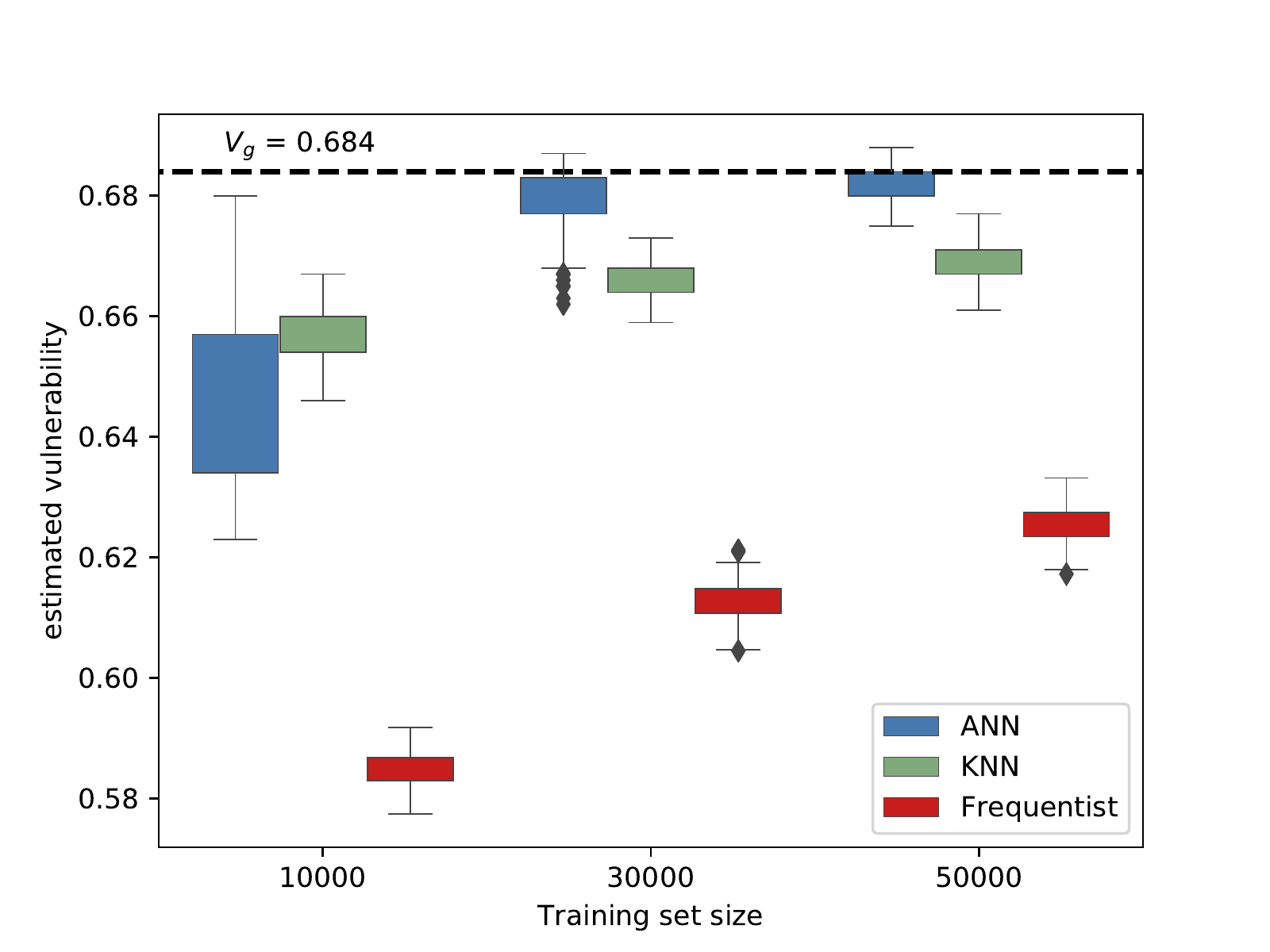}
    \caption{Vulnerability estimation for ANN and k-NN with channel pre-processing, and the frequentist approach.\\}
  \end{subfigure}
  \quad
  \begin{subfigure}[t]{.4\linewidth}
    \centering\includegraphics[width=0.8\linewidth]{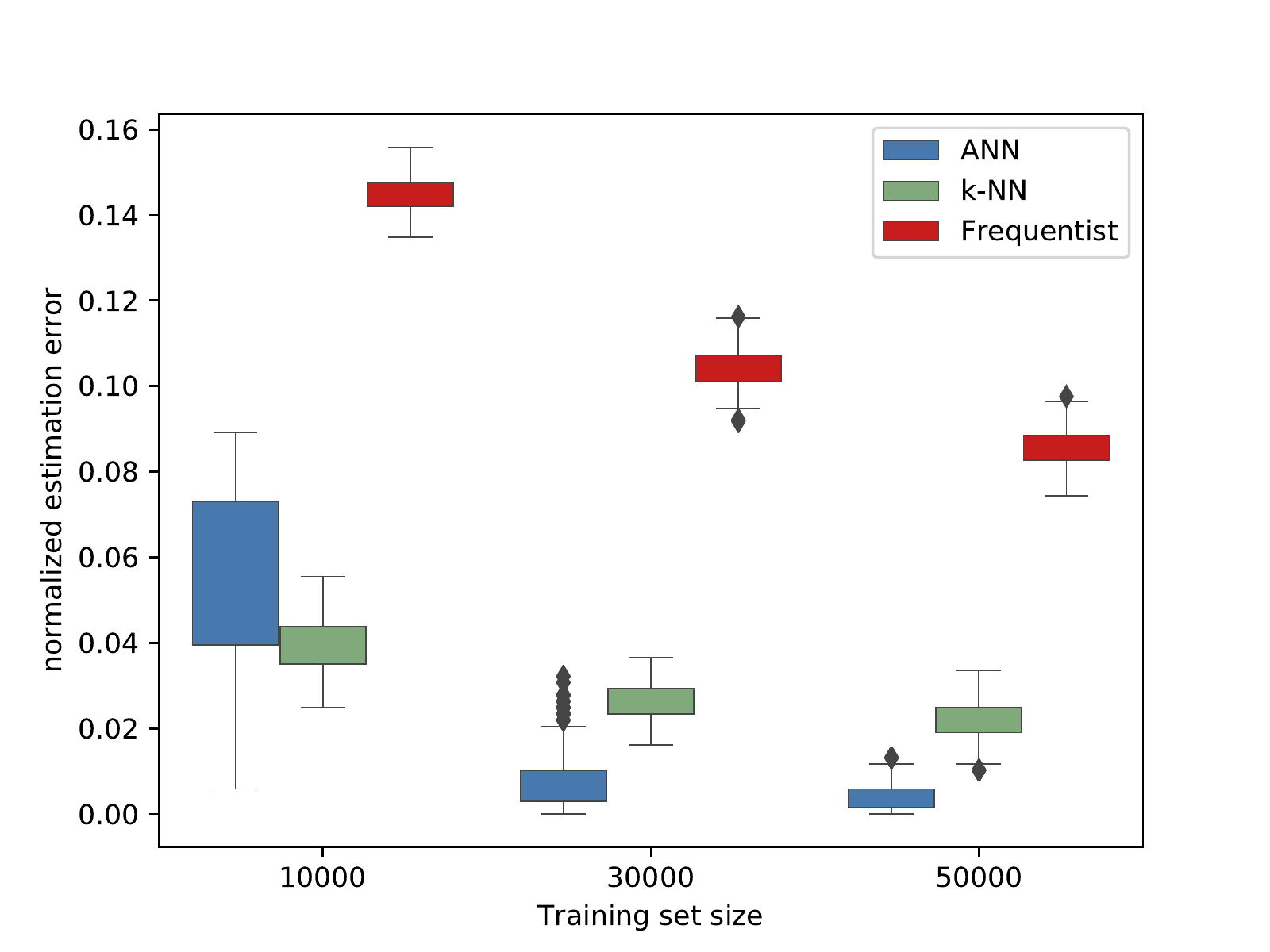}
    \caption{Normalized estimation error for ANN and k-NN with channel pre-processing, and the frequentist approach.\\}
  \end{subfigure}
\end{minipage}
\caption{Supplementary plots for the differential-privacy  experiment.}\label{suppl_dp}
\end{figure*}

\autoref{suppl_mult_guess} is related to the multiple guesses scenario,~\autoref{suppl_loc_pr}  is related to the location privacy one,~\autoref{suppl_dp} is related to the differential privacy experiment, and~\autoref{suppl_side_channel} to the password checker one.   

\end{document}